\definecolor{links}{RGB}{11, 85, 255}
\definecolor{cites}{RGB}{0, 200, 0}
\definecolor{urls}{RGB}{255, 116, 0}
\pgfplotsset{compat=1.14}
\definecolor {processblue}{cmyk}{0.96,0,0,0}
\newcommand{\E}{\textbf{E}}
\newcommand{\ALG}{\textsf{ALG}}
\newcommand{\OPT}{\textsf{OPT}}
\newcommand{\RoE}{\textsf{RoE}}
\newcommand{\SRoE}{\textsf{S-RoE}}
\newcommand{\AALG}{\ALG_{\RoE}}
\newcommand{\SALG}{\textsf{S-}\ALG_{\RoE}}
\newcommand{\ALGEOR}{\ALG_{\EoR}}
\newcommand{\PbM}{\textsf{PbM}}
\newcommand{\EoR}{\textsf{EoR}}
\newcommand{\SEoR}{\textsf{S-EoR}}
\newcommand{\EoIR}{\textsf{EoIR}}
\newcommand{\bone}{\boldsymbol{1}}
\newcommand{\bx}{\boldsymbol{x}}
\newcommand{\bv}{\boldsymbol{v}}
\newcommand{\bu}{\boldsymbol{u}}
\newcommand{\bw}{\boldsymbol{w}}
\newcommand{\cE}{\mathcal{E}}
\newcommand{\cF}{\mathcal{F}}
\newcommand{\cX}{\mathcal{X}}
\newcommand{\cU}{\mathcal{U}}
\newcommand{\given}{\ \bigg| \ }
\newtheorem{theorem}{Theorem}
\newtheorem*{theorem*}{Theorem}
\newtheorem{lemma}{Lemma}
\newtheorem{proposition}{Proposition}
\newtheorem{corollary}{Corollary}
\theoremstyle{definition}
\newtheorem{definition}{Definition}
\newtheorem{claim}{Claim}
\newtheorem{remark}{Remark}
\newtheorem{example}{Example}
\newtheorem{observation}{Observation}
\begin{document}
\title{Prophet Inequalities via the Expected Competitive Ratio\thanks{Partially supported by the ERC Advanced Grant 788893 AMDROMA ``Algorithmic and Mechanism Design Research in Online Markets'' and MIUR PRIN project ALGADIMAR ``Algorithms, Games, and Digital Markets''. The last author further acknowledges the support of the Alexander von Humboldt Foundation with funds from the German Federal Ministry of
Education and Research (BMBF), the Deutsche Forschungsgemeinschaft (DFG, German Research Foundation) - Projektnummer 277991500, the COST Action CA16228 “European Network for Game Theory” (GAMENET), and ANID, Chile, grant ACT210005. Most of this work was done while the author was at TU Munich and Universidad de Chile. The views expressed in this paper are the author's and do not necessarily reflect those of the European Central Bank or the Eurosystem.
Tomer was also supported by the National Science Foundation under Grant No. DMS-1928930 and by the Alfred P. Sloan Foundation under grant G-2021-16778, while the author was in residence at the Simons Laufer Mathematical Sciences Institute (formerly MSRI) in Berkeley, California, during the Fall 2023 semester.}}

\author{Tomer Ezra\inst{1}\orcidID{0000-0003-0626-4851} \and
Stefano Leonardi\inst{2}\orcidID{0000-0002-9809-7191} \and
Rebecca Reiffenh\"auser\inst{3}\orcidID{0000-0002-0959-2589} \and Matteo Russo\inst{2}\orcidID{0000-0003-2047-4089} \and Alexandros Tsigonias-Dimitriadis \inst{4}\orcidID{0000-0002-7558-2215}}
\authorrunning{T. Ezra et al.}
%
\institute{Simons Laufer Mathematical Sciences Institute, Berkeley CA 94720, USA  \\
\email{tomer.ezra@gmail.com} \and
Department of Computer, Control and Management Engineering Antonio Ruberti, Sapienza University Rome, Via Ariosto 25, 00185 Rome, Italy\\
\email{\{leonardi,mrusso\}@diag.uniroma1.it} \and
University of Amsterdam, Amsterdam, the Netherlands\\
\email{r.e.m.reiffenhauser@uva.nl} \and 
European Central Bank, Sonnemannstraße 20, 60314 Frankfurt am Main, Germany\\
\email{alexandrostsigdim@gmail.com}}

\maketitle              
\begin{abstract}
We consider prophet inequalities under general downward-closed constraints. In a prophet inequality problem, a decision-maker sees a series of online elements with values, and needs to decide immediately and irrevocably whether or not to select each element upon its arrival, subject to an underlying feasibility constraint.  
Traditionally, the decision-maker's expected performance has been compared to the expected performance of the \emph{prophet}, i.e., the expected offline optimum. We refer to this measure as the \textit{Ratio of Expectations} (or, in short, \RoE). However, a major limitation of the $\RoE$ measure is that it only gives a guarantee against what the optimum would be on average, while, in theory, algorithms still might perform poorly compared to the realized ex-post optimal value. Hence, we study alternative performance measures. In particular, we suggest the \textit{Expected Ratio} (or, in short, $\EoR$), which is the expectation of the ratio between the value of the algorithm and the value of the prophet. This measure yields desirable guarantees, e.g., a constant $\EoR$ implies achieving a constant fraction of the ex-post offline optimum with constant probability. Moreover, in the single-choice setting, we show that the $\EoR$ is equivalent (in the worst case) to the probability of selecting the maximum, a well-studied measure in the literature. This is no longer the case for combinatorial constraints (beyond single-choice), which is the main focus of this paper. Our main goal is to understand the relation between $\RoE$ and $\EoR$ in combinatorial settings. Specifically, we establish two reductions: for every feasibility constraint, the $\RoE$ and the $\EoR$ are at most a constant factor apart. Additionally, we show that the $\EoR$ is a stronger benchmark than the $\RoE$ in that for every instance (feasibility constraint and product distribution) the $\RoE$ is at least a constant fraction of the $\EoR$, but not vice versa. Both these reductions imply a wealth of $\EoR$ results in multiple settings where $\RoE$ results are known. 

\keywords{Prophet Inequalities  \and Online Decision-Making \and Downward-closed Feasibility Constraints.}
\end{abstract}

\section{Introduction}\label{sec:intro}

Prophet Inequalities are one of optimal stopping theory's most prominent problem classes. In the classic prophet inequality, a decision-maker must select an element $e$ from an online sequence of elements $E$ immediately and irrevocably. The sequence is revealed one by one in an online fashion, and the decision-maker wants to maximize the weight of the chosen element, where each element's weight is drawn from some distribution $D_e$. The decision-maker knows the distributions and is compared to a \emph{prophet}, who knows all the realizations of the weights in advance. A classic result of \citet{krengel,krengel2}, and \citet{samuel-cahn} asserts that the decision-maker can extract at least half of the prophet's expected reward and that this result is tight.

A vast body of research has studied the classic prophet inequality and its variants, where the objective function is to maximize the ratio between what the algorithm gets in expectation and the expected weight of the ex-post optimum. We use the shorthand $\RoE$ to signify this \emph{ratio of expectations}. However, this benchmark has shortcomings for many applications of prophet inequalities. Oftentimes, the decision-maker is not only concerned about the expected value, but she also wants to have some guarantees with respect to the ex-post outcome. The concept of risk aversion has been defined in various ways in the literature: a common underlying principle is that the involved parties often want to avoid the possibility of \emph{extremely bad} outcomes.

As our first example shows, such risk-averse decision-makers might prefer to select a box with a deterministic weight of $1$, even though the second box's expected weight is slightly larger. This is because the weight of the second box has a high probability of having a value of $0$, and it is much riskier to choose this option for just a marginal improvement in the expected utility.
\begin{example}\label{ex:ex1}
Consider a setting with two boxes. The first box's weight $w_1$  is deterministically $1$, and the second box's  weight $w_2$ is $0$ with probability $1-\varepsilon$ and $\frac{1+2\varepsilon}{\varepsilon}$ with probability $\varepsilon$, for $\varepsilon \in (0, 1]$.
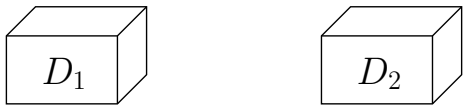
\begin{figure}[H]
    \centering
    \scalebox{0.7}{\tikzset{every picture/.style={line width=0.75pt}} 
\begin{tikzpicture}[x=0.75pt,y=0.75pt,yscale=-1,xscale=1]
    \draw   (164,111) -- (185,90) -- (265,90) -- (265,139) -- (244,160) -- (164,160) -- cycle ; \draw   (265,90) -- (244,111) -- (164,111) ; \draw   (244,111) -- (244,160) ;
    \draw   (391,111) -- (412,90) -- (492,90) -- (492,139) -- (471,160) -- (391,160) -- cycle ; \draw   (492,90) -- (471,111) -- (391,111) ; \draw   (471,111) -- (471,160) ;
    \draw (188,125) node [anchor=north west][inner sep=0.75pt]  [font=\huge]  {$D_1$};
    \draw (415,125) node [anchor=north west][inner sep=0.75pt]  [font=\huge]  {$D_2$};
\end{tikzpicture}}
    \caption{Two boxes: $w_1 = 1$, $w_2 \sim D_2 = \begin{cases} 0, \text{ w.p. } 1 - \varepsilon \\ \frac{1+2\varepsilon}{\varepsilon}, \text{ w.p. } \varepsilon \end{cases}$.}
    \label{fig:ex1}
\end{figure}
The decision-maker's expected utility would be $1$ if she selects the first box, and is $1+2\varepsilon$ if she selects the second box. While picking the second box  maximizes the $\RoE$, this is a much riskier choice that brings only a negligible improvement.
\end{example}

Since maximizing the $\RoE$ does not capture the phenomenon of risk aversion, we would like to define a benchmark that does. A first suggestion is the \emph{probability of selecting the maximum} ($\PbM$), introduced by \citet{GM66} for the case of i.i.d. valued elements, for which the $\PbM = 0.5801$. In the non-i.i.d. case, in worst-case order, \citet{esfandiari} show a tight bound on $\PbM$ of $1/e$.

A decision-maker that maximizes the $\PbM$ selects the first box in \Cref{ex:ex1}, and thus picks the maximum with probability close to $1$. Another different approach from the $\PbM$ is the \emph{expected ratio}, $\EoR$, between the algorithm's weight and the weight of the ex-post optimum (originally suggested in \cite{ScharbrodtSS06} for other domains). In \Cref{app:single}, we establish that $\PbM$ and $\EoR$ are essentially identical measures of performance in single-choice settings. This no longer holds for richer variants of prophet inequalities beyond single-choice, which is the main focus of our paper.\\ 

We study the natural extension of classic prophet inequalities, termed prophet inequalities with combinatorial constraints \citep{rubinstein-matr}, where the decision-maker is allowed to select more than a single element according to a predefined (downward-closed) feasibility constraint. These types of constraints capture the idea that if a given set is feasible, so are all its subsets: examples include knapsack, matchings, and general matroids, as well as their intersection. In the next example, one can observe that, for any online algorithm, the probability of selecting the maximum is exponentially small in the number of elements; put another way, the probability of selecting the exact optimum offline is negligible and, thus, the guarantees of the $\PbM$ measure do not extend to combinatorial prophet inequalities.
\begin{example}\label{ex:ex2}
Consider a setting with $n$ pairs of boxes, such that, for each pair $i$, one box has weight $w_{1,i} = 1$ deterministically, and the second has weight $w_{2,i}$, equal to $0$ with probability $1/2$ and to $2$ with probability $1/2$. The (downward-closed) feasibility constraint is that at most one box from each pair can be selected (a partition matroid), and the decision-maker gets the sum over the selected set. 
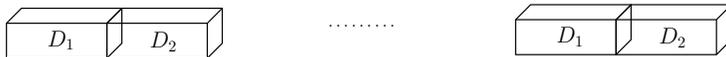
\begin{figure}[H]
    \centering
    \scalebox{0.6}{\tikzset{every picture/.style={line width=0.75pt}} 

\begin{tikzpicture}[x=0.75pt,y=0.75pt,yscale=-1,xscale=1]

\draw   (20,89.6) -- (32.6,77) -- (117,77) -- (117,106.4) -- (104.4,119) -- (20,119) -- cycle ; \draw   (117,77) -- (104.4,89.6) -- (20,89.6) ; \draw   (104.4,89.6) -- (104.4,119) ;
\draw   (104.4,89.6) -- (117,77) -- (201.4,77) -- (201.4,106.4) -- (188.8,119) -- (104.4,119) -- cycle ; \draw   (201.4,77) -- (188.8,89.6) -- (104.4,89.6) ; \draw   (188.8,89.6) -- (188.8,119) ;
\draw   (448,86.6) -- (460.6,74) -- (545,74) -- (545,103.4) -- (532.4,116) -- (448,116) -- cycle ; \draw   (545,74) -- (532.4,86.6) -- (448,86.6) ; \draw   (532.4,86.6) -- (532.4,116) ;
\draw   (532.4,86.6) -- (545,74) -- (629.4,74) -- (629.4,103.4) -- (616.8,116) -- (532.4,116) -- cycle ; \draw   (629.4,74) -- (616.8,86.6) -- (532.4,86.6) ; \draw   (616.8,86.6) -- (616.8,116) ;

\draw (289,89.4) node [anchor=north west][inner sep=0.75pt]    {$\dotsc \dotsc \dotsc $};
\draw (53,94.4) node [anchor=north west][inner sep=0.75pt]  [font=\Large]  {$D_{1}$};
\draw (139,95.4) node [anchor=north west][inner sep=0.75pt]  [font=\Large]  {$D_{2}$};
\draw (481,91.4) node [anchor=north west][inner sep=0.75pt]  [font=\Large]  {$D_{1}$};
\draw (567,92.4) node [anchor=north west][inner sep=0.75pt]  [font=\Large]  {$D_{2}$};

\end{tikzpicture}}
    \caption{$n$ pairs of boxes: for each pair, $w_{1,i} = 1$, $w_{2,i} \sim D_2 = \begin{cases} 0, \text{ w.p. } 1/2 \\ 2, \text{ w.p. } 1/2 \end{cases}$.}
    \label{fig:ex2}
\end{figure}
The probability of selecting the maximum is the probability that the algorithm chooses the larger realized value for each pair of boxes. An online algorithm cannot select the maximum of each pair with probability greater than $1/2$. Since all realizations are independent, we have an upper bound of $\PbM = 1/2^n$ for every online algorithm (see Claim \ref{clm:negl-pbm}).
This motivates choosing a different measure of performance in combinatorial settings. In particular, for this example, the algorithm that always selects the first box for each pair guarantees good expected ratio. Indeed, by Jensen's Inequality (see~\Cref{app:omitted}), we have
\begin{align*}
    \E\left[\frac{\text{\ALG}}{\text{\OPT}}\right] &\geq \frac{n}{\E\left[\sum_{i \in [n]}\max\{w_{1,i}, w_{2,i}\}\right]} = \frac{n}{\frac{3}{2}n} = \frac{2}{3}.
\end{align*}
Note that this algorithm also guarantees $\ALG \geq \left(\frac{2}{3} - \varepsilon\right) \cdot \OPT$ with high probability, which is a type of guarantee a risk-averse decision-maker would desire.
\end{example}

Combining that $\EoR$ and $\PbM$ are equivalent for single-choice settings and that $\PbM$ is unachievable beyond single-choice, we believe that the $\EoR$ is the right alternative to the $\PbM$ in combinatorial settings. In \Cref{app:alternative}, we discuss other possible extensions of $\PbM$  and show their shortcomings in combinatorial settings.\\

It is important to note that there are some instances where optimizing $\EoR$ leads to bad guarantees for risk-averse decision-makers. This is the case for both the $\EoR$ and the $\RoE$ (see \Cref{ex:risk} in \Cref{app:risk}).
However, there are cases where having to average over many runs to obtain a good ratio does not meet the problem requirements.
Consider a platform or marketplace that faces a frequently repeated (e.g., daily) resource allocation problem. The task is to allocate limited resources to a stream of customers, subject to any underlying downward-closed constraint about what can be allocated and to whom. The platform wants to maximize an objective, such as social welfare or revenue. From the perspective of both the platform and the customers, it is often desirable to know that some ex-post guarantees will be satisfied. Specifically, the customers on a given day might want to know that if they have a high value for some subset of the resources, they will have a fair chance at getting it, and the platform also wants to ensure that on every instance, it will allocate a good fraction of the resources to the customers who value them highly. In such scenarios, designing a strategy that maximizes the $\EoR$, rather than the $\RoE$ or some other performance measure, will guarantee that.  

In \Cref{app:const-eor}, we provide a series of claims (Claim \ref{cl:imply}, Claim \ref{clm:one-over-e}, and Claim \ref{clm:two-thirds}), which establish that our definition of $\EoR$ (in contrast to the $\RoE$) guarantees the best-we-can-hope-for when minimizing the risk compared to the ex-post value. More specifically, Claim \ref{cl:imply} shows that a ``good'' (i.e., relatively high constant) $\EoR$ directly implies that we attain a constant fraction of the optimum with constant probability. Moreover, as shown in Claim \ref{clm:one-over-e} and Claim \ref{clm:two-thirds}, no (qualitatively) better bi-criteria approximation can be achieved; there exist simple feasibility constraint-distribution pairs for which either no constant approximation to the maximum is possible with high probability, or no near-optimal approximation to the maximum can be attained with constant probability. Therefore, settling for the ex-post guarantee of the $\EoR$ is best possible in combinatorial prophet inequalities. Moreover, instead of aiming directly for such bi-criteria results, our main goal in this paper is to suggest a natural alternative performance measure and present its properties and insights it provides. Thus, we believe that, apart from its simplicity, the two main reasons that make the $\EoR$ an interesting objective function are (1) that is the ``right'' generalization of $\PbM$ beyond the single-choice setting, and, (2) that it captures well one of the natural ways to think of risk-aversion in online decision-making.





We further investigate how the notions of the ratio of expectations and the expected ratio are connected to each other. As a first step, the following examples show why a constant $\RoE$ algorithm does not guarantee a constant $\EoR$, and vice versa.

First, consider \Cref{ex:ex1}. The canonical $1/2$-competitive (and tight) $\RoE$ algorithm for the single-choice problem is that of setting a single threshold $\tau := \E\left[\text{OPT}\right]/2 = \E\left[\max\left\{w_1, w_2\right\}\right]/2 = (1 - \varepsilon + 1 + 2\varepsilon)/2 = 1 + \varepsilon/2$, and accepting the first box whose weight exceeds $\tau$. We now analyze the performance of such algorithm, measured according to $\EoR$:
\begin{align*}
    \EoR := \E\left[\frac{\ALG}{\OPT}\right] = (1 - \varepsilon) \cdot \frac{0}{1} + \varepsilon \cdot \frac{(1+2\varepsilon)/\varepsilon}{(1+2\varepsilon)/\varepsilon} = \varepsilon,
\end{align*}
since the algorithm would only accept if the value is at least $1 + \varepsilon/2$, which only happens if the second box realization is $(1+2\varepsilon)/\varepsilon$. This algorithm has no $\EoR$ guarantee since $\varepsilon$ can be arbitrarily small.

Second, the next example shows that a constant $\EoR$ algorithm is not necessarily constant competitive in the $\RoE$ sense.

\begin{example}\label{ex:ex3}
Consider a setting with two boxes, one with a weight $w_1 = 1$ deterministically, and the second with a weight $w_2$, which is $\varepsilon^2$ with probability $1-\varepsilon$ and $1/\varepsilon^2$ with probability $\varepsilon$, for $\varepsilon \in (0, 1]$.
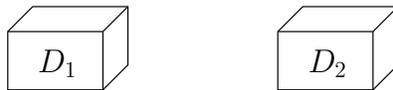
\begin{figure}[H]
    \centering
    \scalebox{0.6}{\tikzset{every picture/.style={line width=0.75pt}} 
\begin{tikzpicture}[x=0.75pt,y=0.75pt,yscale=-1,xscale=1]
    \draw   (164,111) -- (185,90) -- (265,90) -- (265,139) -- (244,160) -- (164,160) -- cycle ; \draw   (265,90) -- (244,111) -- (164,111) ; \draw   (244,111) -- (244,160) ;
    \draw   (391,111) -- (412,90) -- (492,90) -- (492,139) -- (471,160) -- (391,160) -- cycle ; \draw   (492,90) -- (471,111) -- (391,111) ; \draw   (471,111) -- (471,160) ;
    \draw (188,125) node [anchor=north west][inner sep=0.75pt]  [font=\huge]  {$D_1$};
    \draw (415,125) node [anchor=north west][inner sep=0.75pt]  [font=\huge]  {$D_2$};
\end{tikzpicture}}
    \caption{Two boxes: $w_1 = 1$, $w_2 \sim D_2 = \begin{cases} \varepsilon^2, \text{ w.p. } 1 - \varepsilon \\ 1/\varepsilon^2, \text{ w.p. } \varepsilon \end{cases}$.}
    \label{fig:ex3}
\end{figure}
The algorithm that always selects the first box achieves $\E\left[\frac{\ALG}{\OPT}\right] = (1 - \varepsilon) \cdot \frac{1}{1} + \varepsilon \cdot \frac{1}{1/\varepsilon^2} > 1 - \varepsilon$. On the contrary, $\E\left[\ALG\right] = 1$ and $\E\left[\OPT\right] = 1 - \varepsilon + \frac{1}{\varepsilon} > \frac{1}{\varepsilon}$. Thus,
\begin{align*}
    \RoE := \frac{\E\left[\ALG\right]}{\E\left[\OPT\right]} < \frac{1}{1/\varepsilon} = \varepsilon.
\end{align*}
As $\varepsilon$ can be arbitrarily small, this algorithm does not guarantee a constant $\RoE$.
\end{example}

The aforementioned examples demonstrate that algorithms exhibiting a constant guarantee for one performance measure (such as the optimal $\RoE$ algorithm, which gives a guarantee of $1/2$ for any instance) might fail miserably in some instances for the other performance measure. This motivates us to deeper understand whether, and for which settings, a good algorithm for $\RoE$ can be transformed to a good algorithm for $\EoR$, and vice versa. The \textbf{guiding question} of this paper is, therefore,
\begin{center}
    \emph{What is the relation between $\RoE$ and $\EoR$?}
\end{center}


\subsection{Our contributions} 

As a motivation for our study, in \Cref{app:single}, we show the equivalence between $\PbM$ and $\EoR$ in the single-choice setting. We present two proofs for this equivalence; one is an adaptation of the worst-case example for the $\PbM$ measure by \citet{esfandiari}. The second proof is based on the observation that for each product distribution, we can construct a new product distribution for which the $\EoR$ is arbitrarily close to the $\PbM$ of the original distribution.

Our main results establish two reductions between $\EoR$ and $\RoE$.
In particular, we show that for every downward-closed feasibility constraint, $\RoE$ and $\EoR$ are at most a multiplicative constant factor apart 
(see \Cref{sec:relation} and \Cref{sec:reduction2}). For the following informal statements of the three main results, we first introduce some basic notation. We use $\RoE(\cF)$ (similarly $\EoR(\cF)$) to denote the ratio between the performance of the \emph{best} algorithm against the offline optimum on the \emph{worst-case} distribution, given a family of feasibility constraints $\cF$. Note that, in principle, we expect the worst-case distributions to be different for the two measures. In the second and third statement, we use the stronger notions of $\RoE(\cF,D)$ and $\EoR(\cF,D)$. Here, the ratio expresses the guarantee of the best algorithm against the offline optimum on the worst \emph{constraint-distribution pair $(\cF, D)$}. This means that the input now consists not only of a family $\cF$, but also of a product (i.e., the distributions of the elements' weights are independent) distribution $D$. 

\begin{theorem*}[Equivalence between $\RoE$ and $\EoR$, \Cref{col:eorroe}]
For every downward-closed family of feasibility constraints $\cF$ it holds that
\begin{align*}
\frac{\RoE(\cF)}{\EoR(\cF)} \in \Theta(1).
\end{align*}
\end{theorem*}

In the next result, we show that the $\EoR$ is a stronger benchmark than the $\RoE$ in the sense that for every instance composed of a feasibility constraint and a product distribution the $\RoE$ is at least a constant of the $\EoR$.

\begin{theorem*}[$\EoR$ to $\RoE$ reduction, \Cref{alg-analysis-rev-new}]
For every downward-closed family of feasibility constraints $\cF$, and a product distribution $D$ it holds that
    \begin{align*}
        \RoE(\cF,D) \geq \frac{\EoR(\cF,D)}{68}.
    \end{align*}
\end{theorem*}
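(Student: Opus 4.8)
The plan is to start from a (near-)optimal algorithm $\mathcal{A}$ for the expected ratio, so that $\E[\ALG_{\mathcal{A}}/\OPT] = \EoR(\cF,D) =: \alpha$, and to manufacture from it an online algorithm whose ratio of expectations is at least $\alpha/68$. The first move is to break the sample space into dyadic scales of the offline optimum, $B_j := \{2^j \le \OPT < 2^{j+1}\}$, and to write $\alpha = \sum_j \alpha_j$ with $\alpha_j := \E[(\ALG_{\mathcal{A}}/\OPT)\indicator{B_j}]$. On each scale $\OPT$ varies only by a factor of two, so the two measures agree up to a constant within a bucket: since $\OPT \ge 2^j$ on $B_j$ we have $\E[\ALG_{\mathcal{A}}\indicator{B_j}] \ge 2^j\alpha_j$, while $\E[\OPT\indicator{B_j}] < 2^{j+1}\Pr[B_j]$. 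Hence conditioning on any single scale already converts the expected-ratio guarantee into a ratio-of-expectations guarantee, losing only a factor of two. This is the routine part, and I would carry it out first.

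The difficulty is global rather than per-bucket. Running $\mathcal{A}$ unchanged gives only $\RoE_{\mathcal{A}} \ge (\sum_j 2^j\alpha_j)/(\sum_j 2^{j+1}\Pr[B_j])$, which can be far below $\alpha=\sum_j\alpha_j$. The reason is an intrinsic anticorrelation: $\mathcal{A}$, being tuned for the expected ratio, spends its ``ratio budget'' on the most probable scales, which need not be the scales that dominate $\E[\OPT]=\sum_j w_j$ with $w_j:=\E[\OPT\indicator{B_j}]$, and the ratio $\ALG_{\mathcal{A}}/\OPT$ is smallest precisely on the high-$\OPT$ events that dominate the denominator. So the core of the argument is to exhibit a (possibly different) algorithm that captures value on the scales that actually dominate $\E[\OPT]$.

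This is where independence of $D$ must enter, and it is the step I expect to be the main obstacle. The plan is to split each realized $\OPT$ at a threshold into a ``heavy'' part carried by a single large element and a ``spread'' part carried by many moderate elements. For the heavy part, downward-closedness makes every singleton feasible, so the classical single-choice prophet inequality yields an algorithm collecting at least $\tfrac{1}{2}\E[\max_e w_e]$, which already captures a constant fraction of every scale whose $\OPT$ is within a constant of its largest element. For the spread part the key structural fact — provable only under a product distribution — is that a scale that is simultaneously rare, large, and composed of many small elements cannot carry a constant fraction of $\E[\OPT]$: if it did, then by independence the individual elements would fire often enough that a fixed per-element threshold collects a constant fraction of their expected contribution already in the typical (non-rare) regime. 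This rules out the ``rare, large, uncollectable optimum'' obstruction that would otherwise drive $\EoR$ and $\RoE$ arbitrarily far apart, and reduces the general case to a dichotomy in which every loss in $\RoE_{\mathcal{A}}$ due to anticorrelation is ``paid for'' either by a large $\E[\max_e w_e]$ or by a typical-scale contribution that is collectable at a rate proportional to the ratio $\mathcal{A}$ attains there. Quantifying this dichotomy using only independence and the value of $\alpha$ is the delicate point.

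Finally I would assemble the pieces by taking the better of the single-choice algorithm and a scale-restricted run of $\mathcal{A}$ (or, equivalently, by selecting a dyadic scale at random with probability proportional to its contribution $w_j$ to $\E[\OPT]$ and running the corresponding scale-targeted algorithm), and then summing the resulting per-scale guarantees as a geometric series. Each reduction — passing from ratio to value within a bucket, the single-choice prophet loss, the threshold/concentration loss for the spread part, and the randomization over scales — contributes an explicit constant, and their product yields the claimed $68$. This constant bookkeeping is mechanical once the structural collectability statement of the previous paragraph is established.
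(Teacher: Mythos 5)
Your dichotomy---a large $\E[\max_e w_e]$ handled by the single-choice prophet inequality, a spread optimum handled via concentration under independence---is exactly the skeleton of the paper's proof (\Cref{alg:eor-roe} branches on whether $A := \E[\max_{e} w_e] \geq \alpha \cdot \E[f(\bw)]/34$, and the first branch earns $A/2 \geq \alpha\E[f(\bw)]/68$ via \citep{samuel-cahn}). But the step you flag as ``the delicate point'' is the entire content of the theorem, and the mechanism you sketch for it would fail. You propose that when the optimum is carried by many moderate elements, ``a fixed per-element threshold collects a constant fraction of their expected contribution.'' Under general downward-closed constraints this is false: feasibility binds, and there are constraints (e.g., the example in Appendix B of \citep{rubinstein-matr}) where \emph{no} online algorithm achieves $\RoE$ better than $O(\log\log n/\log n)$, so no threshold-collection rule can recover a constant fraction of the spread contribution. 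The paper never builds a new collector for the spread case; it runs the given $\EoR$ algorithm \emph{unchanged} and uses concentration purely in the analysis: after truncating weights at $2A$ (the truncation costs at most $2A \leq 2\alpha\E[f(\bw)]/34$ in expected optimum, Claim~\ref{cl:3}), the function $f/(2A)$ is self-bounding on $[0,2A]^{|E|}$ (\Cref{prop:bounding}), so the BLM inequality (\Cref{blm}) yields $\Pr[f(\bw) \leq V/4] \leq \alpha/2$ with $V = \E[f(\bw)]$. Since the ratio $a(\bw)/f(\bw)$ is at most $1$, the $\EoR$ guarantee must then be earned on the complementary event, giving $\E[a(\bw)/f(\bw) \mid f(\bw) \geq V/4] \geq \alpha/2$ and hence $\E[a(\bw)] \geq (\alpha/2)\cdot(V/4)\cdot(1-\alpha/2) \geq \alpha V/16$. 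This quantitative use of self-bounding concentration is the missing idea in your proposal; your heuristic about elements ``firing often enough'' does not substitute for it.

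A second, related problem is your assembly step. A ``scale-restricted run of $\mathcal{A}$'' or sampling a dyadic bucket $B_j$ with probability proportional to its contribution and running a ``scale-targeted algorithm'' is not implementable online: $B_j$ is an event about the entire realization, conditioning on it destroys the product structure, and the global $\EoR$ guarantee of $\mathcal{A}$ gives no lower bound on its ratio restricted to the buckets that dominate $\E[\OPT]$---this is precisely the anticorrelation you yourself identify, so the per-bucket conversion does not compose into a global bound. The paper's route dissolves this difficulty rather than solving it bucket by bucket: once $A$ is small, concentration collapses essentially all of $\E[\OPT]$ into a single scale around $V$, making the dyadic machinery unnecessary. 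As written, your proposal correctly locates where independence must enter but neither proves the concentration statement nor supplies a collection mechanism that survives general downward-closed feasibility; the gap is genuine.
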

We complement this by showing that the parallel result 
cannot be achieved in the other direction (i.e., from \RoE ~to \EoR).
\begin{theorem*}[$\RoE$ to $\EoR$ impossibility, \Cref{cor:noreduction}]
For every $\varepsilon>0$, there exist a feasibility constraint $\cF$  and a product distribution $D$ in which $\EoR(\cF,D)  \leq \varepsilon $ and  $\RoE(\cF,D)  \in \Omega(1)$.
\end{theorem*}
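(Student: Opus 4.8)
The plan is to exhibit, for each $\varepsilon>0$, a single explicit pair $(\cF,D)$ with $\cF$ downward-closed and $D$ a product distribution, for which a simple online policy is constant-competitive in the ratio-of-expectations sense while \emph{every} online policy is $\varepsilon$-bad in the expected-ratio sense. A preliminary observation guides the whole design: by the equivalence between $\PbM$ and $\EoR$ in single-choice settings (\Cref{app:single}) and the $1/e$ lower bound on $\PbM$, any instance that is ``effectively single-choice'' has $\EoR=\Omega(1)$. Hence the separating instance must be genuinely combinatorial, and, crucially, it must defeat \emph{hedging}: there should be no online policy that secures a constant fraction of the realized optimum on the typical (high-probability) realizations. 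This is the conceptual heart of the construction and dictates that the value of $\OPT$ be spread over a jointly-unpredictable feasible configuration rather than concentrated on a single grabbable element.

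Concretely, I would build $(\cF,D)$ out of many independent ``lottery'' coordinates tied together by a global capacity (a downward-closed system obtained, e.g., as a partition-type constraint intersected with a small uniform rank bound), together with a heavy-tailed product distribution in which, with the bulk of the probability, the offline optimum is governed by a random ``winning'' pattern that is revealed only gradually. The design is tuned so that (i) the per-coordinate expected contributions are balanced, making $\E[\OPT]$ and the value obtainable by a naive greedy/threshold policy of the same order; and (ii) the identity of the winning pattern, and hence the bulk of the realized $\OPT$, is not determined until after an online policy has been forced by the capacity constraint to commit irrevocably. For the $\RoE$ lower bound I would then simply analyze one explicit policy (greedy acceptance against a fixed threshold, or ``always gamble''), computing $\E[\ALG]$ and $\E[\OPT]$ coordinate-by-coordinate and checking $\E[\ALG]\ge c\,\E[\OPT]$ for a universal constant $c$ that does \emph{not} degrade as $\varepsilon\to0$; this step is routine once the coordinate contributions are balanced, and it yields $\RoE(\cF,D)\in\Omega(1)$.

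The main work, and the main obstacle, is the matching $\EoR$ upper bound, which must hold against \emph{every} online --- adaptive and possibly randomized --- algorithm. The plan is to isolate a ``typical'' event $\mathcal{T}$ of probability at least $1-\varepsilon$ on which the realized optimum is concentrated on the winning pattern, and to show that on $\mathcal{T}$ any feasible online selection can overlap that pattern in at most an $O(\varepsilon)$-fraction of its value, in expectation over the algorithm's internal randomness and the still-unrevealed coordinates; the complementary event contributes at most $\Pr[\mathcal{T}^c]\cdot 1\le\varepsilon$ since the ratio is always at most $1$. Establishing the overlap bound uniformly over all policies is exactly where the single-choice $1/e$ barrier must be circumvented: it requires that the combinatorial constraint genuinely force commitment before the decisive information arrives, so that no threshold or reservation strategy can track the realized optimum --- unlike single-choice, where a threshold always secures a constant fraction of the maximum. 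I expect this universal, information-theoretic overlap argument (rather than any single calculation) to be the crux; once it is in place, choosing the jackpot probability $\delta=\Theta(\varepsilon)$ and the number of coordinates large enough gives $\EoR(\cF,D)\le\varepsilon$ and $\RoE(\cF,D)\in\Omega(1)$ simultaneously, which is the desired separation and complements the two-sided bound of \Cref{col:eorroe} and the one-directional reduction of \Cref{alg-analysis-rev-new}.
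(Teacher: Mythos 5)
There is a genuine gap, and it begins with your choice of constraint family. The system you propose --- a partition-type constraint intersected with a small uniform rank bound --- is a matroid intersection, and such families admit constant-factor prophet inequalities in the $\RoE$ sense, so $\RoE(\cF)\in\Omega(1)$. But the paper's own feasibility-based reduction (\Cref{alg-analysis}) gives $\EoR(\cF)\geq \RoE(\cF)/12$, and since $\EoR(\cF)=\inf_D \EoR(\cF,D)$, this means \emph{every} product distribution $D$ on such a constraint satisfies $\EoR(\cF,D)=\Omega(1)$. Your target bound $\EoR(\cF,D)\leq\varepsilon$ is therefore provably unattainable on the class of constraints you describe: no amount of cleverness in the distribution or in the "overlap argument" can circumvent it. A valid separating instance is forced to use a family with $\RoE(\cF)=o(1)$, i.e., one of the pathological downward-closed families (the Babaioff--Immorlica--Kleinberg-based example cited in the paper, with $\RoE(\cF)\in O(\log\log n/\log n)$) where no online algorithm is constant-competitive even in expectation. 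Your plan never identifies this necessary condition; moreover, the universal $\EoR$ upper bound against all adaptive randomized policies --- which you yourself flag as the crux --- is left entirely unproven, and proving it from scratch for an explicit instance would amount to re-deriving the known $\RoE$ impossibility, a substantial result in its own right.

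The paper's route avoids any new hard-instance analysis. It takes the known sub-constant-$\RoE$ constraint as a black box, picks a near-worst-case distribution $D$ with $\RoE(\cF,D)\leq \frac{69x}{68}$ where $x=\RoE(\cF)$, and uses its own instance-based reduction (\Cref{alg-analysis-rev-new}) in the direction $\EoR(\cF,D)\leq 68\cdot\RoE(\cF,D)$ to conclude that the $\EoR$ of this instance is already $O(x)$. It then restores a constant $\RoE$ by planting a $\mathrm{Bernoulli}(x)$ jackpot of size $\E\left[\sum_{e\in E} w_e\right]/x$ on a single element $e^*$: always selecting $e^*$ is $1/2$-competitive in expectation, while the jackpot inflates the expected ratio by at most $\Pr[X=1]\cdot 1 = x$, giving $\EoR(\cF,D')\leq 70x$ (\Cref{prop:reduction}). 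Your closing remark about a jackpot probability $\delta=\Theta(\varepsilon)$ gestures at exactly this boosting trick, but embedded in the wrong framework it does not assemble into a proof. To salvage your approach, replace the from-scratch information-theoretic argument with these two borrowed ingredients: the known pathological $\cF$, and the reduction of \Cref{alg-analysis-rev-new} applied in the contrapositive direction.
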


\subsection{Our techniques}

A key ingredient of our proof is a distinction between cases where the contribution to the value of the prophet comes from a large number of boxes and cases where the contribution mainly comes from a small set of boxes. 
If we are in the latter case, one can just run a simple threshold strategy and have a good guarantee. 
Otherwise, we use our second key ingredient which is analyzing the structure of the offline optimum function (value of the prophet) in the event that the threshold algorithm does not have a good enough guarantee.
In particular, we show that, under such event, the normalized offline optimum function is \emph{self-bounding} (see \Cref{def:bounding}), and therefore well-concentrated \citep{blm}. 
Our \Cref{prop:bounding} and \Cref{prop:bounding-xos} generalize claims about the self-bounding property of normalized offline optima shown in \cite{Vondrak10,BlumCHPPV17} (see \Cref{app:xos} for further discussion).



\smallskip

\noindent \textbf{Feasibility-based reduction:} $\RoE(\cF)$ \textbf{vs.} $\EoR(\cF)$ \textbf{.} To prove that the $\EoR(\cF)$ is at least a constant times $\RoE(\cF)$, we first calculate the threshold for which the maximal value exceeds it with probability of half.
We use this to perform a \emph{tail-core} split: Intuitively, if the expected offline optimum is not too large (i.e., close to the threshold), then our algorithm tries to catch a ``superstar'' (i.e., the first element with a weight above the threshold). To simplify our analysis, we count only the gain in the cases where exactly one such element is realized (the tail event). This happens with constant probability, and since the expected offline optimum is relatively small, we always get a good fraction of it by picking this unique element.

When instead, the expected offline optimum is large (i.e., far from the threshold), we run a constant competitive $\RoE$ algorithm in a black-box fashion. As already pointed out, an algorithm with constant $\RoE$ does not necessarily achieve any guarantee for $\EoR$. We overcome this obstacle through our case distinction and the self-bounding properties we show for the optimum. In particular, we upper bound the value of the offline optimum with high enough probability and lower bound the $\RoE$ algorithm expected value conditioned on the optimum not being too large. 

An immediate corollary of our result 
and \cite{rubinstein-matr} is that, for downward-closed feasibility constraints, the $\EoR$ is in $\Omega\left(\frac{1}{\log^2 n}\right)$. However, we prove a much stronger result: For every specific feasibility constraint, the $\EoR$ is a constant away from the $\RoE$ (which implies trivially the former assertion). In particular, if for some feasibility constraint the $\RoE$ is $\Omega\left(\frac{1}{\log\log n}\right)$, then our result shows that the $\EoR$ is approximately the same (and not that it is just bounded by $\Omega\left(\frac{1}{\log^2 n}\right)$). \smallskip 

\noindent \textbf{Instance-based reduction:} $\RoE(\cF, D)$ \textbf{vs.} $\EoR(\cF, D)$ \textbf{.} For the other direction, we show a stronger result, in that for every instance (a feasibility constraint, and a product distribution) the $\RoE$ is at least a constant fraction of the $\EoR$.

To achieve this result, we show that either the original $\EoR$ algorithm achieves up to a constant the same $\RoE$ guarantee, or that the simple threshold algorithm that achieves half of the expectation of the maximal element, guarantees a good $\RoE$.

We remark that both our reductions are constructive: We could interpret them as using a $\RoE$ algorithm black-box to design an $\EoR$ one, and vice versa. 
Further extensions and implications (XOS objectives, unknown prior, different assumptions on the arrival order, etc.) are discussed in \Cref{sec:discussion}. 

\subsection{Related Work}\label{relwork}

For early work on prophet inequalities, starting from the classic model and some of its most important variants, we refer the reader to the comprehensive survey of \citet{hill-kertz}. The topic of prophet inequalities has recently regained strong interest, primarily among researchers in theoretical computer science, due to its connections to (algorithmic) mechanism design and, in particular, posted price mechanisms \citep{hajiaghayi,chawla2,weinberg,correa-pricing}. The surveys of \citet{correa-survey} and \citet{lucier-survey} provide detailed overviews of recent results in prophet inequalities and their connections to mechanism design, respectively.

This recent surge of interest has given rise to a  stream of work, extending the classic prophet inequality to more general objective functions beyond single-choice (including submodular \citep{chekuri}, XOS \citep{feldman}, and monotone subadditive functions \citep{rubinstein-singla}), different assumptions on the arrival order, and rich combinatorial feasibility constraints. Among the latter, some notable results include $k$-uniform matroid \citep{hajiaghayi,alaei,jiang}, matching \citep{feldman,ezra,alaei2,gravin}, general matroid or knapsack \citep{chawla2,weinberg,feldman-ocrs,ehsani,dutting2}, and polymatroid constraints \citep{dutting}. Among the most general environments considered (in which non-trivial positive results can be achieved) are arbitrary downward-closed feasibility constraints \citep{rubinstein-matr,rubinstein-singla}. Combined with our framework, these results immediately give corresponding (lower and upper) bounds on the $\EoR$. Note that \citep{rubinstein-matr} also considers non-downward-closed feasibility constraints, but shows that it is impossible to achieve an $\RoE$ larger than $O(1/n)$.

One of our goals in this paper is to go beyond the traditional measure of performance in prophet inequalities, i.e., the ratio of expectations, and understand how natural alternatives perform in a wide range of scenarios. While the $\EoR$ measure has not been studied before in the context of prophet inequalities, \citet{GargGLS08} considered it for Bayesian cost minimization problems, such as the Online Stochastic Steiner tree problem, where they show an upper bound of $O(\log\log n)$ on the gap between $\RoE$ and $\EoR$: whether this gap is constant is an open question up to this day. Furthermore, \citet[Appendix~A.1]{hartline} study a similar notion to the $\EoR$ and compare it to the $\RoE$ in the context of prior independent mechanism design. Their goal is to measure the performance of an algorithm without knowledge of the input distribution against the best algorithm with full distributional knowledge. We defer the reader to \Cref{app:related_work} for more details on past literature.
\section{Preliminaries}\label{sec:model}

\subsection{Model and Notation}
We consider a setting where there is a ground set of elements $E$, and each element $e\in E$ is associated with a non-negative weight $w_e \sim D_e$. We assume that the distributions have no point masses\footnote{We assume that there are no point masses for simplicity of presentation. All of our theorems can be adjusted to  the case where there are point masses.}, and we denote by $D=\times_{e\in E} D_e$ the product distribution. The elements are presented  with their weights in an online fashion to  a decision-maker who needs to decide immediately and irrevocably whether to accept the current element or not. The decision-maker must ensure that the set of selected elements belongs to a predefined family of downward-closed feasibility constraints $\cF$ at all times. The goal of the decision-maker is to maximize the weight of the selected set. 
We make use of the following definitions and notations.

\begin{definition}\label{def:f}
Let $\bw \in \mathbb{R}_{\geq 0}^{|E|}$ be a non-negative weight vector. We define $\OPT: \mathbb{R}_{\geq 0}^{|E|} \to \cF$ to be the function mapping a vector of weights to a maximum-weight set in family $\cF$. Namely,
\begin{align*}
    \OPT(\bw) = \arg\max_{S \in \cF}{\sum_{e \in S}{w_e}}.
\end{align*}
Moreover, we abuse notation of vector $\bw$ and use $\bw(S)$ to denote the sum of weights in set $S$, i.e.,
\begin{align*}
    \bw(S) := \sum_{e \in S}{w_e}.
\end{align*}
\end{definition}

\begin{definition}\label{fct}
Given a downward-closed family of feasible sets $\cF$, we define $f_{\cF}: \mathbb{R}_{\geq 0}^{|E|} \rightarrow \mathbb{R}$ to be the function that, given a weight vector $\bw$, returns the maximal weight of a feasible set in $\cF$, i.e.,  $f_{\cF}(\bw) := \bw(\OPT(\bw))$.
When clear from context, we omit $\cF$ and use $f$ instead of $f_{\cF}$.
\end{definition}

Given an online algorithm $\ALG$, we denote the (possibly random) set chosen by it given an input $\bw$ by $\ALG(\bw)$. We will denote by $a_{\ALG}(\bw) := \bw(\ALG(\bw))$ the weight of the feasible set chosen by the algorithm for a specific realization $\bw$, and when clear from context, we omit  $\ALG$ from the notation and use $a(\bw)$ instead of $a_{\ALG}(\bw)$. 
Our objective is to design algorithms that maximize the expected ratio between what the online algorithm gets, and the weight of the offline optimum. To measure our performance, given a downward-closed family $\cF$, a product distribution $D$, and an algorithm $\ALG$, we define
\begin{align*}
    \EoR(\cF,D,\ALG) := \E\left[\frac{\bw(\ALG(\bw))}{f(\bw)}\right],
\end{align*}
where the expectation runs over the stochastic generation of the input, as well as the (possible) randomness of the algorithm.
Similarly, we define 
\begin{equation}
    \EoR(\cF,D) := \sup_{\ALG} \EoR(\cF,D,\ALG), \label{eq:EORD}
\end{equation}
and
\begin{equation}
    \EoR(\cF):= \inf_{D}  \EoR(\cF,D). \label{eq:EORF} 
\end{equation}

We will compare our results to the standard objective of maximizing the ratio of expectations between the algorithm and the offline optimum. Accordingly, we denote
\begin{align*}
    \RoE(\cF,D,\ALG) := \frac{\E\left[\bw(\ALG(\bw))\right]}
    {\E\left[{f(\bw)}\right]}.
\end{align*}

The final benchmark we will compare our results to is the probability of selecting an optimal (offline) set, defined by 
\begin{align*}
    \PbM(\cF,D,\ALG) := \Pr\left[\bw(\ALG(\bw)) = f(\bw)\right].
\end{align*}
Analogously to Equations~\eqref{eq:EORD},~and \eqref{eq:EORF}, we define $\RoE(\cF,D),\, \RoE(\cF),\,  \PbM(\cF,D),$ and $ \PbM(\cF)$.

\subsection{Structural Properties}

As a first step before stating and proving the main results, we derive several properties of $f$ that may be of interest beyond this paper. The main technical tool that we use throughout to guarantee only a constant-factor loss in the reduction is the self-bounding property of the (normalized) offline optimum. Since this property resembles a ``smoothness'' condition when removing one of the coordinates of the input vector, we can only prove it if we restrict the support of the weights.



\begin{restatable}[Self-bounding functions]{definition}{bounding}\label{def:bounding}
    Let $\bx := (x_1, \dots, x_n)$ be a vector of independent random variables, and $\cX$ the corresponding product space. Similarly, let $\bx^{(i)} := (x_1, \dots, x_{i-1}, x_{i+1}, \dots, x_{n})$ be the same vector deprived of the $i^{\text{th}}$ coordinate, and $\cX^{(i)}$ the corresponding product space. A function $g: \cX \rightarrow \mathbb{R}$ is said to be self-bounding if there exists a series of functions $\{g_i\}_{1\leq i\leq n}$, such that each $g_i: \cX^{(i)} \rightarrow \mathbb{R}$ satisfies 
    \begin{align*}
        0 \leq g(\bx) - g_i(\bx^{(i)}) &\leq 1,\\
        \sum_{i \in [n]}{\left(g(\bx) - g_i(\bx^{(i)})\right)} &\leq g(\bx).
    \end{align*}
\end{restatable}

\begin{proposition}[Properties of $f_{\cF}$]\label{prop:bounding}
    For every downward-closed family of sets $\cF$, the function $f$ $(=f_\cF )$ satisfies the following properties:
    \begin{enumerate}
        \item $f$ is $1$-Lipschitz.
        \item $f$ is monotone, i.e., if $\bu \geq \bv$ point-wise, then $f(\bu)\geq f(\bv)$.
        \item For every $\tau > 0$, the function  $f/\tau$ restricted to the domain $[0,\tau]^{|E|}$ is self-bounding.
    \end{enumerate}
\end{proposition}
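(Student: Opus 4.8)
The plan is to treat the three properties in the order they are stated, exploiting throughout the explicit description $f(\bw) = \bw(\OPT(\bw))$ together with the single structural fact that membership in $\cF$ is determined by the support of a set and is \emph{independent of the weights}. This weight-independence of feasibility is what powers all three claims. I will first record a coordinate-drop bound that is used in both the Lipschitz and self-bounding arguments: writing $\bw^{(i\to 0)}$ for the vector $\bw$ with its $i$-th entry replaced by $0$, I claim $0 \le f(\bw) - f(\bw^{(i\to 0)}) \le w_i$. The lower bound is monotonicity (property 2), and the upper bound holds because $\OPT(\bw)$ remains feasible after zeroing coordinate $i$ and loses at most $w_i$ in value, so $f(\bw^{(i\to 0)}) \ge \bw^{(i\to 0)}(\OPT(\bw)) \ge f(\bw) - w_i$.

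For monotonicity, suppose $\bu \ge \bv$ point-wise. I would take $S := \OPT(\bv)$, which is feasible regardless of weights, and chain $f(\bu) \ge \bu(S) \ge \bv(S) = f(\bv)$, the middle step being coordinate-wise domination. For the $1$-Lipschitz property, I would prove $|f(\bu) - f(\bv)| \le \|\bu - \bv\|_1$. Fixing $S := \OPT(\bu)$ and using that $S$ is still feasible for the weights $\bv$ gives $f(\bv) \ge \bv(S) = \bu(S) - \sum_{e \in S}(u_e - v_e) \ge f(\bu) - \|\bu - \bv\|_1$; the symmetric choice $S := \OPT(\bv)$ yields the reverse inequality. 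The only routine point is that $\sum_{e \in S}(u_e - v_e) \le \|\bu - \bv\|_1$.

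For the self-bounding property (property 3), set $g := f/\tau$ on $[0,\tau]^{|E|}$ and define the auxiliary functions by zeroing a coordinate, $g_i(\bx^{(i)}) := f(\bx^{(i\to 0)})/\tau$ (interpreting $\bx^{(i\to 0)}$ as the full vector obtained from $\bx^{(i)}$ by inserting a $0$ in position $i$). The first requirement $0 \le g(\bx) - g_i(\bx^{(i)}) \le 1$ is exactly the coordinate-drop bound divided by $\tau$, where the upper bound uses $x_i \le \tau$. The crux — and the step I expect to be the main obstacle — is the second requirement $\sum_{i}\big(g(\bx) - g_i(\bx^{(i)})\big) \le g(\bx)$. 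I would prove it by splitting coordinates according to the optimal set $S^* := \OPT(\bx)$: for $i \notin S^*$ the marginal $f(\bx) - f(\bx^{(i\to 0)})$ is exactly $0$, and for $i \in S^*$ it is at most $x_i$.

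The delicate part is showing the marginal truly \emph{vanishes} for $i \notin S^*$ rather than merely being nonnegative: since $i \notin S^*$ we have $\bx^{(i\to 0)}(S^*) = \bx(S^*) = f(\bx)$, so $f(\bx^{(i\to 0)}) \ge f(\bx)$, while monotonicity gives $f(\bx^{(i\to 0)}) \le f(\bx)$, forcing equality. Combining the two cases yields $\sum_{i}\big(f(\bx) - f(\bx^{(i\to 0)})\big) = \sum_{i \in S^*}\big(f(\bx) - f(\bx^{(i\to 0)})\big) \le \sum_{i \in S^*} x_i = \bx(S^*) = f(\bx)$, and dividing by $\tau$ gives the self-bounding inequality. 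It is worth emphasizing that the restriction to $[0,\tau]^{|E|}$ is used in an essential way, as it is precisely what guarantees each per-coordinate marginal is at most $\tau$ and hence that $g(\bx) - g_i(\bx^{(i)}) \le 1$ after normalization.
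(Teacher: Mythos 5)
Your proof is correct and takes essentially the same route as the paper's: the same auxiliary functions obtained by zeroing a coordinate, the same optimal-set-as-feasible-candidate arguments for monotonicity and $1$-Lipschitzness, and the same decomposition of the self-bounding sum over $\OPT(\bx)$, with marginals vanishing off the optimal set and bounded by $x_i \leq \tau$ on it. Your explicit two-sided argument that the marginal is exactly zero for $i \notin \OPT(\bx)$ merely spells out a step the paper asserts without detail.
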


In \Cref{app:xos}, we generalize the above proposition to arbitrary (extended) XOS functions. 

The main attribute of self-bounding functions that we will use is the following inequality.

\begin{theorem}[BLM Inequality \citep{blm2}]\label{blm}
For a self-bounding function $g: \cX \to \mathbb{R}$, it holds that:
\begin{eqnarray*}
    \mbox{\emph{For every }} z>0, \qquad & \Pr\left[g(\bx) \geq \emph{\E}[g(\bx)] + z \right] &\leq e^{-\frac{3z^2}{6\emph{\E}[g(\bx)] + 2z}}.\\
    \mbox{\emph{For every }} z < \emph{\E}[g(\bx)], &
    \Pr\left[g(\bx) \leq \emph{\E}[g(\bx)] - z \right] &\leq e^{-\frac{z^2}{2\emph{\E}[g(\bx)]}}.
\end{eqnarray*}
\end{theorem}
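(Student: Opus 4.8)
The plan is to prove this concentration bound via the \emph{entropy method} (the Herbst argument): the goal is to establish a single Poisson-type estimate on the logarithmic moment generating function of $Z := g(\bx)$, namely $\log \E[e^{\lambda(Z-v)}] \le v\,\phi(\lambda)$ for all $\lambda \in \mathbb{R}$, where $v := \E[Z]$ and $\phi(x) := e^{x}-x-1$, and then to recover both displayed tails by Chernoff optimization. Write $Z_i := g_i(\bx^{(i)})$ for the coordinate-deleted surrogate, so that the self-bounding hypotheses read $0 \le Z - Z_i \le 1$ and $\sum_{i}(Z-Z_i) \le Z$ (the latter also forcing $Z \ge 0$).

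The first step is to reduce the MGF control to a one-coordinate computation through the subadditivity (tensorization) of entropy: for any nonnegative $f$ of the independent coordinates, $\mathrm{Ent}(f) \le \sum_i \E[\mathrm{Ent}_i(f)]$, where $\mathrm{Ent}_i$ integrates only over $x_i$ with the other coordinates frozen. Applying this to $f = e^{\lambda Z}$, I would bound each conditional term using the dual variational formula $\mathrm{Ent}(f) = \inf_{u>0}\E[f\log f - f\log u - f + u]$ with the choice $u = e^{\lambda Z_i}$, which is admissible precisely because $Z_i$ does not depend on $x_i$. A direct simplification (factoring $e^{\lambda Z_i} = e^{\lambda Z}e^{-\lambda(Z-Z_i)}$) collapses this to the modified log-Sobolev bound
\[
\mathrm{Ent}_i(e^{\lambda Z}) \le \E_i\!\left[e^{\lambda Z}\,\phi\!\left(-\lambda(Z-Z_i)\right)\right].
\]

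Next I would invoke the self-bounding structure. Because $y \mapsto \phi(-\lambda y)$ is convex and vanishes at $0$, and $Z - Z_i \in [0,1]$, the linearization $\phi(-\lambda(Z-Z_i)) \le (Z-Z_i)\,\phi(-\lambda)$ holds for every sign of $\lambda$; summing over $i$, using $\phi(-\lambda)\ge 0$ together with $\sum_i (Z-Z_i)\le Z$, and tensorizing gives $\mathrm{Ent}(e^{\lambda Z}) \le \phi(-\lambda)\,\E[Z e^{\lambda Z}]$. Writing $F(\lambda) := \E[e^{\lambda Z}]$ and $K := \log F$, and noting $\lambda - \phi(-\lambda) = 1 - e^{-\lambda}$, this rearranges to the differential inequality $(1-e^{-\lambda})K'(\lambda) \le K(\lambda)$ with $K(0)=0$ and $K'(0)=v$. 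I would close it by a Gronwall-type monotonicity argument: the ratio $\rho(\lambda) := K(\lambda)/(e^{\lambda}-1)$ satisfies $\rho'(\lambda) \le 0$ (exactly a rearrangement of the differential inequality) and $\rho(0^{\pm}) = v$ by l'Hôpital, so $\rho(\lambda)\le v$ for $\lambda>0$ and $\rho(\lambda)\ge v$ for $\lambda<0$; in both cases, multiplying by $e^\lambda-1$ (and tracking its sign) yields $K(\lambda) \le v(e^{\lambda}-1)$, i.e.\ the target MGF bound $\log\E[e^{\lambda(Z-v)}] \le v\,\phi(\lambda)$ for all $\lambda$.

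Finally, Chernoff's bound turns the MGF estimate into tails. Optimizing the exponent gives $\Pr[Z \ge v+z] \le \exp(-v\,h(z/v))$ and $\Pr[Z \le v-z] \le \exp(-v\,h(-z/v))$, where $h(u) = (1+u)\log(1+u)-u$ is the resulting rate function (the Legendre transform of $v\phi$, rescaled). The two displayed bounds then follow from the elementary estimates $h(u) \ge \tfrac{u^2}{2(1+u/3)}$ (for $u\ge 0$), which gives $v\,h(z/v) \ge \tfrac{3z^2}{6v+2z}$, and $h(-u) \ge \tfrac{u^2}{2}$ (for $0 \le u <1$), which gives $v\,h(-z/v)\ge \tfrac{z^2}{2v}$. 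The main obstacle I anticipate is the differential-inequality step: one must handle the removable singularity of $\rho$ at $\lambda=0$ and track the direction of every inequality through both signs of $\lambda$, since it is this integration that pins down the sharp constants $\tfrac{3z^2}{6v+2z}$ and $\tfrac{z^2}{2v}$. The supporting delicate point is the convexity linearization of $\phi(-\lambda y)$ on $[0,1]$, which is precisely what lets the sum condition $\sum_i(Z-Z_i)\le Z$ enter and close the recursion.
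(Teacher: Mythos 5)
The paper does not prove this statement at all: it is imported as a black-box citation to Boucheron, Lugosi, and Massart, so there is no ``paper proof'' to compare against. Your proposal is a correct reconstruction of the standard entropy-method proof from the original source, and the constants come out exactly as stated. The key steps all check: tensorization of entropy plus the variational formula $\mathrm{Ent}(f) = \inf_{u>0}\E\left[f\log f - f\log u - f + u\right]$ with the admissible choice $u = e^{\lambda Z_i}$ does yield $\mathrm{Ent}_i(e^{\lambda Z}) \leq \E_i\left[e^{\lambda Z}\phi(-\lambda(Z-Z_i))\right]$; the convexity linearization $\phi(-\lambda y) \leq y\,\phi(-\lambda)$ for $y \in [0,1]$ is valid for both signs of $\lambda$ (convexity of $y \mapsto \phi(-\lambda y)$ with value $0$ at $y=0$), and it is precisely here that both self-bounding conditions enter, giving $\mathrm{Ent}(e^{\lambda Z}) \leq \phi(-\lambda)\,\E\left[Z e^{\lambda Z}\right]$; rewriting via $\lambda - \phi(-\lambda) = 1 - e^{-\lambda}$ and multiplying through by $e^{\lambda} > 0$ shows your ratio $\rho(\lambda) = K(\lambda)/(e^{\lambda}-1)$ is nonincreasing on each side of $0$ with limit $v$ there, and your sign-tracking through $e^{\lambda}-1$ correctly yields $K(\lambda) \leq v(e^{\lambda}-1)$ for all $\lambda$, i.e., the Poissonian bound $\log\E\left[e^{\lambda(Z-v)}\right] \leq v\,\phi(\lambda)$. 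The Chernoff step is also right: the Legendre transform gives $v\,h(\pm z/v)$ with $h(u) = (1+u)\log(1+u)-u$, and the elementary estimates $h(u) \geq \tfrac{3u^2}{6+2u}$ for $u \geq 0$ and $h(-u) \geq \tfrac{u^2}{2}$ for $0 \leq u < 1$ reproduce the two displayed exponents $\tfrac{3z^2}{6v+2z}$ and $\tfrac{z^2}{2v}$ exactly. The only technicality you leave implicit is the finiteness of $\E\left[e^{\lambda Z}\right]$ needed to differentiate $K$ (handled by a routine truncation argument in general, and moot in this paper's application, where $g = f/\tau$ is bounded on $[0,\tau]^{|E|}$).
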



\section{From \RoE { }to \EoR: Feasibility-Based Reduction  
} \label{sec:relation}
Before presenting our reduction from \RoE ~to \EoR, we start with a few definitions and observations. We defer their proofs, as well as other auxiliary claims, to \cref{app:omitted}.
Fixing a parameter $\gamma \in (0,1)$, we define the threshold  $\tau$ given a set of elements $E$ with corresponding distributions $\{D_e\}_{e\in E}$, to be such that 
\begin{equation}
    \Pr\left[\tau \geq \max_{e \in E}{w_e}\right] = \gamma. \label{eq:t}
\end{equation}
Such a $\tau$ exists and is unique for every $\gamma$ since we assume that there are no point masses.
For every $e\in E$, we denote by $\overline{D}_e$ the distribution $D_{e \mid w_e \leq \tau}$, as per the $\tau$ defined in Equation~\eqref{eq:t}. 
This is well defined since $\gamma>0$, and, therefore, the probability that $w_e \leq \tau$ is at least $\gamma >0$.
Given a realization of $\bw$, let $\overline{\bw} \in \mathbb{R}_{\geq 0}^{|E|}$ be the weight vector determined by the following process: For each $e$, if $w_e \leq \tau$ then $\overline{w}_e =w_e$; otherwise, let $\overline{w}_e$ be a fresh (independent) draw from $\overline{D}_e$.   Note that the distribution of $\overline{w}$ is a product distribution, where for each $e \in E$, $\overline{w}_e$ is distributed according to $\overline{D}_e$.


We next define the two events that we will use in our analysis.
\begin{definition}\label{events}
Let us define the following events.
\begin{enumerate}
    \item \textbf{Core.} $\cE_0 := \left\{\forall e \in E: w_e \leq \tau \right\}$.
    \item \textbf{Tail.} $\cE_1 := \left\{\exists ! e \in E: w_e > \tau\right\}$,
\end{enumerate}
where the symbol ``$\exists !$'' signifies the existence of a unique such element.
\end{definition}

The next observation enables us to flexibly change whenever needed from conditioning on $\cE_0$ to working directly with the truncated distribution, and vice versa.

\begin{observation} \label{obs:dists}
The distribution of $\overline{D}$ is identical to the distribution of $D$ conditioned on event $\cE_0$.
\end{observation}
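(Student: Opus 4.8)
The plan is to show that the two product measures $\overline{D}$ and $D \mid \cE_0$ assign the same probability to every ``rectangle'' event, which suffices because such events form a $\pi$-system generating the product $\sigma$-algebra on $\mathbb{R}_{\geq 0}^{|E|}$, and agreement on a generating $\pi$-system forces two probability distributions to coincide. Concretely, I would fix an arbitrary product event $\prod_{e \in E} A_e$ with each $A_e \subseteq \mathbb{R}_{\geq 0}$ measurable, and compute the probability it receives under each of the two distributions.

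The key structural fact to exploit is that the conditioning event factorizes coordinatewise, $\cE_0 = \bigcap_{e \in E}\{w_e \leq \tau\}$, as an intersection of events each depending on a single coordinate. Since $D = \times_{e} D_e$ is a product distribution, its coordinates are independent, so both the numerator and the denominator of the conditional probability split as products over $e$:
\begin{align*}
  \Pr_D\!\left[\bw \in \textstyle\prod_{e} A_e \,\middle|\, \cE_0\right]
  &= \frac{\prod_{e} \Pr[w_e \in A_e,\ w_e \leq \tau]}{\prod_{e}\Pr[w_e \leq \tau]}\\
  &= \prod_{e} \Pr[w_e \in A_e \mid w_e \leq \tau].
\end{align*}
By the definition $\overline{D}_e = D_{e \mid w_e \leq \tau}$, each factor is precisely $\Pr_{\overline{D}_e}[w_e \in A_e]$, so the whole product equals $\Pr_{\overline{D}}[\bw \in \prod_e A_e]$, the probability assigned by the product of truncated distributions. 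This establishes equality on all rectangles, and hence everywhere.

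There is no substantial obstacle here: the statement is a direct consequence of the independence of the coordinates under $D$ together with the coordinatewise structure of $\cE_0$. The only points that warrant a line of justification are (i) well-definedness of every conditioning, which holds because $\Pr_D[\cE_0] = \gamma > 0$ and each $\Pr[w_e \leq \tau] \geq \gamma > 0$ (as already observed in the text preceding the statement), and (ii) the reduction to rectangle events, which one may alternatively phrase as an equality of joint cumulative distribution functions if one prefers to avoid the $\pi$-system language.
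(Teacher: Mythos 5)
Your proposal is correct and is essentially the paper's own (implicit) argument: the paper treats the observation as immediate from the fact that $\cE_0$ factorizes coordinatewise so that conditioning the product distribution $D$ on $\cE_0$ yields the product of the truncated marginals $\overline{D}_e$, which is exactly the computation you carry out. Your write-up merely adds the standard measure-theoretic bookkeeping (rectangles as a generating $\pi$-system, positivity of $\Pr[\cE_0]=\gamma$), which the paper omits as routine.
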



We are now ready to present our reduction from  $\RoE$ to $\EoR$. 
\begin{theorem}\label{alg-analysis}
For every downward-closed family of feasibility constraints $\cF$, it holds that
\begin{align}
\EoR(\cF) \geq \frac{\RoE(\cF)}{12}.
\end{align}
\end{theorem}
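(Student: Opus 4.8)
The plan is to fix the downward-closed family $\cF$ and an arbitrary product distribution $D$, write $R:=\RoE(\cF)$, and exhibit an algorithm with $\EoR(\cF,D)\geq R/12$; taking the infimum over $D$ then gives the theorem. Since $\RoE(\cF)=\inf_{D'}\RoE(\cF,D')\leq\RoE(\cF,\overline D)$, I may fix a near-optimal $\RoE$ algorithm $B$ for the truncated distribution $\overline D$ with $\E_{\overline D}[b]\geq(R-\epsilon)\,\overline\mu$, where $b:=\bw(B(\bw))$ and $\overline\mu:=\E_{\overline D}[f]$. I take $\gamma$ in \eqref{eq:t} to be a fixed constant, which fixes $\tau$, the events $\cE_0,\cE_1$ of \Cref{events} (with $\Pr[\cE_0]=\gamma$), and the scale parameter $m:=\overline\mu/\tau$. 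The algorithm, knowing $D$, computes $m$ and runs whichever of two sub-algorithms gives the better guarantee; the whole point is that, as functions of $m$, these guarantees overlap so that one of them always beats $R/12$.

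\textbf{Superstar branch} (good for small $m$): run the single-threshold rule that rejects every element below $\tau$ and takes the first element above $\tau$; since $\cF$ is downward-closed we may assume all singletons are feasible. I count only the tail event $\cE_1$, on which the unique above-threshold element $e^*$ is selected, so $a(\bw)=w_{e^*}>\tau$. Letting $h$ denote the optimum over $E\setminus\{e^*\}$ (all of whose coordinates are $\leq\tau$ on $\cE_1$), monotonicity and downward-closedness of \Cref{prop:bounding} give $f(\bw)\leq w_{e^*}+h$, hence $a/f\geq w_{e^*}/(w_{e^*}+h)\geq\tau/(\tau+h)$. Conditioning on $\cE_1$ and applying Jensen to the convex map $h\mapsto\tau/(\tau+h)$, together with $\E[h\mid\cE_1]\leq\overline\mu$ (given $\cE_1$ the coordinates of $E\setminus\{e^*\}$ are distributed as $\overline D$, and deleting a coordinate only decreases $f$), yields $\EoR\geq\Pr[\cE_1]\cdot\tau/(\tau+\overline\mu)=\Pr[\cE_1]/(1+m)$. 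Using an auxiliary bound $\Pr[\cE_1]\geq\gamma\ln(1/\gamma)$ this is a constant times $1/(1+m)$, which exceeds $R/12$ whenever $m=O(1/R)$.

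\textbf{Black-box branch} (good for large $m$): reject every element above $\tau$ and run $B$ on the rest. On $\cE_0$ the realized vector is distributed exactly as $\overline D$ by \Cref{obs:dists}, so conditioning on $\cE_0$ the pair $(b,f)$ has the same law as under $\overline D$, giving $\EoR\geq\Pr[\cE_0]\,\E_{\overline D}[b/f]=\gamma\,\E_{\overline D}[b/f]$. Here the self-bounding machinery enters: by part (3) of \Cref{prop:bounding} the function $f/\tau$ is self-bounding on $[0,\tau]^{|E|}$, so the upper tail of \Cref{blm} gives $\Pr_{\overline D}[f>2\overline\mu]\leq e^{-\Theta(m)}$ and, after integrating the tail, $\E_{\overline D}[f;\,f>2\overline\mu]\leq O(\overline\mu)\,e^{-\Theta(m)}$. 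On the complementary event $b/f\geq b/(2\overline\mu)$, and bounding the discarded mass through $b\leq f$, I get $\E_{\overline D}[b/f]\geq\frac{1}{2\overline\mu}\bigl(\E_{\overline D}[b]-\E_{\overline D}[f;\,f>2\overline\mu]\bigr)\geq\tfrac12\bigl(R-O(e^{-\Theta(m)})\bigr)$. Hence this branch beats $R/12$ once $m=\Omega(\log(1/R))$.

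\textbf{Combining, and the main obstacle.} The two guarantees are $\Theta\!\left(1/(1+m)\right)$ and $\Theta(R)-O(e^{-\Theta(m)})$. Because $\log(1/R)\ll 1/R$ for every $R\in(0,1]$, the small-$m$ region covered by the superstar branch and the large-$m$ region covered by the black-box branch overlap, so taking the better of the two sub-algorithms gives $\EoR(\cF,D)\geq R/12$ for all $m$, and the theorem follows. I expect the crux to be precisely this black-box conversion: a constant-$\RoE$ algorithm may concentrate its value on atypically large realizations of $f$, and the only thing that rescues us is that, once we restrict to the truncated domain where $f/\tau$ is self-bounding, the offline optimum is concentrated tightly enough that its contribution above $2\overline\mu$ is exponentially small in $m$. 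Reconciling this \emph{additive} $e^{-\Theta(m)}$ loss with the \emph{multiplicative} target $R$ is exactly what forces the two-regime structure, and pinning down the explicit constant $12$ amounts to optimizing $\gamma$ and the truncation cutoff so that the two good regions meet.
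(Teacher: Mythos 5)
Your proposal is correct and follows essentially the same route as the paper's proof of \Cref{alg-analysis}: the same truncation threshold $\tau$ with $\Pr[\max_e w_e\leq\tau]=\gamma$, the same core/tail events $\cE_0,\cE_1$, a ``catch the superstar'' branch analyzed via $\Pr[\cE_1]\geq\gamma\log(1/\gamma)$ and Jensen (the paper's \Cref{newcase1}), and a black-box branch that conditions on $\cE_0$, invokes the self-bounding property of $f/\tau$ and the BLM inequality to make the optimum's tail above a constant multiple of $W$ exponentially small in $W/\tau$ (the paper's \Cref{newcase2} with Claims \ref{blm-opt} and \ref{cl:eaw}), combined over the two regimes with cutoff at scale $\log(1/\alpha)$. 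The only differences are cosmetic --- you phrase the case split as ``take the better guarantee'' rather than comparing $W$ to $c\tau$, and you use restricted rather than conditional expectations --- and the parameter optimization you leave sketched is exactly what the paper carries out with $\gamma=\tfrac12$, $\delta=2$, $k=3$, $c=\tfrac83\log\tfrac3\alpha$ to land on the constant $12$.
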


Note that in the reduction of \Cref{general}, part of the input is a subroutine $\AALG$ that has an $\RoE$ at least as large as $\alpha$. The condition on the event $\cE_0$ just means that all we need to know is that this $\alpha$-guarantee holds when all weights are below the chosen threshold $\tau$. Starting from $\AALG$, we design an algorithm that uses $\AALG$ as a black box in one of the two cases and achieves an $\EoR$ which is at most a multiplicative constant factor away from the $\RoE$.


\RestyleAlgo{ruled}

\begin{algorithm}
\caption{$\RoE\text{-to-}\EoR$}\label{general}
\KwData{Ground set $E$, distributions $D_e$, feasibility family $\cF$, a subroutine $\AALG$ }
\textbf{Parameters:} $\gamma \in (0,1)$, $c > 0$

\textbf{Assumption:} $\AALG$ satisfies $\E\left[a_{\AALG}(\bw)  \mid \cE_0 \right]  \geq \alpha \cdot  \E\left[f(\bw) \mid  \cE_0\right]$

\KwResult{Subset $\ALG(\bw) \subseteq E$ such that
$\ALG(\bw) \in \cF$}
Calculate $\tau$ according to Equation~\eqref{eq:t}\;
Calculate $W:=\E\left[f(\overline{\bw})\right]$\;

\eIf{$W \leq c \cdot \tau$}
{
    Return the first element $e^* \in E$ such that $w_{e^*} \geq \tau$\;
    If no such element exists, return $\emptyset$\;
}{
    As long as $w_e \leq \tau$ run
    $\AALG(\bw)$\;
    If $w_e > \tau$, reject all remaining elements\;
}
\end{algorithm}

In order to prove \Cref{alg-analysis}, by the definition of $\RoE(\cF)$, we will assume the existence of an algorithm $\AALG$ that satisfies:
\begin{align}
    \E_{\bw \sim D}\left[a_{\AALG}(\bw)  \mid \cE_0 \right] &= \E_{\overline{\bw} \sim \overline{D}}\left[a_{\AALG}(\bw)   \right]  \nonumber \\
    &\geq \alpha \cdot  \E_{\overline{\bw} \sim \overline{D}}\left[f(\overline{\bw}) \right] = \alpha \cdot  \E_{\bw \sim D}\left[f(\bw) \mid  \cE_0\right], \label{eq:assumption}
\end{align} 
where the equalities hold by Observation \ref{obs:dists}.
Our analysis distinguishes between two cases according to whether  $W:= \E\left[f(\overline{\bw})\right] \leq c \cdot \tau$.
\Cref{newcase1} analyzes the case where $W\leq c \cdot \tau$, and \Cref{newcase2} the case where $W > c \cdot \tau$. In the remainder, for ease of notation, we use $a(\bw)$ instead of $a_{\AALG}(\bw)$.

\begin{lemma}[``Catch the superstar'']\label{newcase1}
For all constants $\delta > 1, k \geq 1$ and $c \geq \frac{4 + 2\delta}{3(\delta - 1)^2} \log \frac{k}{\alpha}$, if  $W\leq c \cdot \tau$, then Algorithm~\ref{general} 
satisfies
\begin{align}
    \emph{\E}\left[\frac{a(\bw)}{f(\bw)}\right] \geq \frac{\gamma \log\left(1/\gamma\right)}{c+1}.
\end{align}
\end{lemma}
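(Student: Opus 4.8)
The plan is to lower-bound the expected ratio by its contribution on the tail event $\cE_1$ alone, since on $\cE_1$ the relevant branch of Algorithm~\ref{general} (the one taken when $W\le c\cdot\tau$) behaves transparently: there is a unique element $e$ with $w_e>\tau$, it is the only element clearing the threshold, so the algorithm returns $\{e\}$ and $a(\bw)=w_e$. Because $a(\bw)/f(\bw)\ge 0$ always, I would write $\E[a(\bw)/f(\bw)]\ge \E[(a(\bw)/f(\bw))\indicator{\cE_1}]$ and then reduce the claim to two essentially independent pieces: (i) a conditional bound showing that, given that $e$ is the unique superstar, the ratio is at least $\tfrac{1}{c+1}$ in expectation, and (ii) a combinatorial estimate $\Pr[\cE_1]\ge\gamma\log(1/\gamma)$. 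Multiplying the two gives exactly $\tfrac{\gamma\log(1/\gamma)}{c+1}$. I note that only the hypothesis $W\le c\cdot\tau$ is actually used; the explicit constant lower bound on $c$ in terms of $\delta,k,\alpha$ is inherited from the companion branch (\Cref{newcase2}) and plays no role here beyond fixing $c$ as a constant.

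For piece (i), set $p_e:=\Pr[w_e>\tau]$ and use the disjointness of the events ``$e$ is the unique superstar'' to decompose $\E[(a(\bw)/f(\bw))\indicator{\cE_1}]=\sum_{e}\E\bigl[\tfrac{w_e}{f(\bw)}\indicator{A_e\cap B_e}\bigr]$, where $A_e=\{w_e>\tau\}$ and $B_e=\{w_{e'}\le\tau\ \forall e'\ne e\}$. Using the $1$-Lipschitz/subadditivity property of $f$ (\Cref{prop:bounding}) I bound $f(\bw)\le w_e+f(\bw^{-e})$, with $\bw^{-e}$ denoting $\bw$ with coordinate $e$ set to $0$; since $w_e>\tau$ and $x\mapsto x/(x+C)$ is increasing, this gives $\tfrac{w_e}{f(\bw)}\ge\tfrac{\tau}{\tau+f(\bw^{-e})}$. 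As $A_e$ depends only on $w_e$ while $f(\bw^{-e})$ and $B_e$ depend only on the remaining coordinates, independence factors out $\Pr[A_e]=p_e$, leaving $p_e\cdot\Pr[B_e]\cdot\E\bigl[\tfrac{\tau}{\tau+f(\bw^{-e})}\given B_e\bigr]$. Conditioning on $B_e$ makes the remaining coordinates follow the truncated law $\overline D$, so $\bw^{-e}\mid B_e$ and $\overline{\bw}^{-e}$ share the same distribution and the conditional expectation equals $\E_{\overline{\bw}}\bigl[\tau/(\tau+f(\overline{\bw}^{-e}))\bigr]$. Monotonicity of $f$ (\Cref{prop:bounding}) lets me replace $f(\overline{\bw}^{-e})$ by the larger $f(\overline{\bw})$, and convexity of $x\mapsto\tau/(\tau+x)$ with Jensen's inequality together with $W=\E[f(\overline{\bw})]\le c\cdot\tau$ gives $\E_{\overline{\bw}}[\tau/(\tau+f(\overline{\bw}))]\ge\tau/(\tau+W)\ge\tfrac{1}{1+c}$. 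Summing over $e$ yields $\E[a(\bw)/f(\bw)]\ge\tfrac{1}{1+c}\sum_e p_e\Pr[B_e]$.

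For piece (ii), I would observe that $\sum_e p_e\Pr[B_e]=\sum_e p_e\prod_{e'\ne e}(1-p_{e'})=\Pr[\cE_1]$, and factoring out $\gamma=\prod_{e'}(1-p_{e'})>0$ gives $\Pr[\cE_1]=\gamma\sum_e\tfrac{p_e}{1-p_e}$. The elementary inequality $\tfrac{p}{1-p}=\sum_{k\ge1}p^k\ge\sum_{k\ge1}\tfrac{p^k}{k}=-\log(1-p)$ then yields $\sum_e\tfrac{p_e}{1-p_e}\ge-\sum_e\log(1-p_e)=\log(1/\gamma)$, hence $\Pr[\cE_1]\ge\gamma\log(1/\gamma)$. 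Combining the two pieces gives the claim. I expect the main obstacle to be the distributional bookkeeping in piece (i): rigorously justifying that conditioning on $B_e$ reproduces the truncated product law on the remaining coordinates (via \Cref{obs:dists}), that the zeroed coordinate may be discarded by monotonicity, and keeping the independence between $A_e$ and the rest explicit so that the factor $p_e$ separates cleanly. The Jensen step must be applied in the correct (convex) direction, which is precisely where $W\le c\cdot\tau$ enters.
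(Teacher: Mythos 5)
Your proposal is correct, and its skeleton is exactly the paper's: lower-bound the expectation by its contribution on the tail event $\cE_1$, reduce the ratio on that event to $\tau/(\tau+f(\overline{\bw}))$, apply Jensen together with $W\leq c\cdot\tau$ to get the factor $1/(c+1)$, and multiply by $\Pr[\cE_1]\geq\gamma\log(1/\gamma)$. You are also right that the constant lower bound on $c$ is never used in this branch. Two of your auxiliary steps are handled differently from the paper, in ways worth noting. First, where you decompose $\cE_1$ over the identity of the unique superstar (events $A_e\cap B_e$), zero out the superstar coordinate, and condition on $B_e$ to recover the truncated product law, the paper instead uses the coupled resampled vector $\overline{\bw}$ (redraw each above-threshold coordinate from $\overline{D}_e$) together with its Claim~\ref{opts}, $f(\bw)\leq f(\overline{\bw})+\sum_{e}w_e\cdot\bone_{w_e>\tau}$; this coupling makes $\overline{\bw}\sim\overline{D}$ unconditionally (even conditioned on $\cE_1$), so the conditioning bookkeeping you flag as the main obstacle disappears in one stroke, at the cost of introducing auxiliary randomness. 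Your per-element route is equally valid but requires exactly the care you describe (independence of $A_e$ from the rest, monotonicity to discard the zeroed coordinate). Second, your proof of $\Pr[\cE_1]\geq\gamma\log(1/\gamma)$ via $\Pr[\cE_1]=\gamma\sum_e p_e/(1-p_e)$ and the series inequality $p/(1-p)\geq-\log(1-p)$ is genuinely different from, and arguably cleaner than, the paper's Claim~\ref{gamma-lemma}, which factors the same sum but then bounds it through an AM--GM step (Claim~\ref{avg-pb}) and a monotonicity-in-$|E|$ limiting argument; your version is termwise, self-contained, and avoids the limit entirely.
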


\begin{proof}
In this scenario, we know that \Cref{general} will select the first element $e \in E$ such that $w_e \geq \tau$, which we denote by $e^*$. It may happen that no such element exists, in which case the algorithm gets a contribution 
of $0$. On the other hand, by Claim \ref{gamma-lemma} (deferred to \Cref{app:omitted}), $\Pr\left[\cE_1\right] \geq \gamma \log\left(1/\gamma\right)$. Then, conditioned on this event, \cref{general} surely (with probability $1$) selects $e^*$ and we have
\begin{eqnarray}
    \frac{a(\bw)}{f(\bw)} &\geq& \frac{w_{e^*}}{w_{e^*} + f(\overline{\bw})} \nonumber\\
    &\geq &\frac{\tau}{\tau + f(\overline{\bw})}, \label{eq:tfrac}
\end{eqnarray}
where the first inequality follows from Claim \ref{opts} (deferred to \Cref{app:omitted}), and the second by observing that since $w_e > \tau$, the ratio is minimized when  $w_e=\tau$. Hence, we get that 
\begin{align*}
    \E\left[\frac{a(\bw)}{f(\bw)}\right] &\geq \E\left[\frac{a(\bw)}{f(\bw)} \given \cE_1\right] \cdot \Pr\left[\cE_1\right]\\
    &\geq \E\left[\frac{\tau}{\tau + f(\overline{\bw})}\right] \cdot \gamma \log\left(1/\gamma\right)\\
    &\geq \frac{\tau}{\tau + W} \cdot \gamma \log\left(1/\gamma\right)\geq \frac{\gamma \log\left(1/\gamma\right)}{c+1},
\end{align*}
where the first inequality is by the law of total expectation,
the second follows from Equation~\eqref{eq:tfrac} and Claim \ref{gamma-lemma}, the third is by Jensen's inequality (see Claim \ref{jensen}), and the last inequality is due to our assumption that $W \leq c \cdot \tau$.
\end{proof}

In the following, \Cref{newcase2}, Claim \ref{blm-opt}, and Claim \ref{cl:eaw} are mainly dedicated to expressing the expected ratio though various manipulations in a convenient form, such that the concentration property of the offline optimum (see \Cref{blm}) can be repeatedly applied. 

\begin{lemma}[``Run the Combinatorial Algorithm'']\label{newcase2}
For all constants $\delta > 1, k \geq 1$ and $c \geq \frac{4 + 2\delta}{3(\delta - 1)^2} \log \frac{k}{\alpha}$, if $W > c \cdot \tau$, then \Cref{general} satisfies 
\begin{align}
    \emph{\E}\left[\frac{a(\bw)}{f(\bw)}\right] \geq \frac{\gamma}{\delta}\frac{k-\delta}{k}\alpha.
\end{align}
\end{lemma}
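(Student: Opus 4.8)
The plan is to condition throughout on the \textbf{core} event $\cE_0 = \{\forall e \in E : w_e \leq \tau\}$, whose probability is exactly $\gamma$ by the defining Equation~\eqref{eq:t}. On $\cE_0$ the active branch of \Cref{general} is the one running $\AALG$, and since no weight ever exceeds $\tau$ the rejection rule never triggers, so $a(\bw) = a_{\AALG}(\bw)$; moreover $\bw = \overline{\bw}$, and by \Cref{obs:dists} the law of $\bw$ given $\cE_0$ is exactly $\overline{D}$. Dropping the non-negative contribution off $\cE_0$, I would first reduce the lemma to a statement about the truncated distribution:
\begin{align*}
    \E\left[\frac{a(\bw)}{f(\bw)}\right] \geq \Pr[\cE_0]\cdot\E\left[\frac{a_{\AALG}(\bw)}{f(\bw)} \given \cE_0\right] = \gamma\cdot\E_{\overline{\bw}\sim\overline{D}}\left[\frac{a_{\AALG}(\overline{\bw})}{f(\overline{\bw})}\right],
\end{align*}
so that it suffices to prove $\E_{\overline{\bw}}\!\left[a_{\AALG}(\overline{\bw})/f(\overline{\bw})\right] \geq \frac{\alpha}{\delta}\cdot\frac{k-\delta}{k}$.

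Next I would bring in concentration. Because $\overline{\bw}$ is supported on $[0,\tau]^{|E|}$, \Cref{prop:bounding} tells us that $f/\tau$ is self-bounding there, so the BLM inequality (\Cref{blm}) applies to $g := f(\overline{\bw})/\tau$, whose mean is $W/\tau > c$. The precise role of the hypothesis $c \geq \frac{4+2\delta}{3(\delta-1)^2}\log\frac{k}{\alpha}$ is to force the upper-tail exponent at level $\delta W$ above $\log(k/\alpha)$: taking $z = (\delta-1)W/\tau$ in the upper-tail estimate yields
\begin{align*}
    \Pr\!\left[f(\overline{\bw}) \geq \delta W\right] \leq \exp\!\left(-\frac{3(\delta-1)^2\,(W/\tau)}{4+2\delta}\right) \leq \frac{\alpha}{k}.
\end{align*}
This is the concentration statement I would isolate as a separate claim (\Cref{blm-opt}). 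Then, on the good event $\{f(\overline{\bw}) \leq \delta W\}$ I use $a_{\AALG}(\overline{\bw})/f(\overline{\bw}) \geq a_{\AALG}(\overline{\bw})/(\delta W)$, together with the black-box guarantee $\E_{\overline{\bw}}[a_{\AALG}(\overline{\bw})] \geq \alpha W$ from Equation~\eqref{eq:assumption}, to get
\begin{align*}
    \E_{\overline{\bw}}\left[\frac{a_{\AALG}(\overline{\bw})}{f(\overline{\bw})}\right] \geq \frac{1}{\delta W}\,\E\!\left[a_{\AALG}(\overline{\bw})\indicator{f(\overline{\bw})\leq\delta W}\right] \geq \frac{1}{\delta W}\left(\alpha W - \E\!\left[a_{\AALG}(\overline{\bw})\indicator{f(\overline{\bw})>\delta W}\right]\right).
\end{align*}
If the tail contribution satisfies $\E[a_{\AALG}(\overline{\bw})\indicator{f(\overline{\bw})>\delta W}] \leq \frac{\alpha\delta W}{k}$, then the right-hand side is exactly $\frac{1}{\delta W}\left(\alpha W - \frac{\alpha\delta W}{k}\right) = \frac{\alpha}{\delta}\cdot\frac{k-\delta}{k}$, and multiplying by $\gamma$ closes the argument.

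The crux — and the step I expect to be the hardest — is this final tail bound, i.e.\ converting a ratio-of-expectations guarantee ($\E[a_{\AALG}] \geq \alpha\,\E[f]$) into one on the expected ratio. The naive estimate $a_{\AALG} \leq f$ reduces it to controlling the \emph{tail expectation} $\E[f(\overline{\bw})\indicator{f>\delta W}]$ of the offline optimum, and this quantity is strictly larger than $\delta W\,\Pr[f>\delta W]$, so the single tail estimate above does not suffice by itself. I would handle it by peeling $f$ into geometric bands $\{\delta^{j}W < f \leq \delta^{j+1}W\}$ and applying the BLM upper-tail bound on each band — this is the ``repeatedly applied'' concentration, which I would package as \Cref{cl:eaw}. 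Since on the $j$-th band the value grows only like $\delta^{j+1}$ while the tail exponent grows essentially like $\delta^{j}$, the resulting series is dominated by its first term, and the hypothesis on $c$ is calibrated precisely so that the whole tail expectation stays within the budget $\frac{\alpha\delta W}{k}$ needed above. Carefully tracking the constants across these bands, rather than any conceptual difficulty, is where the real care is required.
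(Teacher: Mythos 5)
Your skeleton coincides with the paper's: the factor $\gamma=\Pr[\cE_0]$ and the passage to the truncated distribution $\overline{D}$ (via \Cref{obs:dists}) open the paper's proof in the same way; your tail estimate with $z=(\delta-1)W/\tau$ is literally the paper's Claim~\ref{blm-opt}; and your indicator decomposition $\E\left[a\,\bone_{\{f\le\delta W\}}\right] \ge \alpha W - \E\left[a\,\bone_{\{f>\delta W\}}\right]$, combined with $a\le f$, is a (slightly cleaner) variant of the paper's Claim~\ref{cl:eaw}, which runs through conditional expectations and the extra factor $\left(1-\Pr\left[f>\delta W\mid\cE_0\right]\right)$ instead. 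The one place you diverge is the step you yourself single out as the crux: bounding the tail expectation $\E\left[f\,\bone_{\{f>\delta W\}}\mid\cE_0\right]$.

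There your proposal has a genuine quantitative gap. With bands at ratio $\delta$, the $j=1$ band alone is bounded only by $\delta^2 W\cdot\Pr\left[f>\delta W\mid\cE_0\right]$, and at the boundary value $c=\frac{4+2\delta}{3(\delta-1)^2}\log\frac{k}{\alpha}$ the BLM exponent at level $\delta W$ gives exactly $\alpha/k$ with no slack; so your series is at best about $\delta^2\frac{\alpha}{k}W$, overshooting the claimed budget $\frac{\delta\alpha}{k}W$ by a factor $\delta$, and the resulting bound $\frac{\gamma\alpha}{\delta}\bigl(1-\frac{\delta^2}{k}-\cdots\bigr)$ is vacuous precisely at the parameters the paper needs downstream in \Cref{alg-analysis} ($\delta=2$, $k=3$, where $k<\delta^2$). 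The paper's Claim~\ref{cl:eaw} avoids this by taking the continuous limit of your peeling: it writes the tail as $\int_{\delta W}^{\infty}\Pr\left[f>z\mid\cE_0\right]dz$, bounds the integrand by $e^{-\frac{3(\delta-1)}{4+2\delta}\frac{z}{\tau}}$ (since $W\le z/(\delta-1)$ on the integration range), and --- the move your peeling misses --- uses the case hypothesis $W>c\tau$ a \emph{second} time to convert the prefactor $\frac{(4+2\delta)\tau}{3(\delta-1)}$ into at most $(\delta-1)W/\log(k/\alpha)$, landing at $(\delta-1)\frac{\alpha}{k}W<\frac{\delta\alpha}{k}W$; morally this is banding at ratio $1+\epsilon\to 1$, and the fineness is exactly what removes your factor-$\delta$ loss. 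One caveat that vindicates your caution: the identity $\E\left[X\bone_{\{X>t\}}\right]=t\Pr[X>t]+\int_t^{\infty}\Pr[X>z]dz$ carries a boundary term $\delta W\Pr\left[f>\delta W\mid\cE_0\right]$ that the paper's display in Claim~\ref{cl:eaw} silently drops; keeping it costs another $\delta\frac{\alpha}{k}W$ and forces a re-balancing of $k$ to recover the lemma's stated constant --- so this step demands exactly the care you anticipated, in both your argument and the paper's.
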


To prove \Cref{newcase2}, we will make use of the two following claims.

\begin{claim}\label{blm-opt}
For all constants $\delta > 1, k \geq 1$ and $c \geq \frac{4 + 2\delta}{3(\delta - 1)^2} \log \frac{k}{\alpha}$, if $W > c \cdot \tau$, then we have
\begin{align*}
    \Pr\left[f(\bw) > \delta W \mid \cE_0\right] \leq \frac{\alpha}{k}.
\end{align*}
\end{claim}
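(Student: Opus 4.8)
The plan is to condition everything on the core event $\cE_0$ and then apply the concentration inequality for self-bounding functions (\Cref{blm}) to the normalized optimum $f/\tau$. First I would invoke \Cref{obs:dists}: conditioning on $\cE_0$ is exactly the same as drawing $\bw$ from the truncated product distribution $\overline{D}$, which is supported on $[0,\tau]^{|E|}$. In particular this identifies the conditional mean, $\E[f(\bw) \mid \cE_0] = \E[f(\overline{\bw})] = W$, so the threshold $\delta W$ we want to exceed is precisely $\delta$ times the conditional mean.

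Next I would set $g := f/\tau$. By part~3 of \Cref{prop:bounding}, $g$ restricted to the domain $[0,\tau]^{|E|}$ is self-bounding, and under the conditional measure $\Pr[\,\cdot \mid \cE_0]$ the weight vector lives exactly on this domain, so \Cref{blm} is applicable with conditional mean $\E[g \mid \cE_0] = W/\tau$. Writing $\mu := W/\tau$, the event $\{f(\bw) > \delta W\}$ is contained in $\{g \geq \mu + (\delta-1)\mu\}$, so I would apply the upper-tail bound of \Cref{blm} with deviation $z = (\delta-1)\mu$.

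Plugging $z=(\delta-1)\mu$ into the exponent $3z^2/(6\E[g]+2z)$ and simplifying, one factor of $\mu$ cancels, leaving the exponent $3(\delta-1)^2\mu/(4+2\delta)$. The case hypothesis $W > c\tau$ gives $\mu = W/\tau > c$, so this exponent is at least $3(\delta-1)^2 c/(4+2\delta)$, and the prescribed lower bound $c \geq \frac{4+2\delta}{3(\delta-1)^2}\log\frac{k}{\alpha}$ makes it at least $\log(k/\alpha)$. Hence
\[
\Pr\left[f(\bw) > \delta W \mid \cE_0\right] \;\leq\; e^{-\frac{3(\delta-1)^2 \mu}{4+2\delta}} \;\leq\; e^{-\log(k/\alpha)} \;=\; \frac{\alpha}{k},
\]
which is the claim.

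There is essentially no hard part once the reduction to self-boundedness is set up; the two points requiring care are (i) justifying that conditioning on $\cE_0$ confines us to the domain $[0,\tau]^{|E|}$ on which \Cref{prop:bounding}(3) actually guarantees self-boundedness, and (ii) matching the algebra in the BLM exponent to the precise constant prescribed for $c$. Indeed, the value of $c$ is reverse-engineered exactly so that the exponent equals $\log(k/\alpha)$, which is why the stated lower bound on $c$ appears verbatim in the hypothesis.
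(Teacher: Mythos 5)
Your proof is correct and follows essentially the same route as the paper's: both apply the upper-tail bound of \Cref{blm} to the self-bounding function $f/\tau$ with deviation $z=(\delta-1)W/\tau$, use $W/\tau > c$ to lower-bound the exponent, and let the choice of $c$ yield $e^{-\log(k/\alpha)}=\alpha/k$. Your extra care in invoking \Cref{obs:dists} to identify the conditional law with the product distribution $\overline{D}$ on $[0,\tau]^{|E|}$ (which is what legitimizes applying \Cref{blm} under the conditioning) is a point the paper leaves implicit, but it is not a different argument.
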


\begin{proof}
By \Cref{blm}, choosing $z = \frac{1}{\tau}\left(\delta - 1\right)W$, we have
\begin{align*}
    \Pr\left[f(\bw) > \delta W \mid \cE_0\right]  = \Pr\left[\frac{f(\bw)}{\tau} > \frac{\delta W}{\tau} \given \cE_0\right]\leq e^{-\frac{3(\delta - 1)^2}{4 + 2\delta} \frac{W}{\tau}} \leq e^{-\frac{3(\delta - 1)^2}{4 + 2\delta}c} \leq \frac{\alpha}{k},
\end{align*}
since $W > c \cdot \tau$ by assumption, and since $c \geq \frac{4 + 2\delta}{3(\delta - 1)^2} \log \frac{k}{\alpha}$.
\end{proof}

\begin{claim} \label{cl:eaw}
For all constants $\delta > 1, k \geq 1$ and $c \geq \frac{4 + 2\delta}{3(\delta - 1)^2} \log \frac{k}{\alpha}$, if $W > c \cdot \tau$, then we have
\begin{align*}
    \E\left[a(\bw) \given f(\bw) \leq \delta W, \cE_0\right] \geq \left(1 - \frac{\delta - 1}{k}\right)\alpha W.
\end{align*}
\end{claim}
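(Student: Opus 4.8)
The plan is to transport the $\RoE$ guarantee of $\AALG$ on the truncated distribution, namely $\E\!\left[a(\bw)\mid\cE_0\right]\geq\alpha W$ (which is exactly Equation~\eqref{eq:assumption}, using $W=\E\!\left[f(\bw)\mid\cE_0\right]$), to the further-conditioned event $\{f(\bw)\le\delta W\}$. The point is that this event differs from $\cE_0$ only on a low-probability tail that is already controlled by \Cref{blm-opt}. First, I would split the conditional expectation of the algorithm according to the size of the offline optimum:
\begin{align*}
\E\!\left[a(\bw)\mid\cE_0\right] = \E\!\left[a(\bw)\,\mathbf{1}_{\{f\le\delta W\}}\mid\cE_0\right] + \E\!\left[a(\bw)\,\mathbf{1}_{\{f>\delta W\}}\mid\cE_0\right].
\end{align*}
Since the conditioning probability is at most one and $a(\bw)\geq 0$, we have $\E\!\left[a(\bw)\mid f\le\delta W,\cE_0\right]\geq\E\!\left[a(\bw)\,\mathbf{1}_{\{f\le\delta W\}}\mid\cE_0\right]$, so it suffices to lower bound the indicator term.

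Rearranging the split and invoking the assumption $\E\!\left[a(\bw)\mid\cE_0\right]\geq\alpha W$ reduces the statement to controlling the tail contribution,
\begin{align*}
\E\!\left[a(\bw)\,\mathbf{1}_{\{f\le\delta W\}}\mid\cE_0\right]\geq\alpha W - \E\!\left[a(\bw)\,\mathbf{1}_{\{f>\delta W\}}\mid\cE_0\right].
\end{align*}
Here I would use that the algorithm never exceeds the prophet, $a(\bw)=\bw(\ALG(\bw))\le f(\bw)$ (because $\ALG(\bw)\in\cF$), to pass from the algorithm to the offline optimum, so that $\E\!\left[a(\bw)\,\mathbf{1}_{\{f>\delta W\}}\mid\cE_0\right]\le\E\!\left[f(\bw)\,\mathbf{1}_{\{f>\delta W\}}\mid\cE_0\right]$. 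Everything then comes down to bounding the truncated tail expectation of $f$ itself.

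For the tail expectation I would write, conditioned on $\cE_0$,
\begin{align*}
\E\!\left[f(\bw)\,\mathbf{1}_{\{f>\delta W\}}\right] = W\cdot\Pr\!\left[f>\delta W\right] + \E\!\left[(f-W)\,\mathbf{1}_{\{f>\delta W\}}\right],
\end{align*}
bounding the first summand by $\tfrac{\alpha}{k}W$ via \Cref{blm-opt}, and the overshoot $\E\!\left[(f-W)^+\right]$ via the exponential upper tail of \Cref{blm} applied to the self-bounding function $f/\tau$ (\Cref{prop:bounding}), integrated over thresholds $\beta W$ with $\beta>\delta$ (a direct generalization of \Cref{blm-opt}). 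Substituting back would yield $\E\!\left[a(\bw)\mid f\le\delta W,\cE_0\right]\geq\left(1-\tfrac{\delta-1}{k}\right)\alpha W$.

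The main obstacle is pinning down the precise constant. The crude estimate $a\le f$ leaves an unavoidable floor of $\delta W\cdot\Pr[f>\delta W]$ in the tail expectation, which by itself only delivers the weaker factor $\left(1-\tfrac{\delta}{k}\right)$; sharpening this to the stated $\left(1-\tfrac{\delta-1}{k}\right)$ requires separating the level-$W$ mass $W\cdot\Pr[f>\delta W]$ from the overshoot $\E\!\left[(f-W)^+\right]$ and arguing that the latter is negligible. This is exactly where the strong concentration afforded by the hypothesis $c\geq\frac{4+2\delta}{3(\delta-1)^2}\log\frac{k}{\alpha}$ (equivalently $W/\tau\geq c$) is essential, since it forces $f$ to concentrate tightly around $W\ll\delta W$. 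All the remaining steps are routine manipulations of conditional expectations.
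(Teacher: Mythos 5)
Your skeleton matches the paper's proof step for step: the law of total expectation over the event $\{f(\bw)\le\delta W\}$ under $\cE_0$, dropping the conditioning probability (which is at most $1$), passing from the algorithm to the prophet via $a(\bw)\le f(\bw)$, and then bounding the truncated tail expectation of $f$ by combining Claim \ref{blm-opt} with an integrated tail bound from \Cref{blm}. The divergence is exactly at the ``floor'' you flag in your last paragraph, and there your proposal has a genuine gap: the overshoot is \emph{not} negligible. Conditioned on the event $\{f(\bw)>\delta W\}$, the overshoot beyond $W$ is pointwise at least $(\delta-1)W$, so
\begin{align*}
\E\left[(f(\bw)-W)\cdot\bone_{\{f(\bw)>\delta W\}}\given\cE_0\right]\;\geq\;(\delta-1)W\cdot\Pr\left[f(\bw)>\delta W\given\cE_0\right],
\end{align*}
and Claim \ref{blm-opt} only caps this probability at $\alpha/k$; nothing in the hypothesis forces it far below that. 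Concentration of $f$ around $W$ controls how \emph{often} the tail event occurs, not the conditional size of the overshoot, which is $\geq(\delta-1)W$ by the very definition of the event. Carried out rigorously, your decomposition gives
\begin{align*}
\E\left[f(\bw)\cdot\bone_{\{f(\bw)>\delta W\}}\given\cE_0\right]\leq W\tfrac{\alpha}{k}+(\delta-1)W\cdot\Pr\left[f(\bw)>\delta W\given\cE_0\right]+\int_{(\delta-1)W}^{\infty}\Pr\left[f(\bw)>W+z\given\cE_0\right]dz\leq(2\delta-1)\tfrac{\alpha}{k}W,
\end{align*}
hence only the factor $\left(1-\frac{2\delta-1}{k}\right)$, not the claimed $\left(1-\frac{\delta-1}{k}\right)$.

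That said, your instinct is pointing at a real issue in the paper itself. The paper's derivation in Equation~\eqref{eq:blm-exp-opt} uses the identity $\E[f\mid f>\delta W,\cE_0]=\frac{1}{\Pr[f>\delta W\mid\cE_0]}\int_{\delta W}^{\infty}\Pr[f>z\mid\cE_0]dz$, which silently drops precisely the floor term $\delta W\cdot\Pr[f>\delta W\mid\cE_0]$ (the correct identity for a nonnegative random variable is $\E[f\cdot\bone_{\{f>t\}}]=t\Pr[f>t]+\int_t^{\infty}\Pr[f>z]dz$). So the stated constant $\left(1-\frac{\delta-1}{k}\right)$ is an artifact of that slip; an honest execution of either your route or the paper's yields $\left(1-\frac{2\delta-1}{k}\right)$. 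This is quantitatively harmless downstream --- \Cref{newcase2} then yields $\frac{\gamma}{\delta}\cdot\frac{k-2\delta}{k}\alpha$ for $k>2\delta$, and the parameters in the proof of \Cref{alg-analysis} can be re-tuned (e.g., $k=6$ instead of $k=3$) at the cost of a somewhat worse absolute constant --- but your proposal, as written, cannot reach the stated bound, and the step ``strong concentration forces the overshoot to be negligible'' is where it breaks.
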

\begin{proof}
We know that 
\begin{align}
    \alpha \cdot W \leq \E\left[a(\bw) \given \cE_0\right]&= \E\left[a(\bw) \given f(\bw) \leq \delta W, \cE_0\right] \cdot \Pr\left[f(\bw) \leq \delta W \mid \cE_0\right] \nonumber\\
    &+ \E\left[a(\bw) \given f(\bw) > \delta W, \cE_0\right] \cdot \Pr\left[f(\bw) > \delta W \mid \cE_0\right] \nonumber\\
    &\leq \E\left[a(\bw) \given f(\bw) \leq \delta W, \cE_0\right]  \nonumber \\
    &+ \E\left[f(\bw) \given f(\bw) > \delta W, \cE_0\right] \cdot \Pr\left[f(\bw) > \delta W \mid \cE_0\right], 
    \label{eq:eaw1}
\end{align}
where the first inequality is by Equation~\eqref{eq:assumption},  the equality is by the Law of Total Expectation, and the second inequality is since a probability is bounded by $1$, and $a(\bw) \leq f(\bw)$.

Next, we show that
\begin{align}
    \Pr&[f(\bw) > \delta W \mid \cE_0] \cdot \E[f(\bw) \mid f(\bw) > \delta W, \cE_0] \nonumber\\
    &= \Pr[f(\bw) > \delta W \mid \cE_0] \cdot \frac{1}{\Pr[f(\bw) > \delta W \mid \cE_0]} \cdot \int_{\delta W}^{\infty}{\Pr[f(\bw) > z \mid \cE_0]dz} \nonumber\\
    &= \int_{\delta W}^{\infty}{\Pr[f(\bw) > z \mid \cE_0 ]dz} = \int_{(\delta-1) W}^{\infty}{\Pr[f(\bw) > W + z \mid \cE_0]dz}\nonumber\\
    &= \int_{(\delta-1) W}^{\infty}{\Pr[\frac{f(\bw)}{\tau} > \frac{1}{\tau}(W + z) \mid \cE_0]dz} \stackrel{(1)}{\leq} \int_{(\delta-1) W}^{\infty}{e^{-\frac{3(z/\tau)^2}{6W/\tau + 2z/\tau}} dz} \nonumber\\
    &\stackrel{(2)}{\leq} 
    \int_{(\delta-1) W}^{\infty}{e^{-\frac{3(\delta-1)}{4 + 2\delta}\frac{z}{\tau}} dz} = \frac{4 + 2\delta}{3(\delta-1)}\tau e^{-\frac{3(\delta-1)^2}{4 + 2\delta}\frac{W}{\tau}} \stackrel{(3)}{<}\frac{4 + 2\delta}{3c(\delta-1)} W  e^{-\frac{3(\delta-1)^2}{4 + 2\delta}c} \nonumber\\
    & \stackrel{(4)}{\leq} (\delta-1)\frac{\alpha}{k}W.\label{eq:blm-exp-opt}
\end{align}
Here, (1) follows from applying \Cref{blm}, (2) from noticing that $W \leq z/(\delta-1)$ within the integral limits, (3) from  the fact that $W > c \cdot \tau$, (4) since $c \geq \frac{4 + 2\delta}{3(\delta - 1)^2} \log \frac{k}{\alpha}$.

Combining Equations~\eqref{eq:eaw1} and \eqref{eq:blm-exp-opt}, we get
\begin{align*}
    \E\left[a(\bw) \given f(\bw) \leq \delta W, \cE_0\right] \geq \left(1 - \frac{\delta - 1}{k}\right)\alpha W.
\end{align*}
\end{proof}

\begin{proof}[Proof of \Cref{newcase2}]
We conservatively assume that if there is at least one element with a weight exceeding $\tau$, then the contribution of the algorithm is $0$. By Claim \ref{gamma-lemma}, $\Pr\left[\cE_0\right] = \gamma$.
We then have that:
\begin{align}
        \E\left[\frac{a(\bw)}{f(\bw)}\right] &\geq \E\left[\frac{a(\bw)}{f(\bw)} \given \cE_0\right] \cdot \Pr\left[\cE_0\right] \nonumber \\
        &\geq \E\left[\frac{a(\bw)}{f(\bw)} \given f(\bw) \leq \delta W, \cE_0\right] \cdot \Pr\left[f(\bw) \leq \delta W  \mid \cE_0\right] \cdot \Pr\left[\cE_0\right]\nonumber\\
        &\geq \frac{\E\left[a(\bw) \mid f(\bw) \leq \delta W, \cE_0\right]}{\delta W} \cdot \left(1 - \Pr\left[f(\bw) > \delta W \mid \cE_0\right]\right) \cdot   \gamma\nonumber\\
        &=  \frac{\gamma}{\delta W} \cdot \left(1 - \Pr\left[f(\bw) > \delta W \mid \cE_0\right]\right) \cdot \E\left[a(\bw) \mid f(\bw) \leq \delta W, \cE_0\right], \label{eq:awfw}
\end{align}
We bound $\Pr\left[f(\bw) > \delta W \mid \cE_0\right]$ from above in Claim \ref{blm-opt}, and $\E\left[a(\bw) \mid f(\bw) \leq \delta W, \cE_0\right]$ from below in Claim \ref{cl:eaw}.

Combining Equation~\eqref{eq:awfw} with Claim \ref{blm-opt} and Claim \ref{cl:eaw}, we get that
\begin{align*}
        \E\left[\frac{a(\bw)}{f(\bw)}\right] &\geq \frac{\gamma}{\delta} \left(1 - \frac{\alpha}{k}\right)\left(1 - \frac{\delta - 1}{k}\right)\alpha \geq \frac{\gamma}{\delta}\frac{k-\delta}{k}\alpha,
\end{align*}
which concludes the proof.
\end{proof}

To complete the proof of \Cref{alg-analysis}, we need to carefully choose the parameters of the previous claims and lemmas (including the threshold of \Cref{eq:t}, used for the case distinction and beyond) so that the multiplicative loss from the two cases balances to a (relatively good) constant.

\begin{proof}[Proof of \Cref{alg-analysis}]
If $\alpha=\RoE(\cF)$, then there exists an algorithm $\AALG$ that satisfies Equation~\eqref{eq:assumption}.
\Cref{general} guarantees the minimum between the expected competitive ratios of \Cref{newcase1} and \Cref{newcase2}. Thus, we need to find parameters $\gamma,\delta,k,c $ that maximize the following  constrained optimization problem:
\begin{align*}
    \mbox{ \text{maximize} } \quad \quad &\min\left\{\frac{\gamma \log\left(1/\gamma\right)}{c+1}, \frac{\gamma}{\delta}\frac{k-\delta}{k}\alpha\right\}\\
    \text{subject to} \quad \quad & \enspace \gamma \in (0,1), \delta > 1, k > 2,  c \geq \frac{4 + 2\delta}{3(\delta - 1)^2} \log \frac{k}{\alpha}.
\end{align*}
The only parameter we do not control is $
\alpha$, as it depends on the feasibility structure $\cF$. If we choose $\gamma = 1/2, \delta = 2, k = 3, c = \frac{8}{3}\log\frac{3}{\alpha}$ all the constraints are satisfied, and we  get
\begin{align*}
    \EoR \geq \min\left\{\frac{\frac{1}{2} \log 2}{\frac{8}{3} \log \frac{3}{\alpha} +1}, \frac{\alpha}{12}\right\} = \frac{\alpha}{12}.
\end{align*}
This concludes the proof.
\end{proof}
\section{From \EoR { }to \RoE: Instance-Based Reduction}\label{sec:reduction2}
In this section, we show an instance-based reduction from  $\EoR$ to $\RoE$. Unlike \Cref{alg-analysis}, our next result shows that the $\RoE$ is always at least a constant fraction of the $\EoR$, for every pair of (downward-closed) feasibility constraint $\cF$ and product distribution $D$. We will assume the existence of an algorithm $\ALGEOR(\bw)$ that satisfies $\E_{\bw \sim D}\left[a_{\ALGEOR}(\bw)/f(\bw)\right] \geq \alpha$. For ease of notation, we denote the  value of \Cref{alg:eor-roe} on $\bw$ by $ a(\bw)$.

\RestyleAlgo{ruled}
\begin{algorithm}
\caption{$\EoR\text{-to-}\RoE$}\label{alg:eor-roe}
\KwData{Ground set $E$, distributions $D_e$, feasibility family $\cF$, a subroutine $\ALGEOR$ }
\textbf{Assumption:} $\ALGEOR(\bw)$ satisfies that $\E_{\bw \sim D}\left[a_{\ALGEOR}(\bw)/f(\bw)\right] = \alpha$


\KwResult{Subset $\ALG(\bw) \subseteq E$ such that
$\ALG(\bw) \in \cF$}
Let $A := \E_{w \sim D}[\max_{e \in E} w_e]$\;

\eIf{$A \geq \alpha \cdot 
\E_{\bw\sim D}[f(\bw)]/34$}
{
    Return the first element $e^* \in E$ such that $w_{e^*} \geq \frac{A}{2}$\;
    If no such element exists, return $\emptyset$\;
}{
    Run
    $\ALGEOR(\bw)$\;
}
\end{algorithm}

\begin{theorem}\label{alg-analysis-rev-new}
For every downward-closed family of feasibility constraints $\cF$, and every product distribution $D$ it holds that
\begin{align}
\RoE(\cF,D) \geq \frac{\EoR(\cF,D)}{68}.
\end{align}
\end{theorem}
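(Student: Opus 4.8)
The plan is to analyze the two branches of \Cref{alg:eor-roe} separately and show that in each branch the set $\ALG(\bw)$ it returns satisfies $\E[\bw(\ALG(\bw))]\ge\frac{\alpha}{68}\,\E[f(\bw)]$, where $\alpha=\EoR(\cF,D)$ (so that $\ALGEOR$ achieves $\E[a(\bw)/f(\bw)]=\alpha$). Since $\RoE(\cF,D)\ge \E[\bw(\ALG(\bw))]/\E[f(\bw)]$, this proves the theorem. Throughout write $\OPT:=\E[f(\bw)]$ and $A:=\E[\max_e w_e]$, and note that every singleton $\{e^*\}$ returned by the first branch is feasible, since any element lying in no feasible set can be discarded without changing $f$.

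Branch~1 ($A\ge \alpha\OPT/34$) is the classical single-threshold argument with $\tau=A/2$. Letting $p:=\Pr[\max_e w_e\ge \tau]$, the rule ``accept the first element above $\tau$'' earns at least $\tau$ whenever it accepts and, by independence, collects the overflow $\sum_e\E[(w_e-\tau)^+]$ with the leftover probability, giving $\E[\ALG]\ge \tau p+(1-p)\sum_e\E[(w_e-\tau)^+]\ge \tau p+(1-p)(A-\tau)$, where the last step uses $A\le \tau+\E[(\max_e w_e-\tau)^+]\le \tau+\sum_e\E[(w_e-\tau)^+]$. For $\tau=A/2$ this equals $A/2\ge \alpha\OPT/68$, as desired.

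Branch~2 ($A< \alpha\OPT/34$) runs $\ALGEOR$, and the real work is to turn the ratio guarantee into an $\RoE$ bound. First I would record the elementary split, valid for any $\theta>0$ using $a\le f$:
\[ \alpha=\E\!\left[\tfrac{a}{f}\right]\le \tfrac{1}{\theta}\,\E[a\,\mathbf{1}_{\{f\ge\theta\}}]+\Pr[f<\theta]\le \tfrac{\E[a]}{\theta}+\Pr[f<\theta]. \]
Thus it suffices to take $\theta=\OPT/34$ and prove the lower-tail estimate $\Pr[f(\bw)<\theta]\le \alpha/2$, which then yields $\E[a]\ge \theta\,\alpha/2=\alpha\OPT/68$. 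To control the lower tail I invoke the self-bounding concentration of \Cref{blm} via \Cref{prop:bounding}(3), which requires bounded weights; so I truncate at an analysis scale $\tau$ (chosen below), set $\overline D_e:=D_e\mid w_e\le\tau$, and use the monotone coupling $\overline w_e=w_e$ if $w_e\le\tau$ and a fresh draw $\overline w_e\sim\overline D_e$ otherwise. Since $\overline w_e\le w_e$ pointwise and $f$ is monotone (\Cref{prop:bounding}(2)), we get $f(\overline\bw)\le f(\bw)$, hence $\Pr_D[f<\theta]\le \Pr_{\overline D}[f<\theta]$. Writing $\overline W:=\E_{\overline D}[f]$ and applying the lower-tail inequality of \Cref{blm} to the self-bounding $f/\tau$ gives $\Pr_{\overline D}[f<\theta]\le e^{-(\overline W-\theta)^2/(2\tau\overline W)}$.

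The hard part will be forcing this exponent to exceed $\log(2/\alpha)$, since the required concentration scale grows like $1/\log(1/\alpha)$: I cannot truncate at $\theta$ itself but must take $\tau\approx \OPT/\log(2/\alpha)$, far below $\OPT$, and then show that such aggressive truncation barely lowers the expected optimum, i.e.\ $\overline W\ge \OPT/2$. This is exactly where the Branch-2 hypothesis that $A$ is small is indispensable. Quantitatively, Markov gives $\Pr[\max_e w_e>\tau]\le A/\tau$, which (through $1-\prod_e(1-p_e)\ge \tfrac12\min(1,\sum_e p_e)$) forces $\sum_e\Pr[w_e>t]\le 1$ for all $t\ge\tau$ and hence $\sum_e\E[(w_e-\tau)^+]\le 2A$; the coupling discrepancy is then $\OPT-\overline W\le \sum_e\E[w_e\mathbf 1_{\{w_e>\tau\}}]=\sum_e\E[(w_e-\tau)^+]+\tau\sum_e\Pr[w_e>\tau]\le 2A+\tau\sum_e\Pr[w_e>\tau]$, and $A<\alpha\OPT/34$ makes both terms $O(\alpha\OPT)$, giving $\overline W\ge \OPT/2$. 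Plugging $\overline W\ge\OPT/2$, $\theta=\OPT/34$, and $\tau\approx\OPT/\log(2/\alpha)$ into the exponent $(\overline W-\theta)^2/(2\tau\overline W)$ makes it at least $\log(2/\alpha)$, so $\Pr[f<\theta]\le\alpha/2$ and the split above closes the case. I would emphasize that the only genuine obstacle is this tension between the concentration scale (which wants $\tau$ small) and the fidelity of truncation (which wants $\overline W\approx\OPT$); all remaining steps are constant bookkeeping, and the constant $34$ in \Cref{alg:eor-roe} is precisely what equalizes $A/2\ge \alpha\OPT/68$ in Branch~1 with $\theta\,\alpha/2=\alpha\OPT/68$ in Branch~2.
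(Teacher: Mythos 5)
Your proposal is correct and takes essentially the same route as the paper: the identical case split on whether $A \geq \alpha\,\E\left[f(\bw)\right]/34$, the Samuel--Cahn threshold argument giving $\E\left[a(\bw)\right] \geq A/2$ in the first branch, and in the second branch a truncation plus the BLM lower-tail inequality for the self-bounding normalized optimum to show $\Pr\left[f(\bw) < \Theta(\E\left[f(\bw)\right])\right] \leq \alpha/2$, finished by the standard conversion from the ratio guarantee to an expectation bound. The only substantive deviation is the truncation scale in the second branch---the paper truncates at $2A$, where the normalized mean $Q \geq 16/\alpha$ lets a constant-relative-deviation tail bound close the argument via $e^{-2/\alpha} \leq \alpha/2$, whereas you truncate at $\Theta\left(\E\left[f(\bw)\right]/\log(2/\alpha)\right)$ and aim the exponent directly at $\log(2/\alpha)$---and your constants do check out (in particular $\tau \geq 2A$ holds since $\alpha\log(2/\alpha) \leq 2/e$), so both parameterizations are valid.
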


\begin{proof}
Let $\alpha = \EoR(\cF,D)$, and let $A = \E_{w \sim D}[\max_{e \in E} w_e]$ (as per \Cref{alg:eor-roe}'s pseudocode). Our algorithm has two cases, depending on the value of $A$.

In the first case,  $A \geq \alpha \cdot \frac{\E[f(\bw)]}{34}$: \Cref{alg:eor-roe} sets a threshold of $\frac{A}{2}$ and selects the first element that exceeds it. The algorithm achieves
\begin{align*}
    \E\left[a(\bw)\right] \geq \frac{A}{2} \geq \alpha \cdot \frac{\E[f(\bw)]}{68},
\end{align*}
where the first inequality follows by the Prophet Inequality \citep{samuel-cahn}, and the last from the assumption on $A$ in this case.

Otherwise, we have that $A < \alpha \cdot \frac{\E[f(\bw)]}{34}$, in which case \Cref{alg:eor-roe} simply runs the $\ALGEOR$ subroutine as a blackbox. Before proceeding, we show the following claim, which will prove useful in the remainder of this proof. To this end, let $w_e' = w_e \cdot \bone_{w_e\leq 2A}$, for every $e \in E$, let $D_e'$ be the distribution of $w_e'$, and let $D'$ be their product distribution. 
\begin{claim}\label{cl:3}
It holds that $\E_{\bw\sim D}[\sum_{e \in E} w_e-w_e'] \leq 2\alpha \cdot \frac{\E_{w\sim D} [f(w)]}{34}$.  
\end{claim}
\begin{proof}
Let event $B_e = \{w_{e'} \leq 2A, \forall e'\neq e\}$. We have that
\begin{align}
    \E\left[\sum_{e \in E} w_e-w_e'\right] &= \E\left[\sum_{e \in E} w_e \cdot \bone_{w_e > 2A}\right] \leq 2 \cdot \E\left[\sum_{e \in E} w_e \cdot \bone_{w_e > 2A} \cdot \bone_{B_e} \right] \leq 2A, \label{eq:2a}
\end{align}
where the first inequality follows from noting that (1) $B_e$ is independent of the realization of $w_e$, and (2) $\Pr[B_e] \geq \frac{1}{2}$ since $\Pr[B_e] \geq \Pr[\max_{e \in E} w_e \leq 2A]$ and by applying Markov's Inequality on random variable $\max_{e \in E} w_e$. The second inequality follows from the fact that we are accounting for selecting the maximum only when there  is a unique  element whose weight exceeds $2A$ (which is at most the expected maximum). The claim then holds by combining the last inequality and the assumption on $A$ from being in this case.
\end{proof}

With this claim at hand, we can continue with the proof of the theorem. By \Cref{prop:bounding}, the function $\frac{f(\bw)}{2A}$ is self-bounding in the domain $[0,2A]^{|E|}$.
 Let  $Q = \frac{\E_{\bw' \sim D'} [f(\bw')]}{2A} $.  It holds that
 \begin{align}
     Q &= \frac{\E_{\bw' \sim D'} [f(\bw')]}{2A} > \frac{34 \cdot \E_{\bw' \sim D'} [f(\bw')]}{2\alpha \cdot \E_{\bw \sim D} [f(\bw)]} \nonumber\\
     &\geq \frac{17 \cdot \left(\E_{\bw \sim D} [f(\bw)] - 2A\right)}{\alpha \cdot \E_{\bw \sim D} [f(\bw)]} > \frac{17}{\alpha}\left(1 - \frac{2\alpha}{34}\right) \geq \frac{16}{\alpha} \label{eq:Q},   
 \end{align}
 where the first and third inequalities follow simply by using $A < \alpha \cdot \frac{\E[f(\bw)]}{34}$. The second inequality follows by Claim \ref{cl:3}, by Lipschitzness of $f$, Inequality~\eqref{eq:2a} and the value of $A$. The last inequality is since  $\alpha 
 \leq 1$. We, therefore, have that
 \begin{equation}
     \Pr_{\bw' \sim D'}\left[\frac{f(\bw')}{2A} \leq \frac{Q}{2}  \right] \leq e^{-Q/8} \stackrel{\eqref{eq:Q}}{\leq} e^{-2/\alpha} \leq \frac{\alpha}{2}, \label{eq:alpha2}
 \end{equation} 
 where the first inequality follows from \Cref{blm}, and the last inequality is by the identity $e^{-x} \leq 1/x$ for all $x > 0$. Let $V = \E_{\bw\sim D}[f(\bw)]$, 
 then it holds that 
 \begin{equation}
     \frac{V}{4} \leq \frac{\E_{\bw \sim D} [f(\bw) ]- 2A}{2} \leq \frac{\E_{\bw' \sim D'} [f(\bw')]}{2} = Q\cdot A, \label{eq:qw}
 \end{equation}
 where the first inequality is by using $A < \alpha \cdot \frac{\E[f(\bw)]}{34}$, and the second inequality is again by Claim \ref{cl:3}, by Lipschitzness of $f$, Inequality~\eqref{eq:2a} and the value of $A$.
Thus,
\begin{equation}
        \Pr_{\bw\sim D}\left[f(\bw) \leq \frac{V}{4}  \right]  \leq \Pr_{\bw'\sim D'}[f(\bw') \leq  Q \cdot A ] \stackrel{\eqref{eq:alpha2}}{\leq}  \frac{\alpha}{2}, \label{eq:alpha22}
\end{equation}
where the first inequality holds since $f$ is monotone, since $\bw'\leq \bw$ component-wise, and since by Equation~\eqref{eq:qw}, $\frac{V}{4}\leq  Q\cdot A$. 
We next bound $\E_{\bw\sim D} \left[\frac{a(\bw)}{f(\bw)} \mid f(\bw) \geq \frac{V}{4}\right]$. By definition of $\alpha$, it holds that 
\begin{align*}
 \alpha & \leq  \E_{\bw\sim D} \left[\frac{a(\bw)}{f(\bw)}\right] \\
 &=\E_{\bw\sim D} \left[\frac{a(\bw)}{f(\bw)} \mid f(\bw)<\frac{V}{4}\right] \cdot \Pr\left[f(\bw)<\frac{V}{4}\right] \\
 &+\E_{\bw\sim D} \left[\frac{a(\bw)}{f(\bw)} \mid f(\bw) \geq \frac{V}{4}\right] \cdot \Pr\left[f(\bw)\geq \frac{V}{4}\right] \\
 &\leq \frac{\alpha}{2}  +\E_{\bw\sim D} \left[\frac{a(\bw)}{f(\bw)} \mid f(\bw) \geq \frac{V}{4}\right],
\end{align*}
which by rearranging, we get that 
\begin{equation}
\E_{\bw\sim D} \left[\frac{a(\bw)}{f(\bw)} \mid f(\bw) \geq \frac{V}{4}\right] \geq \frac{\alpha}{2}. \label{eq:alpha222}
\end{equation}
This implies that 
\begin{align*}
    \RoE(\cF,D) & \geq \E_{w\sim D} [a(\bw)] / V \geq \E_{\bw\sim D} \left[a(\bw) \mid f(\bw) \geq \frac{V}{4}\right] \cdot \Pr\left[f(\bw) \geq \frac{V}{4} \right]/V \\
    &\geq \frac{\alpha\cdot V}{8}\cdot \left(1-\frac{\alpha}{2}\right)/V \geq \frac{\alpha}{16},
\end{align*}
where the third inequality holds by \Cref{eq:alpha22,eq:alpha222}, and the last inequality is since $\alpha \leq 1$.
This concludes the proof.
\end{proof}

An immediate corollary of \Cref{alg-analysis-rev-new} is:
\begin{corollary}
For every downward-closed family of feasibility constraints $\cF$ it holds that
\begin{align*}
\RoE(\cF) \geq \frac{\EoR(\cF)}{68}.
\end{align*}
\end{corollary}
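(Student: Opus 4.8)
The plan is to derive this directly from \Cref{alg-analysis-rev-new} by taking an infimum over product distributions. Recall the definitions $\RoE(\cF) = \inf_D \RoE(\cF,D)$ and $\EoR(\cF) = \inf_D \EoR(\cF,D)$, where both infima range over all product distributions $D$ on the ground set $E$ (with $\cF$ fixed). The crucial observation is that \Cref{alg-analysis-rev-new} is an \emph{instance-based} statement: the bound $\RoE(\cF,D) \geq \EoR(\cF,D)/68$ holds for \emph{every} product distribution $D$ simultaneously, with a single universal constant $68$ that does not depend on $D$. This uniformity is exactly what lets us pass to the worst case.

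Concretely, I would argue as follows. Fix an arbitrary product distribution $D$. By \Cref{alg-analysis-rev-new}, $\RoE(\cF,D) \geq \frac{1}{68}\EoR(\cF,D) \geq \frac{1}{68}\inf_{D'}\EoR(\cF,D') = \frac{1}{68}\EoR(\cF)$, where the second inequality simply lower-bounds $\EoR(\cF,D)$ by the infimum over all product distributions. Since the resulting bound $\RoE(\cF,D) \geq \EoR(\cF)/68$ holds for every $D$ and its right-hand side is independent of $D$, taking the infimum over $D$ on the left yields $\RoE(\cF) = \inf_D \RoE(\cF,D) \geq \EoR(\cF)/68$, as desired.

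There is essentially no obstacle here, which is why the statement is flagged as an immediate corollary: the only point warranting any care is the (standard) monotonicity of the infimum, namely that $g(D) \geq h(D)$ for all $D$ implies $\inf_D g(D) \geq \inf_D h(D)$, invoked with $g = \RoE(\cF,\cdot)$ and $h = \EoR(\cF,\cdot)/68$. Note that I do not need the infima to be attained, nor do I need the worst-case distributions for $\RoE$ and $\EoR$ to coincide; indeed, as emphasized in the introduction, they generally differ. The argument goes through precisely because the per-instance inequality of \Cref{alg-analysis-rev-new} already absorbs this discrepancy at the level of each individual $D$, so the reduction survives taking worst cases on both sides independently.
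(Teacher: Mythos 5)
Your proposal is correct and is exactly the argument the paper has in mind: it states the corollary as an immediate consequence of \Cref{alg-analysis-rev-new} with no written proof, since the instance-based bound holds with a uniform constant for every product distribution $D$, so one passes to the infima on both sides just as you do. Your careful note that neither attainment of the infima nor coincidence of the worst-case distributions is needed is a fair (if implicit) reading of why the paper treats this as immediate.
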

The above corollary and  \Cref{alg-analysis} imply
\begin{corollary}\label{col:eorroe}
For every downward-closed family of feasibility constraints $\cF$ it holds that
\begin{align*}
\frac{\RoE(\cF)}{\EoR(\cF)} \in \Theta(1).
\end{align*}
\end{corollary}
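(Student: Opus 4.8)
The plan is to sandwich the ratio $\RoE(\cF)/\EoR(\cF)$ between two absolute constants by simply assembling the two reductions already established; no new argument is needed beyond combining them. First I would invoke \Cref{alg-analysis}, which asserts $\EoR(\cF) \geq \RoE(\cF)/12$; rearranging gives the upper bound $\RoE(\cF)/\EoR(\cF) \leq 12$. For the complementary direction I would use the corollary of \Cref{alg-analysis-rev-new} stated immediately above, namely $\RoE(\cF) \geq \EoR(\cF)/68$, which rearranges to the lower bound $\RoE(\cF)/\EoR(\cF) \geq 1/68$.

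Chaining the two inequalities yields
\begin{align*}
\frac{1}{68} \leq \frac{\RoE(\cF)}{\EoR(\cF)} \leq 12,
\end{align*}
so the ratio is bounded away from both $0$ and $\infty$ by absolute constants, i.e., $\RoE(\cF)/\EoR(\cF) \in \Theta(1)$, as claimed.

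Because both bounds are already proved, there is no genuine obstacle in this step; the entire difficulty is carried by the two underlying theorems (the tail-core split and self-bounding concentration for one direction, the threshold-versus-blackbox case analysis for the other). The only subtlety worth flagging is that $\EoR(\cF)$ and $\RoE(\cF)$ are each an infimum over distributions of a supremum over algorithms, with the worst-case distribution potentially differing between the two measures; one should therefore rely on the fact that each invoked statement is phrased directly in terms of the quantities $\EoR(\cF)$ and $\RoE(\cF)$ themselves, rather than only pointwise per distribution, so that the chained inequalities are legitimate. Since both \Cref{alg-analysis} and its companion corollary are stated at exactly this level of aggregation, the combination is immediate.
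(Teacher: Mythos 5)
Your proof is correct and follows exactly the paper's route: combining Theorem~\ref{alg-analysis} ($\EoR(\cF) \geq \RoE(\cF)/12$) with the corollary of Theorem~\ref{alg-analysis-rev-new} ($\RoE(\cF) \geq \EoR(\cF)/68$) to sandwich the ratio in $[1/68, 12]$. Your remark on the inf-sup aggregation level is a sensible sanity check and is consistent with how the paper states both ingredients.
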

\section{From \RoE~to \EoR: an Impossibility of Instance-Based Reduction}\label{sec:impossibility}
In this section, we show that an instance-based reduction from $\RoE$ to $\EoR$ is unachievable. This is in contrast to the reduction of \Cref{alg-analysis-rev-new} from $\EoR$ to $\RoE$. In particular, we show that there are  a feasibility constraint  $\cF$ and a product distribution $D$ for which $\RoE(\cF,D)$ is constant but $\EoR(\cF,D)$ is sub-constant. We show this using the following stronger claim: 
\begin{proposition}\label{prop:reduction}
For every feasibility constraint $\cF$  there exists a product distribution $D$ such that  $\EoR(\cF,D)  \in O\left(\RoE(\cF)\right)$ while  $\RoE(\cF,D)  \in \Omega\left(1\right)$. 
\end{proposition}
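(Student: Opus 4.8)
The plan is to exhibit, for a fixed feasibility constraint $\cF$, a single distribution on which the \emph{best} $\EoR$ is forced to be as bad as the worst-case $\RoE$ of $\cF$, while some trivial algorithm already secures a constant $\RoE$. The seed phenomenon is \Cref{ex:ex1}: a cheap ``grab the one safe element'' strategy can dominate the $\RoE$, whereas good-$\RoE$ behavior fails on the expected ratio. The subtlety is that in single-choice the \emph{best} $\EoR$ is still constant, so to drive the best $\EoR$ down we must graft this phenomenon onto the worst-case $\RoE$ instance of $\cF$, where even the best $\EoR$ is provably small.

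Concretely, let $\rho := \RoE(\cF)$ and fix a product distribution $D^*$ on the ground set $E$ of $\cF$ with $\RoE(\cF,D^*) \le 2\rho$ (such $D^*$ exists by definition of the infimum, using $\rho>0$); by scale-invariance normalize so that $\E_{\bw\sim D^*}[f(\bw)] = 1$. Since $\cF$ is downward-closed and non-trivial, it contains a singleton $\{e_0\}$. I build $D$ from $D^*$ by altering only coordinate $e_0$: with probability $1-\epsilon$ draw $w_{e_0}$ from $D^*_{e_0}$ (the \emph{core} component), and with probability $\epsilon := \rho$ draw it from a continuous bump around a value $M := 10/\rho$ (the \emph{bonanza} component); all other coordinates keep their $D^*$ marginals. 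Using a latent ``which component'' variable avoids point masses, and, crucially, conditioned on the core component the whole vector $\bw$ is distributed \emph{exactly} as $D^*$.

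For the $\EoR$ upper bound, fix any online algorithm and condition on the latent component. On the bonanza event the ratio $a(\bw)/f(\bw)$ is at most $1$, contributing at most $\epsilon$. On the core event the input law is exactly $D^*$, so the conditional expectation of $a(\bw)/f(\bw)$ equals the value of that same policy under $D^*$ and is therefore at most $\EoR(\cF,D^*)$; by \Cref{alg-analysis-rev-new} this is at most $68\,\RoE(\cF,D^*) \le 136\rho$. Taking the supremum over algorithms gives $\EoR(\cF,D) \le \epsilon + \EoR(\cF,D^*) = O(\rho) = O(\RoE(\cF))$. For the $\RoE$ lower bound, consider the feasible algorithm that always selects $e_0$: it earns $\E_D[w_{e_0}] \ge \epsilon M$, while $1$-Lipschitzness and monotonicity of $f$ (\Cref{prop:bounding}) bound $\E_D[f(\bw)] \le \E_D[w_{e_0}] + \E_{D^*}[f(\bw)] \le \epsilon M + O(1)$. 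With $\epsilon M = 10$ this yields $\RoE(\cF,D) \ge \epsilon M/(\epsilon M + O(1)) = \Omega(1)$.

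The main obstacle, and the reason the argument is not symmetric to \Cref{alg-analysis-rev-new}, is controlling the $\EoR$ against \emph{every} algorithm at once, not merely the one chasing the bonanza; this is exactly what the ``conditioning on the core recovers $D^*$'' identity delivers, reducing the worst-case $\EoR$ on $D$ to the worst-case $\EoR$ on the hard instance $D^*$, which is small precisely because $\RoE(\cF,D^*)$ is small. The remaining bookkeeping (taking the bump continuous to respect the no-point-mass convention, verifying $\rho>0$ so that $D^*$, $\epsilon$, and $M$ are well defined, and tracking universal constants so that \Cref{cor:noreduction} follows by choosing $\cF$ with $\RoE(\cF)\to 0$) is routine.
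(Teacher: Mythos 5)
Your proof is correct and takes essentially the same route as the paper's: both start from a near-worst-case distribution $D^*$ for $\RoE(\cF)$, plant a rare, huge reward on a single feasible element (your $\epsilon$-mixture bump at $M=10/\rho$ versus the paper's additive Bernoulli shift of magnitude $\E\left[\sum_{e} w_e\right]/x$), certify $\RoE(\cF,D)\in\Omega(1)$ via the algorithm that always selects that element, and bound $\EoR(\cF,D)$ by conditioning on the rare event and applying \Cref{alg-analysis-rev-new} to get $\EoR(\cF,D^*)\leq 68\,\RoE(\cF,D^*)$ on the complementary event, where the law is exactly $D^*$. The remaining differences (Lipschitz/monotonicity bookkeeping for $\E[f]$ instead of the crude bound $f(\bw)\leq \sum_e w_e$, and the continuous bump to respect the no-point-mass convention) are cosmetic.
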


\begin{proof}
    Let $\RoE(\cF) = x$. By definition of $\RoE(\cF)$ as an infimum over all product distributions $D$, there exists a product distribution $D$ such that $\RoE(\cF, D) \leq 
    \frac{69x}{68}$. We now consider the product distribution $D'$ constructed from leaving all elements' distributions unaltered but modifying the distribution $D_{e^*}$ of an arbitrary element $e^*$ into $D'_{e^*}$ as follows. Consider the random variable $w_{e^*}$ representing the weight of element $e^*$ and let 
    \begin{align*}
        w'_{e^*} := w_{e^*} + \frac{\E\left[\sum_{e \in E} w_e\right]}{x} \cdot X,
    \end{align*}
    where $X \sim \text{Bernoulli}(x)$. It is easy to see that $\RoE(\cF, D') \geq \frac{1}{2}$, since  the algorithm that always selects element $e^*$ obtains $\E[w'_{e^*}]\geq \E\left[\sum_{e \in E} w_e\right]$, while the prophet can achieve at most $2\cdot \E\left[\sum_{e \in E} w_e\right]$. To conclude, we have that
    \begin{align*}
       \EoR(\cF, D') & \leq  \Pr[X=1] \cdot 1 + \Pr[X=0] \cdot\EoR(\cF, D)  \\ & \leq x +\EoR(\cF, D) \leq x + 68 \cdot \RoE(\cF, D) \leq 70 x,
    \end{align*}
    where the first inequality above derives from the fact that, if the Bernoulli random variable $X$ is $1$, we upper bound the algorithm's performance by that of the optimum; otherwise, we upper bound the performance by $\EoR(\cF, D)$, the third inequality follows from \Cref{alg-analysis-rev-new}, and the last by recalling that $\RoE(\cF, D) \leq \frac{69x}{68}$. 
\end{proof}

We know by the example presented in Appendix B of \citep{rubinstein-matr} that is based on an example from \citep{BabaioffIK07} for a different setting, that there exists a feasibility constraint with $n$ elements such that $\RoE\left(\cF\right) \in O\left(\frac{\log\log(n)}{\log(n)}\right)$.
Combining \cref{prop:reduction} with this example for large enough $n$ implies that:
\begin{corollary}\label{cor:noreduction}
For every $\epsilon>0$, there exist a feasibility constraint $\cF$  and a product distribution $D$ in which $\EoR(\cF,D)  \leq \epsilon $ and  $\RoE(\cF,D)  \in \Omega(1)$.
\end{corollary}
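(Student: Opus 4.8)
The plan is to combine \Cref{prop:reduction} with the known hard instance whose ratio of expectations degrades with the number of elements. \Cref{prop:reduction} takes an arbitrary feasibility constraint $\cF$ and, writing $x = \RoE(\cF)$, produces a product distribution for which the ratio of expectations stays bounded away from zero while the expected ratio collapses to $O(x)$. Tracing the constants in its proof, the distribution it constructs satisfies $\RoE(\cF, D) \geq 1/2$ and $\EoR(\cF, D) \leq 70x$. Hence, to force $\EoR(\cF, D) \leq \epsilon$ while keeping $\RoE(\cF, D) \in \Omega(1)$, it suffices to start from a constraint $\cF$ whose worst-case $\RoE(\cF)$ is itself tiny. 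The construction from Appendix~B of \citep{rubinstein-matr} (adapted from \citep{BabaioffIK07}) supplies exactly such a family: for each $n$ it yields a downward-closed constraint $\cF_n$ on $n$ elements with $\RoE(\cF_n) \in O\left(\frac{\log\log n}{\log n}\right)$.

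Concretely, I would fix $\epsilon > 0$ and choose $n = n(\epsilon)$ large enough that $\RoE(\cF_n) \leq \epsilon/70$; this is possible because $\frac{\log\log n}{\log n} \to 0$ as $n \to \infty$, so for $n$ sufficiently large the implied constant in the $O(\cdot)$ bound is absorbed and the threshold $\epsilon/70$ is met. Setting $\cF = \cF_n$ and applying \Cref{prop:reduction} to this $\cF$, I obtain a product distribution $D$ with $\EoR(\cF, D) \leq 70 \cdot \RoE(\cF) \leq \epsilon$ and, simultaneously, $\RoE(\cF, D) \geq 1/2 \in \Omega(1)$. The pair $(\cF, D)$ then witnesses the corollary.

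The only point requiring care, and it is bookkeeping rather than a genuine obstacle, is to invoke \Cref{prop:reduction} in its explicit constant-tracking form, i.e.\ with the factor $70$ exhibited in its proof, rather than the $O(\cdot)$ phrasing of its statement; this is what lets me convert the asymptotic vanishing of $\RoE(\cF_n)$ into the hard threshold $\EoR(\cF, D) \leq \epsilon$. Beyond this, no separate argument is needed: \Cref{prop:reduction} already packages the $\EoR$-versus-$\RoE$ gap on a fixed constraint, and the cited example already packages a constraint family with arbitrarily small $\RoE(\cF)$, so the corollary is precisely their composition for a suitable choice of $n$.
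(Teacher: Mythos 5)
Your proposal is correct and follows essentially the same route as the paper: the paper likewise derives \Cref{cor:noreduction} by combining \Cref{prop:reduction} with the Appendix~B example of \cite{rubinstein-matr} (based on \cite{BabaioffIK07}) giving $\RoE(\cF_n) \in O\left(\frac{\log\log n}{\log n}\right)$, and taking $n$ sufficiently large. Your constant-tracking of the factor $70$ and the bound $\RoE(\cF,D) \geq \frac{1}{2}$ matches what the proof of \Cref{prop:reduction} actually exhibits, so no gap remains.
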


\section{Discussion}\label{sec:discussion}

In this paper, we studied the performance of combinatorial prophet inequalities via the expected ratio ($\EoR$). We focus on its connections to the standard measure of performance in the literature, i.e., the ratio of expectations ($\RoE$). We establish that, for every downward-closed feasibility constraint, the gap between $\RoE(\cF)$ and $\EoR(\cF)$ is at most a constant. Moreover, we show that the $\EoR$ is an even stronger benchmark in the sense that $\RoE(\cF, D)$ is at least a constant of $\EoR(\cF, D)$, but not vice versa.

We want to remark that \Cref{general} and \Cref{alg:eor-roe} are constructive ways to prove \Cref{alg-analysis} \Cref{alg-analysis-rev-new}. For example, the purpose of \Cref{general} is to show that for every family of feasibility constraints $\cF$, $\EoR(\cF) \geq \RoE(\cF)/12$. For $\alpha=\RoE(\cF)$, by definition of $\RoE(\cF)$, there exists an algorithm that satisfies the assumption made in the algorithm, and for the proof of the theorem, this can be used to show an existence of an algorithm with $\EoR(\cF) \geq \alpha/12$. Therefore, the assumption simply restates the starting point of the reduction of \Cref{alg-analysis}. 

In the remainder of this section, we state some remarks and discuss extensions that follow from our proofs and techniques.

\paragraph{Arrival order.} We first note that our results hold for any arrival order of the elements, such as random \citep{esfandiari-ps}, free \citep{Yan11}, or batch arrival order \citep{ezra}. 

\paragraph{Single-sample.} We can also consider scenarios in which the decision-maker does not have full knowledge of the distributions of the elements' weights. In fact, our reductions can be adjusted (with slightly worse constants) to scenarios in which the decision-maker has only a single sample from each distribution (see \Cref{app:ss} for a formal discussion). 

\paragraph{Extension to XOS functions.}
Our results extend beyond additive functions over downward-closed feasibility constraints, namely to extended-XOS functions (see \Cref{app:xos} for a formal discussion). 
We generalize  the optimum function (\Cref{def:f}) to be $\OPT(\bw) := \underset{S \in \cF}{\arg\max} ~\underset{i \in [\ell]}{\max} \ev{b_i, \bw_S}$, where $\bw_S \in \mathbb{R}^{|E|}$ is the vector of elements weights in $S$ (and $0$ for elements not in $S$), while each $b_i \in \mathbb{R}^{|E|}$ is a vector of nonnegative coefficients. Similarly to the additive case, we have that $f(\bw) = \max_{\substack{i \in [\ell]\\S \in \cF}} \ev{b_i, \bw_S}$. We can show that, by setting $a^S_{ij} := b_{ij} \cdot \bone_{j \in S}$, $f(\bw)$ can be expressed as $f(\bw) = \max_{\substack{i \in [\ell]\\S \in \cF}}\ev{a^S_i, \bw_S}$. Moreover, the functions resulting from projecting all such $f$'s onto $\{0,1\}^{|E|}$ not only are XOS but describe all XOS functions. We can now run ~\Cref{general}, and perform the case distinction with the modified threshold $\tau$ being such that $\Pr\left[\exists j \in [|E|]: w_j > \frac{\tau}{\max_{i \in [\ell]}a_{ij}}\right] = \gamma$. With this at hand, the ``catch the superstar'' subroutine becomes selecting the first element $j$ with $w_j > \frac{\tau}{\max_{i \in [\ell]}a_{ij}}$, while the ``run the combinatorial algorithm'' remains unaltered (up to slight modifications, described in \Cref{app:xos}). For \Cref{alg:eor-roe}, we redefine $A := \E[\max_{i,j} a_{ij} \cdot w_j]$ and repeat a similar analysis to the one above. All in all, we lose an additional $\max_{i,j}{a_{ij}}$ factor in the expected ratio of \Cref{alg-analysis} and \Cref{alg-analysis-rev-new}, and we get that:
\begin{align*}
    \EoR(\cF) \geq \frac{\RoE(\cF)}{12 \cdot \max_{i,j}{a_{ij}}}, \quad \mbox{and} \quad  \RoE(\cF, D) \geq \frac{\EoR(\cF, D)}{68 \cdot \max_{i,j}{a_{ij}}}.
\end{align*}

\paragraph{Other measures of performance.} 

In this work, we study the expected ratio as our measure of performance. Another natural performance measure is the expected inverse ratio, i.e., $\EoIR(\cF) := \E\left[f(\bw)/a(\bw)\right]$. We now show that such a measure may be unbounded even for the single-choice case. Let us consider again \cref{ex:ex3}. Fix a (randomized) algorithm that selects the first element with probability $p$, and let $a_p(\bw)$ be its performance on input $\bw$. Then we have
\begin{align*}
    \EoIR(\cF) & \geq \min_p \E\left[\frac{f(\bw)}{a_p(\bw)}\right] \\
    &\geq \min_p (1-\varepsilon) \cdot \left(p \cdot \frac{1}{1} + (1-p) \cdot \frac{1}{\varepsilon^2}\right) + \varepsilon \cdot \left(p \cdot \frac{1/\varepsilon^2}{1} + (1-p) \cdot \frac{1/\varepsilon^2}{1/\varepsilon^2}\right)\\
    &= \min_p  \frac{1-\varepsilon}{\varepsilon^2} + \varepsilon - \left(\frac{1-\varepsilon}{\varepsilon^2} - \frac{1}{\varepsilon} - 1 + 2\varepsilon\right)p \geq \frac{1}{\varepsilon} + 1 - \varepsilon > \frac{1}{\varepsilon}.
\end{align*}
Hereby, the second inequality holds since the algorithm selects the second box (given it is realized) with probability at most $1-p$, while the third inequality follows from setting $p=1$ to minimize the expression. As $\varepsilon$ can be made arbitrarily small, we have an unbounded $\EoIR$.

In spite of the above simple impossibility result in maximization problems, 
the same measure of performance could be of use when the decision-maker seeks to minimize a function subject to, e.g., covering constraints with stochastic inputs. As mentioned in \Cref{relwork}, \citet{GargGLS08} study the relation between $\RoE$ and $\EoR$ for minimization problems, such as Online Steiner Tree and Traveling Salesman Problem with stochastic inputs. On the other hand, the $\EoIR$ measure in this setting remains unexplored. It would be interesting to understand whether the reductions provided in \Cref{sec:relation,sec:reduction2} are generalizable to the minimization setting (both for the $\EoR$ and the $\EoIR$). 

\paragraph{Gap between \EoR { }and \RoE}  Despite the similarity between the benchmarks, it is not at all obvious whether the maximal gap between $\RoE$ and $\EoR$ for prophet settings and every downward-closed feasibility constraint would be constant (it is an open question whether this gap is constant in other Bayesian settings \cite{GargGLS08}). In our feasibility-based reduction from $\RoE$ to $\EoR$, we lose a constant of $\frac{1}{12}$. Losing a constant is unavoidable already from the single-choice setting, where there is a (tight) gap of $\frac{2}{e}$. It would be interesting to study whether this is the worst gap possible. In the other direction (i.e., from $\EoR$ to $\RoE$), the gap in the reduction is also not tight; it is even possible that the $\RoE(\cF)$ is at least the $\EoR(\cF)$ for every feasibility constraint $\cF$. However, we know that $\RoE(\cF, D)$ can be smaller than $\EoR(\cF, D)$ by a factor of $2$ by \Cref{ex:ex1} and can be unboundedly larger than $\EoR(\cF, D)$. Additionally, finding the exact value of $\EoR(\cF)$ for specific downward-closed constraints (e.g., matching, matroid, knapsack, etc.) is an interesting open question.

\section*{Acknowledgements}
Partially supported by the ERC Advanced Grant 788893 AMDROMA ``Algorithmic and Mechanism Design Research in Online Markets'' and MIUR PRIN project ALGADIMAR ``Algorithms, Games, and Digital Markets''. The last author further acknowledges the support of the Alexander von Humboldt Foundation with funds from the German Federal Ministry of
Education and Research (BMBF), the Deutsche Forschungsgemeinschaft (DFG, German Research Foundation) - Projektnummer 277991500, the COST Action CA16228 “European Network for Game Theory” (GAMENET), and ANID, Chile, grant ACT210005. Most of this work was done while the author was at TU Munich and Universidad de Chile. The views expressed in this paper are the author's and do not necessarily reflect those of the European Central Bank or the Eurosystem.
Tomer was also supported by the National Science Foundation under Grant No. DMS-1928930 and by the Alfred P. Sloan Foundation under grant G-2021-16778, while the author was in residence at the Simons Laufer Mathematical Sciences Institute (formerly MSRI) in Berkeley, California, during the Fall 2023 semester.

\bibliographystyle{splncs04} 
\bibliography{references}

\begin{thebibliography}{57}
\providecommand{\natexlab}[1]{#1}
\providecommand{\url}[1]{\texttt{#1}}
\expandafter\ifx\csname urlstyle\endcsname\relax
  \providecommand{\doi}[1]{doi: #1}\else
  \providecommand{\doi}{doi: \begingroup \urlstyle{rm}\Url}\fi

\bibitem[Agrawal et~al.(2020)Agrawal, Sethuraman, and Zhang]{agrawal}
S.~Agrawal, J.~Sethuraman, and X.~Zhang.
\newblock On optimal ordering in the optimal stopping problem.
\newblock In \emph{Proceedings of the 21st {ACM} Conference on Economics and
  Computation}. {ACM}, 2020.

\bibitem[Alaei(2011)]{alaei}
S.~Alaei.
\newblock Bayesian combinatorial auctions: Expanding single buyer mechanisms to
  many buyers.
\newblock \emph{2011 IEEE 52nd Annual Symposium on Foundations of Computer
  Science}, pages 512--521, 2011.

\bibitem[Alaei et~al.(2012)Alaei, Hajiaghayi, and Liaghat]{alaei2}
S.~Alaei, M.~Hajiaghayi, and V.~Liaghat.
\newblock Online prophet-inequality matching with applications to ad
  allocation.
\newblock In \emph{Proceedings of the 13th {ACM} Conference on Electronic
  Commerce - {EC} {\textquotesingle}12}. {ACM} Press, 2012.

\bibitem[Anari et~al.(2019)Anari, Niazadeh, Saberi, and Shameli]{anari}
N.~Anari, R.~Niazadeh, A.~Saberi, and A.~Shameli.
\newblock Nearly optimal pricing algorithms for production constrained and
  laminar bayesian selection.
\newblock In \emph{Proceedings of the 2019 {ACM} Conference on Economics and
  Computation, {EC} 2019}, pages 91--92. {ACM}, 2019.

\bibitem[Azar et~al.(2019)Azar, Kleinberg, and Weinberg]{azar}
P.~D. Azar, R.~Kleinberg, and S.~M. Weinberg.
\newblock Prior independent mechanisms via prophet inequalities with limited
  information.
\newblock \emph{Games Econ. Behav.}, 118:\penalty0 511--532, 2019.

\bibitem[Babaioff et~al.(2007)Babaioff, Immorlica, and Kleinberg]{BabaioffIK07}
M.~Babaioff, N.~Immorlica, and R.~Kleinberg.
\newblock Matroids, secretary problems, and online mechanisms.
\newblock In \emph{Proceedings of the Eighteenth Annual {ACM-SIAM} Symposium on
  Discrete Algorithms, {SODA} 2007}, pages 434--443. {SIAM}, 2007.

\bibitem[Bahrani et~al.(2021)Bahrani, Beyhaghi, Singla, and
  Weinberg]{BahraniBSW21}
M.~Bahrani, H.~Beyhaghi, S.~Singla, and S.~M. Weinberg.
\newblock Formal barriers to simple algorithms for the matroid secretary
  problem.
\newblock In \emph{Web and Internet Economics - 17th International Conference,
  {WINE} 2021}, pages 280--298. Springer, 2021.

\bibitem[Blum et~al.(2017)Blum, Caragiannis, Haghtalab, Procaccia, Procaccia,
  and Vaish]{BlumCHPPV17}
A.~Blum, I.~Caragiannis, N.~Haghtalab, A.~D. Procaccia, E.~B. Procaccia, and
  R.~Vaish.
\newblock Opting into optimal matchings.
\newblock In P.~N. Klein, editor, \emph{Proceedings of the Twenty-Eighth Annual
  {ACM-SIAM} Symposium on Discrete Algorithms, {SODA} 2017}, pages 2351--2363.
  {SIAM}, 2017.

\bibitem[Boucheron et~al.(2000)Boucheron, Lugosi, and Massart]{blm2}
S.~Boucheron, G.~Lugosi, and P.~Massart.
\newblock A sharp concentration inequality with application.
\newblock \emph{Random Struct. Algorithms}, 16\penalty0 (3):\penalty0
  277–292, may 2000.
\newblock ISSN 1042-9832.

\bibitem[Boucheron et~al.(2009)Boucheron, Lugosi, and Massart]{blm}
S.~Boucheron, G.~Lugosi, and P.~Massart.
\newblock On concentration of self-bounding functions.
\newblock \emph{Electronic Journal of Probability}, 14, 01 2009.

\bibitem[Braverman et~al.(2022)Braverman, Derakhshan, and Lovett]{braverman}
M.~Braverman, M.~Derakhshan, and A.~M. Lovett.
\newblock Max-weight online stochastic matching: Improved approximations
  against the online benchmark.
\newblock In \emph{Proceedings of the 23rd {ACM} Conference on Economics and
  Computation}, 2022.

\bibitem[Caramanis et~al.(2022)Caramanis, Dütting, Faw, Fusco, Lazos,
  Leonardi, Papadigenopoulos, Pountourakis, and Reiffenhäuser]{caramanis}
C.~Caramanis, P.~Dütting, M.~Faw, F.~Fusco, P.~Lazos, S.~Leonardi,
  O.~Papadigenopoulos, E.~Pountourakis, and R.~Reiffenhäuser.
\newblock Single-sample prophet inequalities via greedy-ordered selection.
\newblock In \emph{Proceedings of the 2022 Annual {ACM}-{SIAM} Symposium on
  Discrete Algorithms ({SODA})}, pages 1298--1325. Society for Industrial and
  Applied Mathematics, 2022.

\bibitem[Chawla et~al.(2010)Chawla, Hartline, Malec, and Sivan]{chawla2}
S.~Chawla, J.~D. Hartline, D.~L. Malec, and B.~Sivan.
\newblock Multi-parameter mechanism design and sequential posted pricing.
\newblock In L.~J. Schulman, editor, \emph{Proceedings of the 42nd {ACM}
  Symposium on Theory of Computing, {STOC} 2010}, pages 311--320. {ACM}, 2010.

\bibitem[Chekuri and Livanos(2021)]{chekuri}
C.~Chekuri and V.~Livanos.
\newblock On submodular prophet inequalities and correlation gap.
\newblock In \emph{Algorithmic Game Theory - 14th International Symposium,
  {SAGT} 2021}, volume 12885 of \emph{Lecture Notes in Computer Science}, page
  410. Springer, 2021.

\bibitem[Correa et~al.(2019{\natexlab{a}})Correa, Foncea, Hoeksma, Oosterwijk,
  and Vredeveld]{correa-survey}
J.~Correa, P.~Foncea, R.~Hoeksma, T.~Oosterwijk, and T.~Vredeveld.
\newblock Recent developments in prophet inequalities.
\newblock \emph{SIGecom Exch.}, 17\penalty0 (1):\penalty0 61–70, may
  2019{\natexlab{a}}.

\bibitem[Correa et~al.(2019{\natexlab{b}})Correa, Foncea, Pizarro, and
  Verdugo]{correa-pricing}
J.~Correa, P.~Foncea, D.~Pizarro, and V.~Verdugo.
\newblock From pricing to prophets, and back!
\newblock \emph{Operations Research Letters}, 47\penalty0 (1):\penalty0 25--29,
  2019{\natexlab{b}}.

\bibitem[Correa et~al.(2020{\natexlab{a}})Correa, Cristi, Epstein, and
  Soto]{correa2020sampledriven}
J.~Correa, A.~Cristi, B.~Epstein, and J.~Soto.
\newblock Sample-driven optimal stopping: From the secretary problem to the
  i.i.d. prophet inequality, 2020{\natexlab{a}}.

\bibitem[Correa et~al.(2019{\natexlab{c}})Correa, D{\"{u}}tting, Fischer, and
  Schewior]{correa}
J.~R. Correa, P.~D{\"{u}}tting, F.~A. Fischer, and K.~Schewior.
\newblock Prophet inequalities for {I.I.D.} random variables from an unknown
  distribution.
\newblock In \emph{Proceedings of the 2019 {ACM} Conference on Economics and
  Computation, {EC} 2019}, pages 3--17. {ACM}, 2019{\natexlab{c}}.

\bibitem[Correa et~al.(2020{\natexlab{b}})Correa, Cristi, Epstein, and
  Soto]{googol}
J.~R. Correa, A.~Cristi, B.~Epstein, and J.~A. Soto.
\newblock The two-sided game of googol and sample-based prophet inequalities.
\newblock In S.~Chawla, editor, \emph{Proceedings of the 2020 {ACM-SIAM}
  Symposium on Discrete Algorithms, {SODA} 2020}, pages 2066--2081. {SIAM},
  2020{\natexlab{b}}.

\bibitem[Correa et~al.(2021{\natexlab{a}})Correa, Cristi, Duetting, and
  Norouzi{-}Fard]{correa-fairness}
J.~R. Correa, A.~Cristi, P.~Duetting, and A.~Norouzi{-}Fard.
\newblock Fairness and bias in online selection.
\newblock In \emph{Proceedings of the 38th International Conference on Machine
  Learning, {ICML} 2021}, volume 139 of \emph{Proceedings of Machine Learning
  Research}, pages 2112--2121. {PMLR}, 2021{\natexlab{a}}.

\bibitem[Correa et~al.(2021{\natexlab{b}})Correa, D{\"{u}}tting, Fischer,
  Schewior, and Ziliotto]{correa_iid2}
J.~R. Correa, P.~D{\"{u}}tting, F.~A. Fischer, K.~Schewior, and B.~Ziliotto.
\newblock Unknown {I.I.D.} prophets: Better bounds, streaming algorithms, and a
  new impossibility (extended abstract).
\newblock In \emph{12th Innovations in Theoretical Computer Science Conference,
  {ITCS} 2021}, volume 185 of \emph{LIPIcs}, pages 86:1--86:1,
  2021{\natexlab{b}}.

\bibitem[D{\"{u}}tting and Kleinberg(2015)]{dutting}
P.~D{\"{u}}tting and R.~Kleinberg.
\newblock {P}olymatroid {P}rophet {I}nequalities.
\newblock In \emph{Algorithms - {ESA} 2015 - 23rd Annual European Symposium},
  pages 437--449, 2015.

\bibitem[D{\"{u}}tting et~al.(2020)D{\"{u}}tting, Feldman, Kesselheim, and
  Lucier]{dutting2}
P.~D{\"{u}}tting, M.~Feldman, T.~Kesselheim, and B.~Lucier.
\newblock Prophet inequalities made easy: Stochastic optimization by pricing
  nonstochastic inputs.
\newblock \emph{{SIAM} J. Comput.}, 49\penalty0 (3):\penalty0 540--582, 2020.

\bibitem[D{\"{u}}tting et~al.(2023)D{\"{u}}tting, Gergatsouli, Rezvan, Teng,
  and Tsigonias{-}Dimitriadis]{duetting2023PS}
P.~D{\"{u}}tting, E.~Gergatsouli, R.~Rezvan, Y.~Teng, and
  A.~Tsigonias{-}Dimitriadis.
\newblock Prophet secretary against the online optimal.
\newblock \emph{CoRR}, abs/2305.11144, 2023.

\bibitem[Ehsani et~al.(2018)Ehsani, Hajiaghayi, Kesselheim, and Singla]{ehsani}
S.~Ehsani, M.~Hajiaghayi, T.~Kesselheim, and S.~Singla.
\newblock Prophet secretary for combinatorial auctions and matroids.
\newblock In \emph{Proceedings of the Twenty-Ninth Annual {ACM-SIAM} Symposium
  on Discrete Algorithms, {SODA} 2018}, pages 700--714. {SIAM}, 2018.

\bibitem[Esfandiari et~al.(2017)Esfandiari, Hajiaghayi, Liaghat, and
  Monemizadeh]{esfandiari-ps}
H.~Esfandiari, M.~Hajiaghayi, V.~Liaghat, and M.~Monemizadeh.
\newblock Prophet secretary.
\newblock \emph{{SIAM} J. Discret. Math.}, 31\penalty0 (3):\penalty0
  1685--1701, 2017.

\bibitem[Esfandiari et~al.(2020)Esfandiari, Hajiaghayi, Lucier, and
  Mitzenmacher]{esfandiari}
H.~Esfandiari, M.~Hajiaghayi, B.~Lucier, and M.~Mitzenmacher.
\newblock Prophets, secretaries, and maximizing the probability of choosing the
  best.
\newblock In \emph{The 23rd International Conference on Artificial Intelligence
  and Statistics, {AISTATS} 2020}, volume 108 of \emph{Proceedings of Machine
  Learning Research}, pages 3717--3727. {PMLR}, 2020.

\bibitem[Ezra et~al.(2020)Ezra, Feldman, Gravin, and Tang]{ezra}
T.~Ezra, M.~Feldman, N.~Gravin, and Z.~G. Tang.
\newblock Online stochastic max-weight matching: Prophet inequality for vertex
  and edge arrival models.
\newblock In \emph{{EC} '20: The 21st {ACM} Conference on Economics and
  Computation}, pages 769--787. {ACM}, 2020.

\bibitem[Ezra et~al.(2023)Ezra, Feldman, Gravin, and
  Tang]{ezra2023significance}
T.~Ezra, M.~Feldman, N.~Gravin, and Z.~G. Tang.
\newblock “who is next in line?” on the significance of knowing the arrival
  order in bayesian online settings.
\newblock \emph{Proceedings of the 2023 Annual ACM-SIAM Symposium on Discrete
  Algorithms (SODA)}, pages 3759--3776, 2023.

\bibitem[Feldman et~al.(2015)Feldman, Gravin, and Lucier]{feldman}
M.~Feldman, N.~Gravin, and B.~Lucier.
\newblock Combinatorial auctions via posted prices.
\newblock In P.~Indyk, editor, \emph{Proceedings of the Twenty-Sixth Annual
  {ACM-SIAM} Symposium on Discrete Algorithms, {SODA} 2015}, pages 123--135.
  {SIAM}, 2015.

\bibitem[Feldman et~al.(2021)Feldman, Svensson, and Zenklusen]{feldman-ocrs}
M.~Feldman, O.~Svensson, and R.~Zenklusen.
\newblock Online contention resolution schemes with applications to bayesian
  selection problems.
\newblock \emph{{SIAM} J. Comput.}, 50\penalty0 (2):\penalty0 255--300, 2021.

\bibitem[Garg et~al.(2008)Garg, Gupta, Leonardi, and Sankowski]{GargGLS08}
N.~Garg, A.~Gupta, S.~Leonardi, and P.~Sankowski.
\newblock Stochastic analyses for online combinatorial optimization problems.
\newblock In \emph{Proceedings of the Nineteenth Annual {ACM-SIAM} Symposium on
  Discrete Algorithms, {SODA} 2008}, pages 942--951, 2008.

\bibitem[Gilbert and Mosteller(1966)]{GM66}
J.~P. Gilbert and F.~Mosteller.
\newblock Recognizing the maximum of a sequence.
\newblock \emph{Journal of the American Statistical Association}, 61\penalty0
  (313):\penalty0 35--73, 1966.
\newblock ISSN 01621459.

\bibitem[Gravin and Wang(2019)]{gravin}
N.~Gravin and H.~Wang.
\newblock Prophet inequality for bipartite matching: Merits of being simple and
  non adaptive.
\newblock In \emph{Proceedings of the 2019 {ACM} Conference on Economics and
  Computation, {EC} 2019}, pages 93--109. {ACM}, 2019.

\bibitem[Hajiaghayi et~al.(2007)Hajiaghayi, Kleinberg, and
  Sandholm]{hajiaghayi}
M.~T. Hajiaghayi, R.~Kleinberg, and T.~Sandholm.
\newblock Automated online mechanism design and prophet inequalities.
\newblock In \emph{Proceedings of the 22nd National Conference on Artificial
  Intelligence - Volume 1}, AAAI'07, page 58–65. AAAI Press, 2007.
\newblock ISBN 9781577353232.

\bibitem[Hartline and Johnsen(2021)]{hartline}
J.~Hartline and A.~Johnsen.
\newblock Lower bounds for prior independent algorithms, 2021.

\bibitem[Hill and Kertz(1992)]{hill-kertz}
T.~P. Hill and R.~P. Kertz.
\newblock A survey of prophet inequalities in optimal stopping theory.
\newblock \emph{Contemporary Mathematics}, 125, 1992.

\bibitem[Hoefer and Kodric(2017)]{HoeferK17}
M.~Hoefer and B.~Kodric.
\newblock Combinatorial secretary problems with ordinal information.
\newblock In \emph{44th International Colloquium on Automata, Languages, and
  Programming, {ICALP} 2017}, volume~80 of \emph{LIPIcs}, pages 133:1--133:14.
  Schloss Dagstuhl - Leibniz-Zentrum f{\"{u}}r Informatik, 2017.

\bibitem[Jiang et~al.(2022)Jiang, Ma, and Zhang]{jiang}
J.~Jiang, W.~Ma, and J.~Zhang.
\newblock Tight guarantees for multi-unit prophet inequalities and online
  stochastic knapsack.
\newblock In \emph{Proceedings of the 2022 Annual {ACM}-{SIAM} Symposium on
  Discrete Algorithms ({SODA})}, pages 1221--1246. Society for Industrial and
  Applied Mathematics, 2022.

\bibitem[Kaplan et~al.(2020)Kaplan, Naori, and Raz]{kaplan}
H.~Kaplan, D.~Naori, and D.~Raz.
\newblock Competitive analysis with a sample and the secretary problem.
\newblock In \emph{Proceedings of the Fourteenth Annual {ACM}-{SIAM} Symposium
  on Discrete Algorithms}, pages 2082--2095. SIAM, 2020.

\bibitem[Kleinberg et~al.(2021)Kleinberg, Kleinberg, and Oren]{kleinberg}
J.~Kleinberg, R.~Kleinberg, and S.~Oren.
\newblock Optimal stopping with behaviorally biased agents: The role of loss
  aversion and changing reference points.
\newblock In \emph{Proceedings of the 22nd {ACM} Conference on Economics and
  Computation}. {ACM}, jul 2021.

\bibitem[Kleinberg and Weinberg(2019)]{weinberg}
R.~Kleinberg and S.~M. Weinberg.
\newblock Matroid prophet inequalities and applications to multi-dimensional
  mechanism design.
\newblock \emph{Games Econ. Behav.}, 113:\penalty0 97--115, 2019.

\bibitem[Krengel and Sucheston(1977)]{krengel2}
U.~Krengel and L.~Sucheston.
\newblock Semiamarts and finite values.
\newblock \emph{Bulletin of the American Mathematical Society}, 83\penalty0
  (4):\penalty0 745--747, 1977.

\bibitem[Krengel and Sucheston(1978)]{krengel}
U.~Krengel and L.~Sucheston.
\newblock On semiamarts, amarts, and processes with finite value.
\newblock \emph{Advances in Probability and Related Topics}, 4:\penalty0
  197–--266, 1978.

\bibitem[Lucier(2017)]{lucier-survey}
B.~Lucier.
\newblock An economic view of prophet inequalities.
\newblock \emph{{ACM} {SIGecom} Exchanges}, 16\penalty0 (1):\penalty0 24--47,
  2017.

\bibitem[Niazadeh et~al.(2018)Niazadeh, Saberi, and Shameli]{niazadeh}
R.~Niazadeh, A.~Saberi, and A.~Shameli.
\newblock Prophet inequalities vs. approximating optimum online.
\newblock In \emph{Web and Internet Economics}, pages 356--374. Springer
  International Publishing, 2018.

\bibitem[Nuti(2022)]{nuti}
P.~Nuti.
\newblock The secretary problem with~distributions.
\newblock In \emph{Integer Programming and Combinatorial Optimization}, pages
  429--439. Springer International Publishing, 2022.

\bibitem[Papadimitriou et~al.(2021)Papadimitriou, Pollner, Saberi, and
  Wajc]{papadimitriou}
C.~Papadimitriou, T.~Pollner, A.~Saberi, and D.~Wajc.
\newblock Online stochastic max-weight bipartite matching: Beyond prophet
  inequalities.
\newblock In \emph{Proceedings of the 22nd {ACM} Conference on Economics and
  Computation}. {ACM}, 2021.

\bibitem[Rubinstein(2016)]{rubinstein-matr}
A.~Rubinstein.
\newblock Beyond matroids: secretary problem and prophet inequality with
  general constraints.
\newblock In \emph{Proceedings of the 48th Annual {ACM} {SIGACT} Symposium on
  Theory of Computing, {STOC} 2016}, pages 324--332. {ACM}, 2016.

\bibitem[Rubinstein and Singla(2017)]{rubinstein-singla}
A.~Rubinstein and S.~Singla.
\newblock Combinatorial prophet inequalities.
\newblock In \emph{Proceedings of the Twenty-Eighth Annual {ACM-SIAM} Symposium
  on Discrete Algorithms, {SODA} 2017}, pages 1671--1687. {SIAM}, 2017.

\bibitem[Rubinstein et~al.(2020)Rubinstein, Wang, and Weinberg]{rubinstein}
A.~Rubinstein, J.~Z. Wang, and S.~M. Weinberg.
\newblock Optimal single-choice prophet inequalities from samples.
\newblock In \emph{11th Innovations in Theoretical Computer Science Conference,
  {ITCS} 2020}, volume 151 of \emph{LIPIcs}, pages 60:1--60:10, 2020.

\bibitem[Samuel-Cahn(1984)]{samuel-cahn}
E.~Samuel-Cahn.
\newblock Comparison of threshold stop rules and maximum for independent
  nonnegative random variables.
\newblock \emph{The Annals of Probability}, 12\penalty0 (4):\penalty0
  1213--1216, 1984.
\newblock ISSN 00911798.

\bibitem[Scharbrodt et~al.(2006)Scharbrodt, Schickinger, and
  Steger]{ScharbrodtSS06}
M.~Scharbrodt, T.~Schickinger, and A.~Steger.
\newblock A new average case analysis for completion time scheduling.
\newblock \emph{J. {ACM}}, 53\penalty0 (1):\penalty0 121--146, 2006.

\bibitem[Soto et~al.(2021)Soto, Turkieltaub, and Verdugo]{SotoTV21}
J.~A. Soto, A.~Turkieltaub, and V.~Verdugo.
\newblock Strong algorithms for the ordinal matroid secretary problem.
\newblock \emph{Math. Oper. Res.}, 46\penalty0 (2):\penalty0 642--673, 2021.

\bibitem[Vondr{\'{a}}k(2010)]{Vondrak10}
J.~Vondr{\'{a}}k.
\newblock A note on concentration of submodular functions.
\newblock \emph{CoRR}, abs/1005.2791, 2010.

\bibitem[Yan(2011)]{Yan11}
Q.~Yan.
\newblock Mechanism design via correlation gap.
\newblock In \emph{Proceedings of the Twenty-Second Annual {ACM-SIAM} Symposium
  on Discrete Algorithms, {SODA} 2011}, pages 710--719, 2011.

\bibitem[Yao(1977)]{yao_minimax}
A.~C.-C. Yao.
\newblock Probabilistic computations: Toward a unified measure of complexity.
\newblock In \emph{18th Annual Symposium on Foundations of Computer Science},
  pages 222--227, 1977.

\end{thebibliography}
\appendix

\section{Further Related Work}\label{app:related_work}

In the prophet inequalities literature, a common underlying assumption is that the decision-maker has full knowledge of the distributions from where the values of the arriving elements are drawn. This is arguably a strong assumption in many practical applications; therefore, a parallel line of work has focused on settings where the distributions are unknown, but a limited number of samples from these distributions is available to the decision-maker.  \citet{azar} pioneered this idea and showed positive results in several combinatorial settings. In fact, in the classic prophet inequality, just a single sample from each distribution suffices to recover the tight result with full distributional knowledge \citep{rubinstein}. Similar insights are obtained when the distributions are identical; this problem was initially studied by \citet{correa} and subsequently improved several times \citep{rubinstein,googol,kaplan,correa2020sampledriven,correa_iid2}. \citet{azar,caramanis,kaplan} extended the single-sample framework (which can be viewed as the minimum amount of available information) to several combinatorial problems.

Directly related to prophet inequalities, a limited number of recent papers consider different performance metrics; we can loosely divide them into two main categories. In the first one \citep{anari,niazadeh,papadimitriou,ezra2023significance,braverman,duetting2023PS}, the goal is to compare against the computationally unbounded optimal online policy. Similar in spirit is the work of \citep{agrawal}, where the decision-maker can choose the ordering of the elements, and the main question is whether finding the optimal ordering can be done in polynomial time. In general, given an online Bayesian selection problem, the natural questions are whether it is hard to compute an optimal solution and, if that is the case, how well we can approximate this benchmark with polynomial-time algorithms. The second group of papers \citep{esfandiari,googol,nuti,ezra2023significance} studies single-choice problems with the goal of maximizing the probability of picking the element with the highest value. Note that this is the objective of the secretary problem, but in problems with stochastic input (as is the case in prophet inequalities). Alternative measures of performance have also been proposed to capture the behavior of biased (as opposed to rational) agents \citep{kleinberg}, or to address fairness considerations \citep{correa-fairness}.

\section{The Single-Choice Case: \PbM { }and \EoR { }are equivalent}\label{app:single}
In this appendix, we will show that for the single-choice prophet inequality, the $\EoR$ is equivalent to the $\PbM$. The (tight) bound on the $\PbM$ of $1/e$ was already shown in \citet{esfandiari}, and \Cref{prop:pbm-eor-1/e} shows that by similar arguments  the $\EoR$ is also $1/e$. On the one hand, it is immediate to show that the $\EoR$ is at least the $\PbM$. On the other hand, we show how to adapt the tight example of the $\PbM$ to work also with respect to the $\EoR$. Before proceeding, let us note that in single-choice $\EoR = \E\big[\frac{w_{\ALG}}{\max_{e \in E} w_e}\big]$, where $w_{\ALG}$ is the value selected by the algorithm.

\begin{proposition}[$\EoR$-$\PbM$ equivalence]\label{prop:pbm-eor-1/e}
In the single-choice prophet inequality  ($\cF=\{S \subseteq E \mid |S|\leq 1\}$), 
it holds that 
\begin{align*}
    \EoR(\cF) = \PbM(\cF).
\end{align*}
\end{proposition}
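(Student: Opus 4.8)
The plan is to establish the equality $\EoR(\cF) = \PbM(\cF)$ in the single-choice setting by proving two inequalities. Recall that here $\EoR(\cF,D,\ALG) = \E\big[\frac{w_{\ALG}}{\max_{e\in E} w_e}\big]$ and $\PbM(\cF,D,\ALG) = \Pr[w_{\ALG} = \max_{e\in E} w_e]$, and that both benchmarks take a supremum over algorithms and then an infimum over product distributions $D$.

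First I would show the easy direction $\EoR(\cF) \geq \PbM(\cF)$. For any fixed algorithm $\ALG$ and distribution $D$, whenever the algorithm selects the maximum element the ratio $\frac{w_{\ALG}}{\max_e w_e}$ equals $1$, and otherwise the ratio is nonnegative. Hence
\begin{align*}
\EoR(\cF,D,\ALG) = \E\left[\frac{w_{\ALG}}{\max_{e\in E} w_e}\right] \geq \Pr\left[w_{\ALG} = \max_{e\in E} w_e\right] = \PbM(\cF,D,\ALG).
\end{align*}
Taking the supremum over $\ALG$ preserves the inequality for each $D$, and then taking the infimum over $D$ yields $\EoR(\cF) \geq \PbM(\cF)$. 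This step is essentially immediate and carries no real obstacle.

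The harder direction is $\EoR(\cF) \leq \PbM(\cF)$, and I expect this to be the crux. The paper's introduction flags two possible routes: adapting the worst-case $\PbM$ instance of \citet{esfandiari}, or a distribution-modification argument. I would pursue the second, since it gives a clean instance-by-instance reduction. The idea is that the gap between the ratio and the $0/1$ indicator of ``picking the max'' comes entirely from runs where the selected element is large but not quite the maximum; to kill this gap I would, given any distribution $D$, construct a new distribution $D'$ that sharply separates scales so that a non-maximal selected value contributes a vanishing fraction of the realized maximum. Concretely, one replaces each $w_e$ by a monotone rescaling (e.g.\ an exponential blow-up making distinct order statistics differ by arbitrarily large multiplicative factors) so that on any realization, unless the algorithm picks the actual argmax, the ratio $\frac{w_{\ALG}}{\max_e w_e} \to 0$. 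Since any monotone coordinatewise transformation preserves the relative order of the weights, it preserves which element an order-oblivious comparison-based strategy would select as the maximum, so the $\PbM$ of the transformed instance is controlled by (and in the limit equals) the $\PbM$ of the original. This forces $\EoR(\cF,D')$ to converge to $\PbM(\cF,D')$, and since $\PbM(\cF)$ is an infimum over all distributions, taking $D'$ ranging over these transformations of near-optimal $\PbM$ instances yields $\EoR(\cF) \leq \PbM(\cF)$.

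The main obstacle will be making the limiting argument rigorous while respecting the supremum over algorithms: I must ensure that the algorithm achieving the optimal $\EoR$ on $D'$ cannot exploit the transformed scales to beat the best $\PbM$ strategy. The key observation to handle this is that for any algorithm, its expected ratio on $D'$ decomposes as the probability it selects the max (contributing ratio $1$) plus a term bounded by the separation parameter times the probability it selects a non-maximal element, and the latter term vanishes uniformly in the algorithm as the separation grows. I would also need the no-point-mass assumption (stated in the preliminaries) to guarantee the argmax is almost surely unique, so that ``selecting the maximum'' is well defined and the transformation argument does not suffer from ties. Combining both inequalities gives $\EoR(\cF) = \PbM(\cF)$.
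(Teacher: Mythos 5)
Your overall route is the same as the paper's second proof of this proposition: the easy direction $\EoR(\cF)\geq\PbM(\cF)$ is handled identically, and for the converse the paper (in its Proposition on $M^D$) also applies an exponential blow-up $w_e\mapsto M^{w_e}$ to an arbitrary product distribution, notes that this preserves $\PbM$, bounds any algorithm's expected ratio by $\PbM(\cF,M^D)+(1-\PbM(\cF,M^D))\cdot M^{-\varepsilon}$ uniformly over algorithms, and takes $M\to\infty$ before the infimum over $D$. Your observation that the map is a monotone bijection is in fact slightly cleaner than what you wrote: algorithms for $D$ and for $M^D$ correspond bijectively and select the same element pointwise, so $\PbM(\cF,M^D)=\PbM(\cF,D)$ exactly, not merely ``controlled by'' it in a limit.

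There is, however, one concrete gap in your sketch of the hard direction: for continuous (no point-mass) distributions, the exponential blow-up does \emph{not} by itself make ``distinct order statistics differ by arbitrarily large multiplicative factors.'' If, say, two weights are i.i.d.\ uniform on $[0,1]$, then $\max_e w_e - w_{\ALG}$ is smaller than any fixed $\delta>0$ with probability bounded away from zero, so $M^{w_{\ALG}}/M^{\max_e w_e}=M^{-(\max_e w_e - w_{\ALG})}$ can be close to $1$ on an event of constant probability no matter how large $M$ is; your claim that the non-maximal term ``vanishes uniformly in the algorithm as the separation grows'' fails as stated because there is no positive separation to speak of. The paper repairs exactly this point by first assuming the supports lie in $\mathbb{R}_{\geq\varepsilon}$ and discretizing each $D_e$ into $\varepsilon$-sized bins before exponentiating, which guarantees that any two distinct realized values on the discretized support differ by at least $\varepsilon$, hence any non-maximal selection contributes ratio at most $M^{-\varepsilon}$. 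Your write-up needs this (or an equivalent) device, together with the bookkeeping that the binning perturbs $\PbM$ negligibly as $\varepsilon\to 0$ (the probability that two distinct weights land in the same bin vanishes with $\varepsilon$, so one takes $M\to\infty$ first and then $\varepsilon\to 0$). With that ingredient added, your argument goes through and coincides with the paper's; note also that the paper additionally offers a different first proof, via the explicit worst-case instance $w_1=1$, $w_i=M^{i-1}$ with probability $1/|E|$, Yao's principle, and a reduction to single-threshold rules, matched against the known tight bound $\PbM(\cF)=1/e$ — an instance-specific calculation your distribution-modification approach avoids.
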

\begin{proof}
The direction of $\EoR(\cF) \geq \PbM(\cF)$ is trivial since, if we count only the cases when an algorithm selects the maximum, then $w_{\ALG}/\max_{e \in E} w_e=1$, which happens with probability of at least $\PbM(\cF)$. The other direction follows by considering the following instance: for $i \in [|E|]$, let
\begin{align*}
    w_1 \sim D_1 = 1, \ w_i \sim D_i = \begin{cases}
        0, \text{ w.p. } 1 - \frac{1}{|E|}\\
        M^{i-1}, \text{ w.p. } \frac{1}{|E|}
    \end{cases}, \forall i > 1.
\end{align*}
By Yao's Minimax Principle \citep{yao_minimax}, we assume without loss of generality that the best algorithm (with respect to $\EoR$) is a deterministic algorithm. Moreover, observe that any optimal algorithm can be assumed to never select zeroes and be history-independent. The latter is true because either the algorithm sees $0$ (which should never be selected) or a value that is the maximum so far, and thus, all the preceding values are irrelevant for both the algorithm and the optimum. 
Moreover, for this instance, any deterministic, history-independent algorithm that does not select zeroes, can be described as selecting a value if and only if it belongs to a fixed subset of values $S \subseteq \{1,M,\ldots,M^{|E|-1}\}$. 
We now distinguish between two cases: If $1\in S$, then   $\EoR(\ALG,D,\cF) \leq 1 \cdot (1-\frac{1}{|E|})^{|E|-1} + \frac{1}{M} \approx \frac{1}{e}$, where the approximation is since we can take arbitrarily large $M$ and $|E|$. 
If $1\notin S$, then for every $0<i<j$, such that $M^i\in S$, and $M^j \notin S$, the performance of the algorithm improves by replacing $S$ by $S\setminus \{M^i\} \cup \{M^j\}$. Repeating this argument, we get that $S$ can be described as $\{M^i \mid i\geq k\}$, for some constant $k>0$, which means it can be described by a single-threshold deterministic algorithm.

Note that if the algorithm picks a non-maximal element, the $\EoR \leq 1/M$, which tends to $0$ for $M$ tending to infinity. This implies that any constant competitive algorithm (in terms of expected ratio) has to select the maximum exactly. Let us now denote by $\rho_i$ the probability that the maximum is selected, had the algorithm chosen threshold $\sigma = M^{i-1}$. We have for all $i > 1$,
\begin{align*}
    \rho_i &:= \Pr\left[\frac{w_{\ALG}}{\max_{e \in E}{w_e}} = 1 \given \sigma = M^{i-1}\right] = \sum_{j \geq i}{\left(\frac{1}{|E|} \cdot \prod_{\substack{k > i\\k \neq j}}{\left(1 - \frac{1}{|E|}\right)}\right)}\\
    &= \prod_{j \geq i}{\left(1-\frac{1}{|E|}\right)} \cdot \sum_{k \geq i}{\frac{1/|E|}{1 - 1/|E|}},
\end{align*}
where the first equality is since if the threshold of the algorithm is $\sigma = M^{i - 1}$, then  $\ALG$ selects the maximum 
if  exactly one  among $(w_i,\ldots ,w_{|E|})$ is not $0$. 
We now simplify the above expression and obtain
\begin{align*}
   \rho_1 &:= \left(1 - \frac{1}{|E|}\right)^{|E|-1} \ \text{ and } \ \rho_i = \frac{|E|-i+1}{|E|} \cdot \left(1 - \frac{1}{|E|}\right)^{|E|-i}.
\end{align*}
We note that $\rho_i$ is a decreasing sequence, and is  maximized for $i=1$.   It holds that for every $M,|E|$, $\EoR(\cF) \leq  \frac{1}{M} + \rho_1$, thus, 
when $M,|E|$ are approaching infinity, we get that  
\begin{align*}
    \EoR(\cF) \leq \lim_{M,|E| \rightarrow \infty } \left( \frac{1}{M} + \rho_1\right) =  \frac{1}{e},
\end{align*}
as desired.
\end{proof}

Below, we show an even stronger result for single-choice prophet. For a constant $M>0$, and a product distribution $D$ of dimension $n$, we denote by $M^D$ the following product distribution: a vector $\bw$ is drawn from $D$, and the realized vector is then $M^{\bw}:=(M^{w_1},\ldots,M^{w_n})$.  The next result states that, for the single-choice feasibility constraint ($\cF = \{S \subseteq E \mid |S|\leq 1\}$), and every product distribution $D$,  $\PbM(\cF, D)$ equals to the limit of $\EoR(\cF, M^D)$, for $M$ that goes to infinity. 
\begin{proposition}\label{prop:pbm-eor-all}
In the single-choice prophet inequality  ($\cF=\{S \subseteq E \mid |S|\leq 1\}$), 
it holds that 
\begin{align*}
    \lim_{M \rightarrow \infty}\EoR(\cF, M^D) = \emph{\PbM}(\cF, D).
\end{align*}
\end{proposition}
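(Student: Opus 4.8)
The plan is to show that the distribution $M^D$ essentially ``separates'' the weights by orders of magnitude, so that in the limit $M \to \infty$ only the single largest realized weight matters for the ratio $w_{\ALG}/\max_e w_e$. First I would observe that for any fixed realized vector $\bw$ drawn from $D$ (with distinct coordinates, which happens almost surely since there are no point masses), the transformed vector $M^{\bw}$ has the property that the maximum coordinate $M^{\max_e w_e}$ dominates every other coordinate by a multiplicative factor of at least $M^{\Delta}$, where $\Delta>0$ is the gap between the largest and second-largest realized values. Consequently, if the algorithm selects any element other than the argmax of $\bw$, the ratio $w_{\ALG}/\max_e w_e$ is at most $M^{-\Delta} \to 0$, whereas if it selects the argmax, the ratio is exactly $1$.

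The key structural step is to argue that, up to the vanishing contribution of non-maximal selections, maximizing $\EoR(\cF, M^D)$ becomes the same optimization problem as maximizing $\PbM(\cF, D)$. I would make this precise by noting that an optimal $\EoR$ algorithm on $M^D$ has no incentive (in the limit) to ever select a non-maximal element, since doing so contributes a ratio tending to $0$; hence its value converges to the probability that it selects the overall maximum. Crucially, selecting the overall maximum of $M^{\bw}$ is equivalent to selecting the overall maximum of $\bw$, and since $M^x$ is a strictly increasing bijection, an online algorithm faced with the transformed sequence can equivalently be viewed as an online algorithm faced with the original sequence (each decision sees $M^{w_t}$, which carries the same information as $w_t$). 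Thus the best achievable limiting $\EoR$ on $M^D$ is exactly the best achievable $\PbM$ on $D$, giving both inequalities at once.

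I would structure the two directions as follows. For the easy direction, $\liminf_M \EoR(\cF, M^D) \geq \PbM(\cF, D)$: take the optimal $\PbM$ algorithm for $D$, run the ``same'' threshold/stopping rule on $M^D$, and lower bound its $\EoR$ by $\PbM(\cF,D)$ (counting only the exact-maximum event, where the ratio is $1$). For the hard direction, $\limsup_M \EoR(\cF, M^D) \leq \PbM(\cF, D)$: fix any algorithm on $M^D$, decompose its expected ratio into the contribution from selecting the maximum (bounded above by $\PbM(\cF,D)$ since no online algorithm can beat the optimal $\PbM$) plus the contribution from selecting a non-maximal element (bounded above by $M^{-\Delta}$ pointwise, which I would show integrates to something vanishing as $M \to \infty$).

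The main obstacle I anticipate is making the limit argument for the non-maximal contribution fully rigorous: the gap $\Delta = \Delta(\bw)$ is itself a random variable that can be arbitrarily small (though positive almost surely), so the bound $M^{-\Delta}$ does not go to zero uniformly. To handle this I would use a dominated-convergence or truncation argument: condition on the event $\{\Delta \geq \eta\}$ for a small constant $\eta>0$, where the non-maximal ratio is at most $M^{-\eta} \to 0$, and separately bound the contribution on $\{\Delta < \eta\}$ by $\Pr[\Delta < \eta]$, which can be made arbitrarily small by taking $\eta \to 0$ since $\Delta>0$ almost surely (using continuity of the distributions). This two-parameter limit (first $M\to\infty$, then $\eta\to 0$) is the delicate technical core of the proof; everything else reduces to the observation that the information available to an online algorithm is invariant under the monotone reparametrization $x \mapsto M^x$.
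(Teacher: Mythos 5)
Your proof is correct, and it rests on the same core mechanism as the paper's: after exponentiation, a non-maximal selection contributes a ratio of at most $M^{-(\mathrm{gap})}$, the maximal-selection contribution is controlled by $\PbM$, and $\PbM$ is invariant under the strictly increasing map $x \mapsto M^x$ (an online algorithm on the transformed sequence carries exactly the same information), so $\PbM(\cF, M^D) = \PbM(\cF, D)$. Where you genuinely diverge is in the treatment of small gaps. The paper discretizes each support into $\varepsilon$-sized bins so that distinct discretized values differ by at least $\varepsilon$, making the non-maximal ratio deterministically at most $M^{-\varepsilon}$ and yielding the one-line bound $\EoR(\cF,M^D) \leq \PbM(\cF,M^D) + \left(1-\PbM(\cF,M^D)\right)\cdot M^{-\varepsilon}$; but it leaves implicit the $\varepsilon \to 0$ limit, the fact that rounding perturbs $\EoR$ and $\PbM$ only negligibly, and the ties and point masses that rounding introduces. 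You instead leave $D$ untouched and condition on the random gap $\Delta$ between the two largest realized weights: on $\{\Delta \geq \eta\}$ the non-maximal contribution is at most $M^{-\eta}$, while $\Pr[\Delta < \eta] \to 0$ as $\eta \to 0$ because the independent, atomless coordinates are almost surely distinct. Your two-parameter limit ($M \to \infty$ first, then $\eta \to 0$) is sound precisely because your bound $\PbM(\cF,D) + M^{-\eta} + \Pr[\Delta < \eta]$ is uniform over algorithms --- the pointwise bound $M^{-\Delta}$ does not depend on $\ALG$ --- so it survives the supremum in the definition of $\EoR(\cF, M^D)$; it would be worth stating this uniformity explicitly in a final write-up. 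On balance, your route buys rigor (no unaccounted discretization error) at the cost of slightly more bookkeeping, and you also spell out the easy direction $\liminf_{M} \EoR(\cF,M^D) \geq \PbM(\cF,D)$, via simulating a near-optimal $\PbM$ algorithm and counting only the exact-maximum event, which the paper treats as immediate.
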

\begin{proof}
Without loss of generality, let us assume that each distribution $D_e$ is defined over the support $\mathbb{R}_{\geq \varepsilon}$. Furthermore, we discretize the support into $\varepsilon$-sized bins and consider the distribution $D_e^\prime$, resulting from the original distribution when each element in a given bin is associated to the bin's left endpoint. For each $D^\prime_e$, let us define distribution $M^D_e$, whose support is $\left\{M^w : w \in \textsf{support}(D^\prime_e)\right\}$, to be the distribution obtained from drawing $w_e \sim D_e$, and raising $M$ to the power of $w_e$. We observe that $M^D = \times_{e\in E} M^D_e$, and has support $\bigcup_{e \in E}{\left\{M^w : w \in \textsf{support}(D^\prime_e)\right\}}$. 

Suppose that $\ALG$ selects $\max_{e \in E}{w_e} = M^t$ with probability $\PbM(\cF, D)$. Note that for any $s < t$, the ratio $M^s/M^t \leq M^{-\varepsilon}$. We take $M \gg 1/\varepsilon$ and obtain
\begin{align*}
    \lim_{M \rightarrow \infty}\EoR(\cF, M^D) \leq \lim_{M \rightarrow \infty}\PbM(\cF, M^D) + (1 - \PbM(\cF, M^D)) \cdot M^{-\varepsilon} = \PbM(\cF, D).
\end{align*}
This concludes the proof.
\end{proof}

An alternative proof of \Cref{prop:pbm-eor-1/e} follows immediately from \Cref{prop:pbm-eor-all}.  
This provides an alternative proof to the fact that $\EoR$ and $\PbM$ are equivalent in the single-choice setting.
Indeed, we have
\begin{align*}
    \EoR(\cF) = \inf_D\lim_{M \rightarrow \infty}\EoR(\cF, M^D) = \inf_D\PbM(\cF, D) = \PbM(\cF).
\end{align*}

\section{Implications of ``Good'' \EoR}\label{app:const-eor}

In this appendix, we formalize the discussion presented in the introduction, regarding the implications of a good $\EoR$.

\begin{claim}\label{cl:imply}
For a (downward-closed) feasibility constraint $\cF$ and a product distribution $D$, if algorithm $\emph{\ALG}$ satisfies $\EoR(\cF,D,\ALG) \geq \alpha$, 
then
\begin{align*}
    \Pr[a(\bw) \geq \frac{\alpha}{2} \cdot f(\bw)] \geq \frac{\alpha}{2}.
\end{align*}
\end{claim}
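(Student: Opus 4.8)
The plan is to run a bounded reverse-Markov argument on the random ratio $R := a(\bw)/f(\bw)$. The key structural fact I would use first is that $R$ takes values in $[0,1]$ pointwise: since $\ALG$ always returns a feasible set, $a(\bw) = \bw(\ALG(\bw)) \le \bw(\OPT(\bw)) = f(\bw)$, so $0 \le R \le 1$. The hypothesis $\EoR(\cF,D,\ALG) \ge \alpha$ is exactly $\E[R] \ge \alpha$, and the goal is to lower bound $\Pr[R \ge \alpha/2]$. (Under the no-point-mass assumption, the degenerate event $f(\bw)=0$ occurs with probability $0$, so the ratio is well defined almost surely and does not affect any of the bounds.)

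Next I would split the expectation according to whether the event $\{R \ge \alpha/2\}$ holds:
\begin{align*}
\alpha \;\le\; \E[R] \;=\; \E\!\left[R \cdot \bone_{R \ge \alpha/2}\right] + \E\!\left[R \cdot \bone_{R < \alpha/2}\right],
\end{align*}
and bound the two terms separately. On the event $\{R \ge \alpha/2\}$ I use the upper bound $R \le 1$, which gives $\E[R \cdot \bone_{R \ge \alpha/2}] \le \Pr[R \ge \alpha/2]$. On the complementary event, by definition $R < \alpha/2$, so $\E[R \cdot \bone_{R < \alpha/2}] \le \frac{\alpha}{2}\Pr[R < \alpha/2] \le \frac{\alpha}{2}$.

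Combining these two bounds yields $\alpha \le \Pr[R \ge \alpha/2] + \frac{\alpha}{2}$, and rearranging gives $\Pr[R \ge \alpha/2] \ge \frac{\alpha}{2}$, which is precisely the claim. I do not expect any genuine obstacle here: the only ingredient beyond an elementary split-and-rearrange is the boundedness $R \le 1$, which is immediate from feasibility of $\ALG$ together with the definition of $f$ as the offline optimum. The argument is essentially a one-liner once the ratio is observed to lie in $[0,1]$.
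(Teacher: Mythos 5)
Your proof is correct and is essentially the same argument as the paper's: both decompose $\E[a(\bw)/f(\bw)] \geq \alpha$ over the event $\{a(\bw) \geq \tfrac{\alpha}{2} f(\bw)\}$ and its complement, using $a(\bw)/f(\bw) \leq 1$ on the former and the defining bound $\tfrac{\alpha}{2}$ on the latter, then rearrange. Your write-up just spells out the one-line computation the paper compresses.
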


\begin{proof}
$ \Pr[a(\bw) \geq \frac{\alpha}{2} \cdot f(\bw)] \geq \EoR(\cF) - \Pr[a(\bw) < \frac{\alpha}{2}f(\bw)] \cdot \frac{\alpha}{2}  \geq \frac{\alpha}{2}. $
\end{proof}

\begin{remark}
We remark that our $\RoE$ to $\EoR$ black-box reductions have stronger guarantees than what Claim \ref{cl:imply} states.
Indeed, \Cref{sec:relation} shows that with constant probability (as opposed to a probability depending on $\alpha$, and thus the feasibility constraint), an $O(\alpha)$ approximation of the ex-post optimum can be achieved.
\end{remark}

\begin{claim}\label{clm:one-over-e}
Even in single-choice settings, no algorithm can select the maximum (or even any constant approximation to the maximum) with probability larger than $\frac{1}{e}$ (see \Cref{app:single} for details).
\end{claim}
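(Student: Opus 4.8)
The plan is to exhibit a single instance on which, as a scale parameter grows, ``select the maximum'' and ``select any constant-factor approximation to the maximum'' coincide, and then bound the probability of selecting the maximum by $1/e$. I would reuse the instance from the proof of \Cref{prop:pbm-eor-1/e}: element $1$ has deterministic weight $1$, and for $i>1$, element $i$ realizes $M^{i-1}$ with probability $1/|E|$ and $0$ otherwise, independently. The crucial structural feature is that the non-zero realizable values form a geometric sequence with ratio $M$, so that if the maximum realized value is $M^t$, then the second-largest non-zero value is at most $M^{t-1} = M^{-1}\cdot M^t$.

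First I would handle the reduction from ``constant approximation'' to ``exact maximum.'' Fix any constant $\beta\in(0,1]$. On this instance, an element whose value is at least $\beta\cdot\max_{e\in E} w_e$ must, once $M>1/\beta$, be exactly the element attaining the maximum, since every other non-zero value is at most $M^{-1}$ times the maximum. Hence, for $M$ large enough, the event ``the algorithm obtains a $\beta$-approximation'' is identical to the event ``the algorithm selects the exact maximum,'' and it suffices to bound the probability of the latter.

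Next I would bound the probability of selecting the exact maximum, exactly as in the proof of \Cref{prop:pbm-eor-1/e}. By Yao's minimax principle it is enough to consider deterministic algorithms, and any optimal algorithm on this instance can be taken to never select a zero and to be history-independent, hence to be described by a single threshold $\sigma=M^{i-1}$. For threshold index $i$, the probability $\rho_i$ of selecting the exact maximum equals the probability that exactly one of $w_i,\ldots,w_{|E|}$ is non-zero, which is maximized at $i=1$, where $\rho_1=(1-1/|E|)^{|E|-1}$. Taking $M,|E|\to\infty$ yields $\rho_1\to 1/e$, so no algorithm exceeds $1/e$.

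The statement is essentially a corollary of \Cref{prop:pbm-eor-1/e}, so the only genuinely new ingredient is the first step: the observation that constant-factor approximation collapses to exact selection on a geometrically-separated instance. The main (mild) obstacle is getting the order of quantifiers right: one must fix $\beta$ first and only then send $M\to\infty$, so that the separation $M^{-1}<\beta$ holds, after which the bound $1/e$ is uniform in $\beta$. This uniformity is precisely what makes the ``constant approximation'' strengthening come for free.
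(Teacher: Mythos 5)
Your proposal is correct and follows essentially the same route as the paper: the claim is backed by \Cref{prop:pbm-eor-1/e} in \Cref{app:single}, which uses exactly the geometric instance you describe, the same reduction to single-threshold deterministic algorithms, and the same $\rho_1 = (1-1/|E|)^{|E|-1} \to 1/e$ computation. Your explicit observation that a constant $\beta$-approximation collapses to exact maximum selection once $M > 1/\beta$ (with the quantifier order fixed-$\beta$-then-$M\to\infty$) is only a spelled-out version of the paper's remark that picking a non-maximal element yields a ratio at most $1/M$, so nothing genuinely new is needed.
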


\begin{claim} \label{clm:two-thirds}
For every constants $c,\varepsilon>0$, there exist a (downward-closed) feasibility constraint $\cF$ and a product distribution $D$ for which $\EoR(\cF)$ is constant, and such that the probability that any algorithm achieves better than $\left(\frac{2}{3} + \varepsilon\right)$-approximation of the offline optimum is arbitrarily small.
That is,
\begin{align*}
    \Pr[a(\bw) \geq \left(\frac{2}{3}+\varepsilon\right)\cdot f(\bw)] \leq 5e^{-\frac{\varepsilon^2 n}{12}}.
\end{align*}
\end{claim}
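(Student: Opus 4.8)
The instance I would use is exactly the one of \Cref{ex:ex2}: take $\cF$ to be the partition matroid on $n$ disjoint pairs, where pair $i$ consists of a deterministic box of weight $1$ and a box of weight $w_{2,i}$ equal to $0$ or $2$ each with probability $1/2$, and I would fix the arrival order so that within every pair the \emph{deterministic} box is presented first. Writing $Y_i := \bone_{w_{2,i}=2}$, the $Y_i$ are i.i.d.\ $\mathrm{Bernoulli}(1/2)$, and since the offline optimum picks the larger box of each pair, $f(\bw)=\sum_{i\in[n]}\max\{1,w_{2,i}\}=n+K$ with $K:=\sum_{i\in[n]}Y_i$. To see that $\EoR(\cF)$ is constant, note that $\cF$ is a matroid, so it admits a constant $\RoE$ (indeed $\RoE(\cF)\ge 1/2$ by the classical matroid prophet inequality), and hence $\EoR(\cF)\ge\RoE(\cF)/12\ge 1/24$ by \Cref{alg-analysis}. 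All the real work is therefore in the concentration bound.

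The key structural step is to express the value of an arbitrary online algorithm. Because the deterministic box arrives first, the algorithm must commit to taking or skipping it \emph{before} seeing $w_{2,i}$; let $s_i\in\{0,1\}$ indicate that it skips the deterministic box of pair $i$. If it takes that box the pair contributes $1$; if it skips, the best it can do is grab the second box when it realizes to $2$, contributing $2Y_i$. Hence for \emph{every} algorithm
\begin{align*}
    a(\bw)\;\le\;\sum_{i\in[n]}\bigl[(1-s_i)+2s_iY_i\bigr]\;=\;n+T,\qquad T:=\sum_{i\in[n]}s_i(2Y_i-1).
\end{align*}
The crucial online constraint is that $s_i$ depends only on the realizations $Y_1,\dots,Y_{i-1}$ (and the algorithm's internal coins), independently of $Y_i$. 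The main obstacle, and the heart of the argument, is to turn this single sentence into a bound uniform over \emph{all} algorithms: I would observe that, conditioned on the algorithm's coins, $T$ is a martingale with respect to the filtration generated by $Y_1,\dots,Y_n$, since $\E[s_i(2Y_i-1)\mid Y_1,\dots,Y_{i-1}]=s_i\cdot\E[2Y_i-1]=0$, with increments bounded by $1$ in absolute value. (The wrong arrival order would be fatal: if the random box came first, the algorithm could match the optimum in every pair and the claim would be false, so pinning down the order is essential.)

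It then remains to combine two concentration estimates. By the Azuma--Hoeffding inequality, $\Pr[T\ge s]\le e^{-s^2/(2n)}$ for every $s>0$, uniformly over the predictable process $\{s_i\}$ and hence over all algorithms; and by the Chernoff/Hoeffding bound for the binomial $K$, $\Pr[K< n/2 - t]\le e^{-2t^2/n}$. Assuming without loss of generality $\varepsilon\le 1/3$ (otherwise $2/3+\varepsilon>1\ge a/f$ and the event is empty), I would set $t=s=\varepsilon n/2$ and check that on the event $\{T\le s\}\cap\{K\ge n/2 - t\}$ one has $a\le n+s$ while $(2/3+\varepsilon)f\ge(2/3+\varepsilon)(3n/2-t)>n+s$, using $s+(2/3+\varepsilon)t=\tfrac{\varepsilon n}{2}(\tfrac53+\varepsilon)\le \varepsilon n<\tfrac32\varepsilon n$. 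Consequently $\{a\ge(2/3+\varepsilon)f\}\subseteq\{T>s\}\cup\{K<n/2-t\}$, and a union bound yields
\begin{align*}
    \Pr\!\left[a(\bw)\ge\Bigl(\tfrac23+\varepsilon\Bigr)f(\bw)\right]\;\le\; e^{-\varepsilon^2 n/8}+e^{-\varepsilon^2 n/2}\;\le\; 2e^{-\varepsilon^2 n/8}\;\le\; 5e^{-\varepsilon^2 n/12},
\end{align*}
which is the claimed bound and tends to $0$ as $n\to\infty$. Alternatively, one can concentrate $f$ directly through its self-bounding property via \Cref{prop:bounding} and \Cref{blm}, which is presumably the route producing the stated constants $5$ and $1/12$.
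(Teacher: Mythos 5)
Your proof is correct, but it replaces the paper's concentration mechanism with a genuinely different one. The paper uses the same instance (\Cref{ex:ex2}) and also splits into a lower tail for $f$ and an upper tail for the algorithm, but it decomposes the algorithm's value as $a(\bw) = A + B$, where $B = \sum_i B_i$ counts pairs in which the larger box was taken, asserts that the $B_i$ are independent Bernoulli variables with parameter at most $1/2$, and applies plain Chernoff bounds to $A$ and $B$; it does \emph{not} invoke the self-bounding/BLM machinery here, so your closing guess about the source of the constants $5$ and $1/12$ is off. Your route via $a(\bw) \leq n + T$ with $T = \sum_i s_i(2Y_i - 1)$ and Azuma--Hoeffding over the predictable skip decisions $s_i$ is arguably more careful on precisely the delicate point: for an adaptive algorithm the $B_i$ can be correlated through the algorithm's history-dependence, so the paper's independence assertion really needs a stochastic-domination or martingale argument of the kind you supply explicitly; your version also makes the arrival order (deterministic box first in each pair) explicit, which the paper leaves implicit behind the phrase ``by construction,'' and you correctly observe that the reverse order would make the claim false. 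What the paper's approach buys in exchange is elementarity (two Chernoff bounds and a union bound) and a better constant for the first part of the claim: its direct algorithm (take all first boxes or all second boxes with probability $1/2$ each) gives $\EoR \geq 1/2$ for \emph{every} product distribution on this $\cF$, whereas your detour through the matroid prophet inequality and \Cref{alg-analysis} only yields $\EoR(\cF) \geq 1/24$ --- still a constant, so the claim stands either way, and your exponent $\varepsilon^2 n/8$ is in fact slightly stronger than the stated $\varepsilon^2 n /12$ before you relax it to match the claim.
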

\begin{proof}
The proof of this claim derives directly from \Cref{ex:ex2}. Let us first show that the expected ratio is constant (no matter what $D$ is). This is indeed the case because an algorithm that selects either all the first elements or all the second elements of the pairs with probability $\frac{1}{2}$ each, will achieve an $\EoR(\cF) \geq \frac{1}{2}$. 

We now show the second part of the claim: by Chernoff bound, we have that
\begin{align*}
    \Pr\left[f(\bw) < \frac{3-\varepsilon}{2} \cdot n\right] \leq e^{-\frac{\varepsilon^2 n}{12}},
\end{align*}
since all boxes realizations are independent Bernoulli random variables and $\E[f(\bw)] = \frac{3}{2}n$. Let us now consider any algorithm, and define the following random variables: $A$ is the number of times the second box in each pair realized in a $2$, while $B_i$ indicates whether or not the algorithm has selected the larger of the two boxes in the $i$-th pair, and we denote $B := \sum_{i \in [n]} B_i$. It is easy to verify that $A$ is binomial, and also $B$ is since $B_i$'s are independent Bernoulli random variables, even though $A$ and $B$ are dependent. We also observe that, by law of total probability,
\begin{align*}
    \Pr\left[a(\bw) >  \frac{2+\varepsilon}{2} \cdot n\right] &= \underbrace{\Pr\left[a(\bw) > \frac{2+\varepsilon}{2} \cdot n \given f(\bw) < \frac{3-\varepsilon}{2} \cdot n\right]}_{\leq 1} \cdot \Pr\left[f(\bw) < \frac{3-\varepsilon}{2} \cdot n\right] \\
    &+ \Pr\left[a(\bw) > \frac{2+\varepsilon}{2} \cdot n \given f(\bw) \geq \frac{3-\varepsilon}{2} \cdot n\right] \cdot \underbrace{\Pr\left[f(\bw) \geq \frac{3-\varepsilon}{2} \cdot n\right]}_{\leq 1}\\
    &\leq e^{-\frac{\varepsilon^2 n}{12}} + \Pr\left[A + B > \frac{2+\varepsilon}{2} \cdot n \given A \geq \frac{1-\varepsilon}{2} \cdot n\right].
\end{align*}
Hereby, the last inequality derives from the fact that $f(\bw) = n + A$, while $a(\bw) = n + A - (n - B) = A + B$. To bound the last term, we can write
\begin{align*}
    \Pr\left[A + B > \frac{2+\varepsilon}{2} \cdot n \given A \geq \frac{1-\varepsilon}{2} \cdot n\right] &\leq \Pr\left[A + B > \frac{2+\varepsilon}{2} \cdot n\right] + \Pr\left[A < \frac{1-\varepsilon}{2} \cdot n\right]\\
    &\leq \Pr\left[B > \frac{2+\varepsilon}{4} \cdot n\right] + \Pr\left[A > \frac{2+\varepsilon}{4} \cdot n\right] + \Pr\left[A < \frac{1-\varepsilon}{2} \cdot n\right]\\
    &\leq 2e^{-\frac{\varepsilon^2 n}{16 + 4\varepsilon}} + e^{-\frac{\varepsilon^2 n}{12}} \\
    &\leq 3e^{-\frac{\varepsilon^2 n}{12}},
\end{align*}
where the second step follows from the union bound. All in all, we can bound the probability that any algorithm obtains a large overall value by $\Pr\left[a(\bw) >  \frac{2+\varepsilon}{2} \cdot n\right] \leq 4e^{-\frac{\varepsilon^2 n}{12}}$. Thus,
\begin{align*}
    \Pr[a(\bw) < \left(\frac{2}{3}+\varepsilon\right)\cdot f(\bw)] &\geq 1 - \Pr\left[f(\bw) < \frac{3-\varepsilon}{2} \cdot n \vee a(\bw) > \frac{2+\varepsilon}{2} \cdot n \right]\\
    &\geq 1 - e^{-\frac{\varepsilon^2 n}{12}} - 4e^{-\frac{\varepsilon^2 n}{12}}\\
    &\geq 1 - 5e^{-\frac{\varepsilon^2 n}{12}},
\end{align*}
where the second step follows again from the union bound.
\end{proof}

\begin{claim}\label{clm:negl-pbm}
There exists a family of feasibility constraints $\cF$ such that both $\RoE(\cF)$ and $\EoR(\cF)$ are constants, but $\PbM(\cF) \leq 2^{-\frac{|E|}{2}}$, i.e., the probability of selecting the maximum is exponentially small in the cardinality of the ground set.
\end{claim}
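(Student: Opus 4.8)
The plan is to exhibit the partition matroid from \Cref{ex:ex2} as the witnessing constraint. Concretely, I would take a ground set of $|E| = 2n$ elements grouped into $n$ pairs, with $\cF$ the family of all sets containing at most one element of each pair, and fix the product distribution $D$ in which, for each pair $i$, the first element is deterministically $1$ and the second is $0$ or $2$ with equal probability, presented in the order ``deterministic element first, stochastic element second'' within each pair. I would then verify the three assertions separately: that $\RoE(\cF)$ and $\EoR(\cF)$ are constants, and that $\PbM(\cF,D) \le 2^{-n} = 2^{-|E|/2}$.

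For the two positive guarantees, note first that $\cF$ is a partition matroid, so the offline optimum decomposes as $f(\bw) = \sum_{i \in [n]} \max\{w_{1,i}, w_{2,i}\}$. Running the classical single-choice prophet inequality of \citet{samuel-cahn} independently on each pair and summing over pairs yields, by linearity of expectation, an algorithm with $\E[a(\bw)] \ge \tfrac12 \E[f(\bw)]$ for \emph{every} product distribution; hence $\RoE(\cF) \ge \tfrac12$. Since this holds against the worst-case distribution, I can invoke \Cref{alg-analysis} to transfer the guarantee to the expected ratio, obtaining $\EoR(\cF) \ge \RoE(\cF)/12 \ge \tfrac{1}{24}$. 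Both are absolute constants, independent of $n$.

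The substance of the claim is the upper bound on $\PbM(\cF, D)$. Because each pair contributes at most its maximum and the algorithm selects at most one element per pair, matching $f(\bw)$ forces the algorithm to select, in every pair $i$, the element attaining $\max\{w_{1,i}, w_{2,i}\}$; call this event $M_i$. The key per-pair observation is that the decision on the deterministic element of pair $i$ must be committed before the stochastic element is revealed: if the algorithm takes the deterministic element it matches the pair's max only when the stochastic element turns out to be $0$ (probability $\tfrac12$), and if it declines it matches only when the stochastic element turns out to be $2$ (probability at most $\tfrac12$); either way $\Pr[M_i \mid \mathcal{H}_i] \le \tfrac12$, where $\mathcal{H}_i$ is the history just before pair $i$ arrives, using that the stochastic draw of pair $i$ is independent of $\mathcal{H}_i$. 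I would then chain these bounds: since $M_1, \dots, M_{i-1}$ are $\mathcal{H}_i$-measurable, the tower property gives $\Pr[\cap_{j \le i} M_j] = \E[\indicator{\cap_{j < i} M_j} \cdot \Pr[M_i \mid \mathcal{H}_i]] \le \tfrac12 \Pr[\cap_{j < i} M_j]$, and induction yields $\PbM(\cF, D) = \Pr[\cap_{i \in [n]} M_i] \le 2^{-n}$, whence $\PbM(\cF) \le 2^{-n} = 2^{-|E|/2}$.

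The main obstacle is this chaining step rather than the per-pair bound: an online algorithm may correlate its decisions across pairs arbitrarily, so one cannot simply assert independence of the events $M_i$. The conditioning/tower argument circumvents this, and it is also where the chosen arrival order (deterministic element before stochastic element within each pair) is essential — under the reverse order the algorithm could identify and grab each pair's maximum with certainty, so specifying the order is part of the construction, not a cosmetic detail.
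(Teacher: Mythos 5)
Your proof is correct and uses the same witness as the paper: \Cref{ex:ex2}'s partition matroid with the same per-pair observation that no online algorithm matches a pair's maximum with probability above $\tfrac12$. The differences are in how the two halves are discharged, and your choices are defensible, in places more careful than the paper's. For the $\PbM$ bound, the paper simply asserts ``all realizations are independent'' and multiplies the per-pair bounds; your tower-property chaining over the filtration $\mathcal{H}_i$ is the rigorous version of this step, correctly handling the fact that an adaptive algorithm can correlate its decisions across pairs (the events $M_i$ need not be independent, only conditionally bounded), and you are right that fixing the arrival order (deterministic before stochastic within each pair) is load-bearing rather than cosmetic. For the constants, the paper argues directly on the specific distribution of \Cref{ex:ex2} via Jensen's inequality, obtaining the better constant $\tfrac23$ for both measures but technically only for that one instance, whereas $\RoE(\cF)$ and $\EoR(\cF)$ are infima over all product distributions; your route --- running \citet{samuel-cahn} per pair to get $\RoE(\cF)\ge\tfrac12$ for \emph{every} product distribution and then invoking the reduction of \Cref{alg-analysis} to get $\EoR(\cF)\ge\tfrac1{24}$ --- yields weaker constants but actually proves the worst-case-over-$D$ statement the claim formally requires (the $\PbM$ part, being an upper bound on an infimum, only needs the single exhibited distribution, which both proofs use). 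So the two arguments buy different things: the paper's is shorter with sharper constants on the exhibited instance, yours closes the quantifier gap over distributions and the correlation gap across pairs.
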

\begin{proof}
Let us take \Cref{ex:ex2}. For the first part, as seen in Claim \ref{clm:two-thirds}, the algorithm that selects either all the first elements or all the second elements of the pairs guarantees an $\RoE$ and an $\EoR$ of at least $1/2$. Moreover, by Jensen's Inequality, we have
\begin{align*}
    \underbrace{\E\left[\frac{a(\bw)}{f(\bw)}\right]}_{\leq \EoR(\cF,D)} &\geq \overbrace{\frac{ \E\left[a(\bw)\right]}{\E\left[f(\bw)\right]}}^{\leq \RoE(\cF,D)} =  \frac{n}{\E\left[\sum_{i \in [n]}\max\{w_{1,i}, w_{2,i}\}\right]} = \frac{n}{\frac{3}{2}n} = \frac{2}{3}.
\end{align*}
For the second part of the claim, let us observe that the probability of selecting the maximum is the probability that, for each pair of boxes $i$, any algorithm $\ALG$ chooses the larger realized value of the two boxes, $w_{1,i}, w_{2,i}$. Denoting by $w_{\ALG, i}$ the weight of the box chosen by the algorithm for pair $i$, we have
\begin{align*}
    \PbM(\cF) = \Pr[\forall i \in [n], w_{\ALG, i} = \max\{w_{1,i}, w_{2,i}\}] \leq 2^{-n},
\end{align*}
where the last equality follows from the following two facts: first, no online algorithm can choose the maximum of each pair with probability greater than $\frac{1}{2}$ by construction. Moreover, all realizations are independent, and thus we get the claimed upper bound since $n = \frac{|E|}{2}$.
\end{proof}

\section{Omitted Proofs from Section \ref{sec:model}}

\begin{proof}[Proof of \Cref{prop:bounding}]
Let $\bu, \bv \in \mathbb{R}_{\geq 0}^{|E|}$ be nonnegative weight vectors and $\OPT(\bu), \OPT(\bv)$ be, respectively, their optimal feasible subset. 
\begin{enumerate}
    \item \textbf{Lipschitz.} We assume, without loss of generality, that $f(\bu) \geq f(\bv)$.  Then 
    \begin{eqnarray*}
         \mid f(\bu) - f(\bv)\mid &= & f(\bu) - f(\bv) \\ & =& \sum_{e \in \OPT(\bu)}{u_e} - \sum_{e \in \OPT(\bv)}{v_e}
         \\ & \leq& \sum_{e \in \OPT(\bu)}{u_e} - \sum_{e \in \OPT(\bu)}{v_e}
           \\ & \leq& \sum_{e \in \OPT(\bu)}{\mid u_e-v_e\mid } 
         \\ & \leq & \sum_{e \in E}{\abs{u_e - v_e}} = \norm{\bu - \bv}_1,
    \end{eqnarray*}
    where the first inequality is by \Cref{def:f}.
    We conclude that $f$ is 1-Lipschitz.
    \item \textbf{Monotonicity.} Let $\bu,\bv$ be two vectors such that $\bv \leq \bu$ component-wise. It holds that
    \begin{align*}
       f(\bv)=\bv(\OPT(\bv)) \leq  \bu(\OPT(\bv)) \leq    \bu(\OPT(\bu)) =f(\bu),
    \end{align*}
    where the first inequality is by our assumption that $\bu\geq \bv$, and the second inequality is by the definition of $\OPT$. 
    \item \textbf{Self-boundness.} For every element $e\in E$, we define $\bw^{(e)} := (w_1, \dots, w_{e-1}, w_{e+1}, \dots, w_{|E|})$ to be the weight vector $\bw$ deprived of $e^{\text{th}}$ coordinate and $f_e: \mathbb{R}_{\geq 0}^{|E|-1} \rightarrow \mathbb{R}$ to be the function $f$ that takes as input a weight vector $\bw^{(e)}$ and apply $f$ to the vector $(\bw^{(e)},w_e=0)$.  Thus $f/\tau$ in the range of $[0,\tau]^E$ is self-bounding since: 
    \begin{itemize}
        \item The first condition of \Cref{def:bounding} that $0 \leq \frac{1}{\tau}f(\bw) - \frac{1}{\tau}f_e(\bw^{(e)}) \leq 1$ is satisfied by monotonicity, the $1$-Lipschitzness, and the fact that $w_e$ is restricted to the range of $[0,\tau]$.
        \item The second condition of \Cref{def:bounding} holds, since
    \begin{align*}
        \sum_{e \in E}\left(\frac{1}{\tau}f(\bw) - \frac{1}{\tau}f_e(\bw^{(e)})\right) &= \sum_{e \in \OPT(\bw)}\left(\frac{1}{\tau}f(\bw) - \frac{1}{\tau}f_e(\bw^{(e)})\right) \leq \frac{1}{\tau} \cdot \sum_{e \in \OPT(\bw)}{w_e} = \frac{1}{\tau} \cdot f(\bw),
    \end{align*}
    where the first equality derives from the fact that if $e \notin \OPT(\bw)$, then $f(\bw) - f_e(\bw^{(e)}) = 0$, the inequality is by the $1$-Lipschitzness, and the last equality is by the definition of $f$.
    \end{itemize}  
\end{enumerate}
This concludes the proof.
\end{proof}

\section{Auxiliary Notions from Section \ref{sec:relation}}\label{app:omitted}

In \Cref{sec:relation}, we have used the following definitions and claims, as well as the properties of self-bounding functions (see \Cref{def:bounding}).

\begin{claim}[Jensen's Inequality]\label{jensen-orig} Let $u$ be a random variable and $g$ be any function, then
\begin{align*}
    g\left(\E\left[u\right]\right) &\leq \E\left[g(u)\right], \text{ if } g \text{ is convex.}\\
    g\left(\E\left[u\right]\right) &\geq \E\left[g(u)\right], \text{ if } g \text{ is concave.}
\end{align*}
\end{claim}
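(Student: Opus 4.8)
The plan is to prove the convex case directly and then obtain the concave case by negation, the whole argument resting on a single tool: the \emph{supporting line} (subgradient) characterization of convex functions. Concretely, for a convex $g$ and any interior point $\mu$ of its domain there exists a slope $a \in \mathbb{R}$ such that $g(x) \geq g(\mu) + a(x-\mu)$ for every $x$ in the domain; this is just the geometric fact that the graph of a convex function lies above each of its supporting lines.

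First I would set $\mu := \E[u]$, invoke the supporting line at $\mu$ to get a slope $a$ with $g(x) \geq g(\mu) + a(x-\mu)$ pointwise, then substitute the random variable $u$ for $x$ and take expectations. Linearity of expectation collapses the linear term, since $\E[u] - \mu = 0$ by the choice of $\mu$:
\begin{align*}
    \E[g(u)] \geq \E\left[g(\mu) + a(u-\mu)\right] = g(\mu) + a\left(\E[u] - \mu\right) = g(\mu) = g(\E[u]).
\end{align*}
This settles the convex case.

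For the concave case I would use that $g$ is concave precisely when $-g$ is convex. Applying the inequality just proved to $-g$ gives $\E[-g(u)] \geq -g(\E[u])$, and multiplying through by $-1$ reverses the sign to yield $\E[g(u)] \leq g(\E[u])$, as required.

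The only nontrivial ingredient—hence the main obstacle—is justifying the supporting line for a general, possibly non-differentiable, convex $g$. When $g$ is differentiable the tangent line with $a = g'(\mu)$ works immediately; in the general case one appeals to the standard fact from convex analysis that a convex function admits a subgradient at every interior point of its domain. One also has to check that $\mu = \E[u]$ is finite and lies in (the interior of) the domain, which is automatic in every application in this paper, where the weights are nonnegative and all relevant expectations are finite.
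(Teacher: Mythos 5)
Your proof is correct and is the canonical supporting-line (subgradient) argument for Jensen's inequality: fix $\mu = \E[u]$, bound $g$ from below by an affine function touching at $\mu$, take expectations so the linear term vanishes, and obtain the concave case by applying the convex case to $-g$. The paper itself states this claim without any proof, treating it as a classical fact, so there is no in-paper argument to compare against; your write-up supplies the standard justification correctly, including the right caveats (existence of a subgradient at interior points of the domain for non-differentiable $g$, and finiteness of $\E[u]$, both of which hold in every use of the claim in this paper since the weights are nonnegative with finite relevant expectations).
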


\begin{claim}\label{jensen}
Let $a > 0$ be a constant, and $u, v > 0$ two independent random variables, then we have
\begin{align*}
    \frac{a}{a+\E\left[u\right]} &\leq \E\left[\frac{a}{a+u}\right],\\
    \frac{\E\left[u\right]}{a+\E\left[v\right]} &\leq \E\left[\frac{u}{a+v}\right].
\end{align*}
\end{claim}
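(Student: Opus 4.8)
The plan is to derive both inequalities from the scalar Jensen inequality (Claim \ref{jensen-orig}) applied to suitable convex functions, with independence supplying the extra ingredient needed for the second one.

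For the first inequality, I would set $g(x) := \frac{a}{a+x}$ on $x > 0$. Since $a > 0$, we compute $g''(x) = \frac{2a}{(a+x)^3} > 0$, so $g$ is convex on the positive reals. Applying the convex case of Claim \ref{jensen-orig} to $g$ and the random variable $u$ gives $g(\E[u]) \leq \E[g(u)]$, which is exactly $\frac{a}{a+\E[u]} \leq \E\left[\frac{a}{a+u}\right]$.

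For the second inequality, I would first use the independence of $u$ and $v$ to factor the expectation as $\E\left[\frac{u}{a+v}\right] = \E[u] \cdot \E\left[\frac{1}{a+v}\right]$. Then, applying Claim \ref{jensen-orig} to the convex function $h(x) := \frac{1}{a+x}$ (convex on $x > 0$, since $h''(x) = \frac{2}{(a+x)^3} > 0$) and the random variable $v$ yields $\E\left[\frac{1}{a+v}\right] \geq \frac{1}{a+\E[v]}$. Multiplying through by the nonnegative constant $\E[u]$ gives $\E\left[\frac{u}{a+v}\right] \geq \frac{\E[u]}{a+\E[v]}$, as desired.

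There is no genuine obstacle here: the whole content is the convexity of $x \mapsto \frac{1}{a+x}$ (and the closely related $x \mapsto \frac{a}{a+x}$) on the positive half-line, combined in the second case with factoring the expectation across the independent pair $(u,v)$. The one point worth flagging is that independence is essential for the second inequality — without it one cannot split $\E\left[\frac{u}{a+v}\right]$ into $\E[u]\,\E\left[\frac{1}{a+v}\right]$, and the stated bound need not hold in general.
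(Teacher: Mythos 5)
Your proof is correct and follows essentially the same route as the paper: the first inequality by Jensen's inequality applied to the convex function $x \mapsto \frac{a}{a+x}$, and the second by factoring the expectation via independence and then applying Jensen to $x \mapsto \frac{1}{a+x}$. Your explicit verification of convexity via second derivatives and your remark that independence is essential are fine additions but do not change the argument.
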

\begin{proof}
The first inequality is a direct consequence of Jensen's Inequality (Claim \ref{jensen-orig}) with function $g(X) = a/(a+X)$ which is convex. The second inequality above holds because by independence of $x,y$, we have
\begin{align*}
    \E\left[\frac{X}{a+Y}\right] = \E\left[X\right] \cdot \E\left[\frac{1}{a+Y}\right] \geq \frac{\E\left[X\right]}{a+\E\left[Y\right]}.
\end{align*}
where the last step follows again from Jensen's Inequality. 
\end{proof}

\begin{claim}\label{avg-pb}
For any collection of $p_1, \dots, p_n \in (0, 1)$, let $p := \left(\prod_{i \in [n]}{p_i}\right)^{1/n}$. We have that \begin{align*}
    n\cdot\frac{1-p}{p} \leq \sum_{i \in [n]}{\frac{1-p_i}{p_i}}
\end{align*}
\end{claim}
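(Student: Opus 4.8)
The plan is to reduce this claim to the classical arithmetic mean--geometric mean (AM--GM) inequality via a simple substitution. First I would rewrite each summand on both sides using $\frac{1-x}{x} = \frac{1}{x} - 1$. Applying this to the right-hand side gives $\sum_{i \in [n]} \frac{1-p_i}{p_i} = \sum_{i \in [n]} \frac{1}{p_i} - n$, and applying it to the left-hand side gives $n \cdot \frac{1-p}{p} = \frac{n}{p} - n$. The additive term $-n$ is common to both sides, so after cancelling it the claimed inequality is seen to be \emph{equivalent} to
\begin{align*}
    \frac{n}{p} \leq \sum_{i \in [n]} \frac{1}{p_i}.
\end{align*}

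The key observation I would then make is that, by the definition $p = \left(\prod_{i \in [n]} p_i\right)^{1/n}$, the quantity $\frac{1}{p}$ is precisely the geometric mean of the positive reals $\frac{1}{p_1}, \dots, \frac{1}{p_n}$; indeed $\frac{1}{p} = \left(\prod_{i \in [n]} \frac{1}{p_i}\right)^{1/n}$. The arithmetic mean of these same numbers is $\frac{1}{n}\sum_{i \in [n]} \frac{1}{p_i}$. Since each $p_i \in (0,1)$, the numbers $\frac{1}{p_i}$ are well-defined and positive, so AM--GM applies directly and yields $\left(\prod_{i \in [n]} \frac{1}{p_i}\right)^{1/n} \leq \frac{1}{n}\sum_{i \in [n]} \frac{1}{p_i}$, i.e. $\frac{1}{p} \leq \frac{1}{n}\sum_{i \in [n]} \frac{1}{p_i}$.

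Multiplying this last inequality through by $n$ gives exactly the equivalent form $\frac{n}{p} \leq \sum_{i \in [n]} \frac{1}{p_i}$ derived above, which completes the proof. There is no real obstacle here: the entire argument is routine once the $\frac{1}{p_i}$ substitution is made, and the only (minor) point requiring attention is recognizing that the reduction to $\frac{n}{p} \le \sum_i \frac{1}{p_i}$ turns the statement into a verbatim instance of AM--GM applied to the reciprocals $\frac{1}{p_i}$, rather than to the $p_i$ themselves.
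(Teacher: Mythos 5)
Your proof is correct, and it takes a genuinely different---and in fact sounder---route than the paper's. You subtract $n$ from both sides to reduce the claim to $\frac{n}{p} \leq \sum_{i \in [n]} \frac{1}{p_i}$, which is verbatim AM--GM applied to the reciprocals $\frac{1}{p_i}$ (equivalently, the harmonic--geometric mean inequality). The paper instead tries to interpolate through the \emph{arithmetic} mean $\bar p := \frac{1}{n}\sum_{i \in [n]} p_i$: writing $p = e^{\frac{1}{n}\sum_i \log p_i}$, it first claims $n\,\frac{1-p}{p} \leq n\,\frac{1-\bar p}{\bar p}$ ``as a consequence of AM--GM,'' and then $n\,\frac{1-\bar p}{\bar p} \leq \sum_i \frac{1-p_i}{p_i}$. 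The second of these steps is valid (it is AM--HM, i.e.\ Jensen for the convex function $x \mapsto \frac{1}{x}-1$), but the first is in the wrong direction: AM--GM gives $p \leq \bar p$, and since $x \mapsto \frac{1-x}{x}$ is \emph{decreasing}, this yields $\frac{1-p}{p} \geq \frac{1-\bar p}{\bar p}$, not $\leq$. (The paper also states an unused $n=2$ ``base case,'' apparently the remnant of an abandoned induction.) Your substitution $\frac{1-x}{x} = \frac{1}{x}-1$ sidesteps the intermediate arithmetic mean entirely, buying a one-line argument that, incidentally, supplies a correct proof where the paper's own chain of inequalities has a flawed step; the statement itself is true, exactly as your derivation shows.
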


\begin{proof}
Let us consider the base case for $p_1, p_2 \in (0,1)$: 
\begin{align*}
    \frac{2(1-\sqrt{p_1p_2})}{\sqrt{p_1p_2}} \leq \frac{1 - p_1}{p_1} + \frac{1 - p_2}{p_2}
\end{align*}

We now obtain the following,

\begin{align*}
\frac{n(1-p)}{p} &= n \cdot \frac{1 - e^{\log\left(\prod_{i \in [n]}{p_i}\right)^{1/n}}}{e^{\log\left(\prod_{i \in [n]}{p_i}\right)^{1/n}}} = n \cdot \frac{1 - e^{\frac{1}{n}\sum_{i \in [n]}{\log p_i}}}{e^{\frac{1}{n}\sum_{i \in [n]}{\log p_i}}} \\
& \leq n \cdot \frac{1 - 1/n\sum_{i\in[n]}{p_i}}{1/n\sum_{i\in[n]}{p_i}}\\
& \leq \sum_{i \in [n]}{\frac{1-p_i}{p_i}}
\end{align*}

Hereby, the first inequality is a consequence of the AM-GM inequality. 
\end{proof}

\begin{claim}\label{opts}
For every realization of $\bw$, it holds that $f(\bw) \leq f(\overline{\bw}) + \sum_{e \in E}{w_e \cdot \bone\left[w_e > \tau\right]}$.
\end{claim}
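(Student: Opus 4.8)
The plan is to prove the inequality deterministically, for each fixed realization together with its coupled truncation $\overline{\bw}$, by decomposing the optimal feasible set for $\bw$ according to the threshold $\tau$ and exploiting that on the ``core'' the two weight vectors coincide. The only property of the coupling I will use is the one from the construction of $\overline{\bw}$: $\overline{w}_e = w_e$ whenever $w_e \leq \tau$, and $\overline{w}_e \geq 0$ otherwise.

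First I would fix a realization and set $S^\ast := \OPT(\bw)$, so that $f(\bw) = \sum_{e \in S^\ast} w_e$, and split this sum over the elements of $S^\ast$ whose weight is at most $\tau$ and those whose weight exceeds $\tau$:
\begin{align*}
f(\bw) = \sum_{\substack{e \in S^\ast \\ w_e \leq \tau}} w_e \; + \sum_{\substack{e \in S^\ast \\ w_e > \tau}} w_e.
\end{align*}
For the first (``core'') sum, the key step is to observe that $S^\ast_{\leq \tau} := \{e \in S^\ast : w_e \leq \tau\}$ is a subset of the feasible set $S^\ast$, hence itself feasible because $\cF$ is downward-closed. Since $\overline{w}_e = w_e$ on this set, this gives $\sum_{e \in S^\ast,\, w_e \leq \tau} w_e = \overline{\bw}(S^\ast_{\leq \tau}) \leq f(\overline{\bw})$, the inequality being just the definition of $f(\overline{\bw})$ as the maximum weight over feasible sets. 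For the second (``tail'') sum, I bound it by the full tail mass, $\sum_{e \in S^\ast,\, w_e > \tau} w_e \leq \sum_{e \in E} w_e \cdot \bone\left[w_e > \tau\right]$, since it is a partial sum of nonnegative terms. Adding the two bounds yields exactly the claim.

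The main obstacle — really the only subtle point — is charging the core part of the optimal set against $f(\overline{\bw})$ correctly; this is precisely where downward-closedness is needed, since it guarantees that restricting $S^\ast$ to its below-threshold elements keeps it feasible. Everything else is bookkeeping over nonnegative terms. As an alternative route, I could avoid the set-restriction argument by first noting that $f$ is subadditive (from $f(\bu+\bv) = \sum_{e \in \OPT(\bu+\bv)}(u_e+v_e) \leq f(\bu) + f(\bv)$, using that $\OPT(\bu+\bv)$ is feasible) and combining it with the pointwise domination $w_e \leq \overline{w}_e + w_e \cdot \bone\left[w_e > \tau\right]$ and monotonicity of $f$ (\Cref{prop:bounding}): writing $\bv$ for the vector with $v_e = w_e \cdot \bone\left[w_e > \tau\right]$, this gives $f(\bw) \leq f(\overline{\bw} + \bv) \leq f(\overline{\bw}) + f(\bv) \leq f(\overline{\bw}) + \sum_{e \in E} w_e \cdot \bone\left[w_e > \tau\right]$. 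I expect the direct decomposition above to be the cleaner and more self-contained presentation.
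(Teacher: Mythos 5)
Your proof is correct and follows essentially the same route as the paper: both decompose $\OPT(\bw)$ at the threshold $\tau$, charge the below-threshold part against $f(\overline{\bw})$ via downward-closedness of $\cF$ (using that $\overline{w}_e = w_e$ there), and bound the above-threshold part by the full tail mass $\sum_{e \in E} w_e \cdot \bone\left[w_e > \tau\right]$. The subadditivity-plus-monotonicity alternative you sketch is a valid variant, but your primary argument is the paper's proof.
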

\begin{proof}
Let $S = \{e \mid w_e \leq \tau$\}. We have that
\begin{eqnarray*}
    f(\bw)  &= &\sum_{e \in \OPT(\bw) \cap S}w_e  + \sum_{e \in \OPT(\bw) \setminus S}w_e  \\
    & = &  \sum_{e \in \OPT(\bw) \cap S}{\overline{w}_e}  + \sum_{e \in \OPT(\bw) \setminus S}{w_e } \\
    & \leq &  f(\overline{\bw}) + \sum_{e \in E \setminus S}{w_e }  = f(\overline{\bw}) + \sum_{e \in E}{w_e \cdot \bone\left[w_e > \tau\right]},
\end{eqnarray*}
where the second equality is since $\overline{w}_e=w_e$ for every element $e \in S$, and the inequality is since $\OPT(\bw) \setminus S$ is a feasible set, and $\OPT(\bw) \setminus S \subseteq E\setminus S$.
\end{proof}

\begin{claim}[Adapted from Theorem 1 in \citep{esfandiari}]\label{gamma-lemma}
For any set of distributions $\{D_e\}_{e \in E}$,
\begin{align*}
    \Pr\left[\cE_0\right] &= \gamma,\\
    \Pr\left[\cE_1\right] &\geq \gamma \log\left(\frac{1}{\gamma}\right).
\end{align*}
\end{claim}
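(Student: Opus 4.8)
The final statement to prove is Claim~\ref{gamma-lemma}, which asserts two facts about the threshold $\tau$ defined by $\Pr[\tau \geq \max_{e\in E} w_e]=\gamma$: namely that $\Pr[\cE_0]=\gamma$ and $\Pr[\cE_1]\geq \gamma\log(1/\gamma)$.

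The first identity is essentially immediate from the definition. The plan is to observe that the event $\cE_0 = \{\forall e\in E: w_e\leq\tau\}$ is \emph{exactly} the event $\{\max_{e\in E}w_e\leq\tau\}$. Since the distributions have no point masses, $\Pr[\max_e w_e \leq \tau] = \Pr[\max_e w_e < \tau] = \gamma$ by the very definition of $\tau$ in Equation~\eqref{eq:t}. So $\Pr[\cE_0]=\gamma$ follows with no further work.

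For the second inequality, the approach is to write everything in terms of the per-element probabilities $p_e := \Pr[w_e\leq\tau]$. By independence, $\Pr[\cE_0]=\prod_{e\in E}p_e=\gamma$. The event $\cE_1$ (exactly one element exceeds $\tau$) has probability $\Pr[\cE_1]=\sum_{e\in E}(1-p_e)\prod_{e'\neq e}p_{e'} = \left(\prod_{e'}p_{e'}\right)\sum_{e\in E}\frac{1-p_e}{p_e} = \gamma\sum_{e\in E}\frac{1-p_e}{p_e}$. The goal is thus to lower-bound $\sum_e \frac{1-p_e}{p_e}$ by $\log(1/\gamma)$. Here is where Claim~\ref{avg-pb} enters: setting $p:=\left(\prod_e p_e\right)^{1/n}=\gamma^{1/n}$ (with $n=|E|$), that claim gives $\sum_e\frac{1-p_e}{p_e}\geq n\cdot\frac{1-p}{p} = n(p^{-1}-1) = n(\gamma^{-1/n}-1)$. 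It then remains to show $n(\gamma^{-1/n}-1)\geq\log(1/\gamma)$, which I would verify by writing $\gamma^{-1/n}=e^{(1/n)\log(1/\gamma)}$ and invoking the elementary inequality $e^x\geq 1+x$ with $x=\frac{1}{n}\log(1/\gamma)$, giving $n(e^{x}-1)\geq nx=\log(1/\gamma)$. Chaining these gives $\Pr[\cE_1]=\gamma\sum_e\frac{1-p_e}{p_e}\geq\gamma\log(1/\gamma)$, as required.

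The main obstacle is really just the reduction to the symmetric quantity $\sum_e\frac{1-p_e}{p_e}$ and recognizing that the worst case (smallest $\cE_1$ probability for fixed product $\gamma$) is the symmetric one where all $p_e$ are equal --- this is precisely the content of the AM--GM-type bound in Claim~\ref{avg-pb}, so it has been abstracted away as a lemma. Given that claim and the convexity inequality $e^x\geq 1+x$, the argument is routine; the only care needed is to track that $\prod_e p_e=\gamma$ exactly (using $\Pr[\cE_0]=\gamma$ from the first part) so that the substitution $p=\gamma^{1/n}$ is valid.
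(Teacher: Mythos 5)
Your proposal is correct and follows essentially the same route as the paper: reduce $\Pr[\cE_1]$ to $\gamma\sum_{e}\frac{1-p_e}{p_e}$ via independence, symmetrize with Claim~\ref{avg-pb}, and bound $|E|\bigl(\gamma^{-1/|E|}-1\bigr)\geq\log(1/\gamma)$. The only (cosmetic) difference is in that last elementary step, where you invoke $e^x\geq 1+x$ directly while the paper argues via monotonicity in $|E|$ and the limit $|E|\to\infty$ --- your version is arguably cleaner, but the argument is the same.
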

\begin{proof}
By the way we have defined $\tau$, we have $\Pr\left[\tau > \max_{e \in E}{w_e}\right] = \gamma$. This is equivalent to saying
\begin{align*}
    \Pr\left[\forall \ e \in E: \ w_e \leq \tau\right] = \gamma
\end{align*}
We immediately notice that the probability of the complement of above is the same as the probability of at least one $w_e$ with value at least $\tau$,
\begin{align*}
    \Pr\left[\tau < \max_{e \in E}{w_e}\right] = \Pr\left[\exists \ e \in E: \ w_e > \tau\right]
\end{align*}

We also know that the RHS can be written as follows by union bound of disjoint events: letting $p_e := \Pr\left[w_e < \tau\right]$,
\begin{align*}
    \Pr\left[\exists \ e \in E: \ w_e > \tau\right] &\geq \Pr\left[\exists ! \ e \in E: \ w_e > \tau\right] \\
    &= \sum_{e \in E}{\left((1-p_e)\cdot \prod_{e^\prime \neq e}{p_{e^\prime}}\right)} \\
    &= \prod_{e \in E}{p_e} \cdot \sum_{e \in E}{\frac{1-p_e}{p_e}}
\end{align*}

Hereby, the first equality follows from the fact that events of the form ``$\exists ! \ e \in E: \ w_e > T$" are disjoint, and the union bound holds with equality. Let us note that the threshold is larger than the maximum over all realizations if and only if all $w_e$'s realize in a value less than the threshold $\tau$, which happens by definition with probability $\gamma$, which means that $\prod_{e \in E}{p_e} = \gamma$. By Claim \ref{avg-pb} (below), we let $p := \left(\prod_{e \in E}{p_e}\right)^{1/|E|}$ and get
\begin{align*}
     \sum_{e \in E}{\frac{1-p_e}{p_e}} &\geq |E| \cdot \frac{(1-p)}{p} \\
     &\geq |E| \cdot \frac{1-\gamma^{1/|E|}}{\gamma^{1/|E|}} \geq \log\left(\frac{1}{\gamma}\right)
\end{align*}

The second inequality follows from noticing again that $p = \gamma^{1/|E|}$, while the third by observing that the LHS of the same inequality is non-increasing in $|E|$ and the smallest value is attained for $|E| \rightarrow \infty$. All in all, we get that 
\begin{align*}
    \Pr\left[\exists ! \ e \in E: \ w_e > \tau\right] \geq \gamma \cdot \log\left(\frac{1}{\gamma}\right)
\end{align*}
and the statement follows.
\end{proof}




\section{Extensions of Our Results}

\subsection{Extension to Single-Sample}\label{app:ss}
In this appendix, we discuss an extension of our results to the case of single sample settings. In our results and definitions, we assumed that the algorithm has  \emph{full knowledge} of the distributions from where the elements' weights are drawn. This is used to distinguish between running two different subroutines: the first selects a single element, while the second run the combinatorial algorithm (see also \Cref{general,alg:eor-roe}). On the other hand, as mentioned in \Cref{relwork}, there has been a significant effort towards designing prophet inequalities with provable guarantees when the decision-maker has only access to some samples from each distribution. 
We explain now how our result can be extended to single sample settings.
To do so, we first define equivalent definitions to those of \Cref{sec:model}. 

To measure our performance, given a downward-closed family $\cF$, an algorithm $\ALG$, and a product distribution $D$, we define
\begin{align*}
    \SEoR(\cF, \ALG, D) := \E\left[\frac{\bw(\ALG(\bw))}{f(\bw)}\right],
\end{align*}
where the expectation runs over the stochastic generation of the input, as well as the (possible) randomness of the algorithm.
Similarly, we define 
\begin{equation}
    \SEoR(\cF,\ALG) := \inf_{D} \SEoR(\cF, \ALG, D), \label{eq:SEORD}
\end{equation}
and
\begin{equation}
    \SEoR(\cF):= \sup_{\ALG}  \SEoR(\cF,\ALG). \label{eq:SEORF} 
\end{equation}

We note that since we consider sample-based algorithms, the algorithm is oblivious to the distributions, but may depend on the sampled values, and the online values observed so far.

We will compare our results to the standard objective of maximizing the ratio of expectations between the algorithm and the offline optimum. Accordingly, we denote
\begin{align*}
    \SRoE(\cF, \ALG, D) := \frac{\E\left[\bw(\ALG(\bw))\right]}
    {\E\left[{f(\bw)}\right]}.
\end{align*}

Analogously to Equations~\eqref{eq:SEORD},~and \eqref{eq:SEORF}, we define $\SRoE(\cF,\ALG)$ and  $\SRoE(\cF)$.

\begin{corollary}[Single-sample simulation]\label{cor:ss}
For every downward-closed family of feasibility constraints $\cF$, it holds that
\begin{align}
\emph{\SEoR}(\cF) \geq \frac{\emph{\SRoE}(\cF)}{144}.
\end{align}
\end{corollary}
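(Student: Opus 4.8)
The plan is to run \Cref{general} essentially verbatim, replacing each of its three distribution-dependent ingredients---the threshold $\tau$ of Equation~\eqref{eq:t}, the quantity $W=\E[f(\overline{\bw})]$ driving the case split, and the black-box $\RoE$ subroutine---by a sample-based surrogate, and then to re-run the analysis of \Cref{newcase1,newcase2}, absorbing the extra estimation error into the constants. The starting point is a single-sample $\RoE$ algorithm with guarantee $\alpha=\SRoE(\cF)$ on the (truncated) worst-case distribution, which exists by the definition of $\SRoE(\cF)$ as a supremum over sample-based algorithms.

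First I would set the threshold from the data. Drawing one sample $s_e\sim D_e$ per element and setting $\tau:=\max_{e\in E} s_e$, the exchangeability of the sample vector $\bs$ and the true weight vector $\bw$ (both i.i.d.\ from $D$, with no point masses) gives $\Pr[\tau\geq \max_{e\in E} w_e]=\tfrac12$, so $\tau$ realizes the role of the Equation~\eqref{eq:t} threshold with $\gamma=\tfrac12$. Since $\bs$ is independent of $\bw$, I would condition on the realized sample throughout: with $\tau$ fixed, the events $\cE_0,\cE_1$, the truncated distribution $\overline{D}$, \Cref{obs:dists}, and the self-bounding concentration of $f/\tau$ from \Cref{prop:bounding,blm} all apply exactly as in \Cref{sec:relation}, and only at the end would I integrate over the law of $\tau$.

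The crux is performing the case split $W\lessgtr c\tau$ without access to $W$. Here I would use the self-bounding property: conditioned on $\tau$, \Cref{prop:bounding} makes $f(\overline{\bw})/\tau$ self-bounding, so by \Cref{blm} a single realization of $f$ on a truncated instance concentrates around $W$ whenever $W/\tau$ is large---exactly the regime in which the estimate must be trusted. Thus a sample-based comparison $f(\overline{\bs})\lessgtr c'\tau$ correctly identifies the regime up to a constant slack in $c$; in the low regime $f$ need not concentrate, but there \Cref{newcase1} (``catch the superstar'') is invoked and requires no concentration at all. An equivalent and more robust alternative is to avoid estimating $W$ altogether and instead randomize: run the ``catch the superstar'' rule at threshold $\tau$ with constant probability and the single-sample $\RoE$ subroutine restricted to elements below $\tau$ otherwise. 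Whichever regime the instance is in, the matching branch delivers the bound of \Cref{newcase1} or \Cref{newcase2}, and executing it with constant probability costs only a constant factor.

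I expect the case split to be the main obstacle, precisely because $W$ is a purely distributional quantity that can neither be read off the online values nor reliably estimated from a single sample in the low regime. Putting the pieces together---the factor $12$ already lost in \Cref{alg-analysis}, the extra factor from the randomized (or estimated) case decision, and the degradation from the data-dependent threshold $\tau$ when integrating the conditional guarantees over its law---yields $\SEoR(\cF)\geq \SRoE(\cF)/144$, as claimed. The only genuinely new verifications beyond \Cref{sec:relation} are that exchangeability delivers $\gamma=\tfrac12$ and that conditioning on $\bs$ preserves every structural lemma used in the full-information proof.
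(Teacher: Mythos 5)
Your fallback plan---draw one sample per element, set $\hat\tau := \max_{e\in E} s_e$, flip a fair coin, and run either ``catch the superstar'' at threshold $\hat\tau$ or the single-sample $\RoE$ subroutine, leaving the case split $W \lessgtr c\tau$ entirely to the analysis---is precisely the paper's proof of \Cref{cor:ss}. The paper never attempts your primary route of deciding the regime by comparing $f(\overline{\bs})$ to $c'\tau$; the coin flip makes any such test unnecessary, and your estimator would moreover reuse the same samples both for the threshold and for the regime test, a correlation your sketch does not address (the two-constant argument for the intermediate regime is plausible but left unverified).

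The genuine gap is in how you handle the data-dependent threshold. You propose to condition on the realized sample, assert that ``the events $\cE_0, \cE_1$, the truncated distribution, and the concentration bounds apply exactly as in \Cref{sec:relation},'' and integrate over the law of $\hat\tau$ at the end. Pointwise this fails: exchangeability gives you only the aggregate identity $\Pr[\hat\tau \geq \max_{e}w_e] = \tfrac{1}{2}$, not a quantile property of any realization, so $\Pr[\cE_0 \mid \hat\tau] \neq \gamma$, the bound $\Pr[\cE_1] \geq \gamma\log(1/\gamma)$ of \Cref{gamma-lemma} does not hold conditionally on $\hat\tau$, and the case hypothesis $W \leq c\tau$ is stated in terms of the true quantile $\tau$ rather than the realized $\hat\tau$---hence the conditional guarantees you would integrate are not uniform in $\hat\tau$. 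The paper's repair keeps the true threshold $\tau$ (with $\gamma = \tfrac{1}{2}$) as a purely analytical device and lower-bounds \emph{joint} events of the sample and online vectors: in the low-$W$ case, the event that exactly one sample and exactly one online weight exceed $\tau$ while the overall maximum lies in the online set, which has probability at least $\tfrac{1}{2}\gamma^2 \log^2(1/\gamma)$ and on which $\hat\tau > \tau$ and the algorithm provably selects the unique superstar (every other online value is at most $\tau < \hat\tau$); in the high-$W$ case, the event that no sample and no online weight exceeds $\tau$, of probability $\gamma^2$, which also ensures the single-sample subroutine is effectively run on the truncated instance---a detail your sketch omits, since its assumed $\alpha$-guarantee requires truncated samples as well as truncated online values. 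These squared-$\gamma$ events together with the coin flip are exactly the mechanism degrading the constant from $12$ to $144$; your accounting names the right sources of loss, but without the joint-event argument the ``integrate over the law of $\tau$'' step of your proof does not go through.
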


\begin{proof}
Let us first observe that we cannot compute the expected optimum of an instance and perform the case distinction, but we can still flip a fair coin and run either of the two following subroutines. With probability $1/2$ (the coin lands heads) we select the first element that exceeds $\hat \tau := \max_{e \in E} s_e$, and with probability $1/2$ (the coin lands tails), we run the $\SALG$ subroutine, an algorithm that achieves $\SRoE = \alpha$ only using samples, and which exists by the assumption of the claim. Hereby, we denote by $S = \{ s_e \mid e \in E \}$ the sample set, and by $O = \{ o_e \mid e \in E \}$ the online set. Moreover, let $R = S \cup O$ and $r_1 > r_2 > \dots > r_{2|E|}$ denote the weights in $R$ sorted in decreasing order.
Although the values $W,\tau$ (see analysis of \Cref{alg-analysis}), are not known to the algorithm, the analysis can still be partitioned as in the analysis of \Cref{alg-analysis}.

If $W \leq c \cdot \tau$ and the coin lands heads, then our algorithm selects the first online value $o_e \geq \hat \tau$. Let us only count cases where the following conditions occur jointly: (1) exactly one element in $S$ and exactly one element in $O$ exceed $\tau$, (2) the element with the largest weight is in the online set. It is not difficult to see that this happens with probability at least $\frac{1}{2} \cdot \gamma^2\log^2(1/\gamma)$. In this case, the approximation, following the analysis of \Cref{newcase1}, is $1/(c+1)$. From this first case, we obtain
\begin{align*}
    \SEoR(\cF) \geq \frac{\gamma^2\log^2(1/\gamma)}{4(c+1)}.
\end{align*}

If $W > c \cdot \tau$ and the coin lands tails, we run algorithm $\SALG$ that, given a single sample from each distribution in the input, satisfies $\E\left[\SALG(\bw)  \mid \cE_0 \right]  \geq \alpha \cdot  \E\left[f(\bw) \mid  \cE_0\right]$. As before, we only count cases where no sample and no online element exceeds threshold $\tau$, which happens with probability $\gamma^2$. Following the analysis of \Cref{newcase2}, from this case we have that
\begin{equation*}
        \SEoR(\cF) \geq \frac{\gamma^2}{2\delta}\frac{k-\delta}{k}\alpha ,
\end{equation*}
which means that we lose a multiplicative factor of $\gamma/2$ compared to the full information case. All in all, the described algorithm guarantees
\begin{align*}
    \SEoR(\cF) \geq \min\left\{\frac{\log^2 2}{16 \cdot (\frac{8}{3} \log \frac{3}{\alpha} +1)}, \frac{\alpha}{72}\right\} \geq \frac{\alpha}{144}, 
\end{align*}
where we have chosen the parameters as in \Cref{alg-analysis}. This concludes the proof.
\end{proof}

Note that, for the other direction, running each of the subroutines used in \Cref{alg:eor-roe} with probability $1/2$, we get $\SRoE(\cF, D) \geq \SEoR(\cF, D)/136$.

\subsection{Extension to XOS Functions}\label{app:xos} 
In this appendix, we describe in detail the extension of results contained in \cref{sec:relation,sec:reduction2} to a setting where the decision maker's objective function is XOS, rather than just additive.

\begin{definition}\label{def:xos}
A function $g: 2^{|E|} \to \mathbb{R}_{\geq 0}$ is XOS if there exist $\ell$ vectors $b_1, \dots, b_{\ell} \in \mathbb{R}_{\geq 0}^{|E|}$, such that $g(S) = \max_{i \in [\ell]} \sum_{j\in S}(b_i)_j $.
\end{definition}
We extend \Cref{def:xos} to vectors of weights in the following way:
\begin{definition}
A function $g: \mathbb{R}_{\geq 0}^{|E|} \to \mathbb{R}_{\geq 0}$ is extended-XOS if there exist $\ell$ vectors $b_1, \dots, b_{\ell} \in \mathbb{R}_{\geq 0}^{|E|}$, such that $g(\bw) = \max_{i \in [\ell]} \ev{b_i, \bw}$.
\end{definition}
In Claim \ref{cl:xos}, we show that the projection of an extended-XOS function to $\{0,1\}^{|E|}$ is an XOS function, and that the projections of all extended-XOS functions to $\{0,1\}^{|E|}$, are all XOS functions.

We next extend \Cref{def:f} to extended-XOS functions:

\begin{definition}\label{def:f-xos}
Let $\bw \in \mathbb{R}_{\geq 0}^{|E|}$ be a nonnegative weight vector. We define $\OPT: \mathbb{R}_{\geq 0}^{|E|} \to \cF$ to be the function mapping a vector of weights to a maximum-weight set in family $\cF$. Namely,
\begin{align*}
    \OPT(\bw) = \arg\max_{S \in \cF} \max_{i \in [\ell]} \ev{b_i, \bw_S},
\end{align*}
where $\bw_S \in \mathbb{R}^{|E|}$ is the vector of elements weights in $S$ (and $0$ for elements not in $S$), while $b_i \in \mathbb{R}^{|E|}$ is a vector of nonnegative coefficients. Moreover, we let $f_{\cF}(\bw) =\max_{S \in \cF} \max_{i \in [\ell]} \ev{b_i, \bw_S}$.
\end{definition}
For simplicity, when clear from context, we denote $f_{\cF}$ by $f$.
\begin{claim}\label{cl:xos}
$ f(\bw)$ can be expressed as $f(\bw) = \max_{\substack{i \in [\ell]\\S \in \cF}}\ev{a^S_i, \bw_S}$ and is extended-XOS. Moreover, the function resulting from projecting $f$ onto $\{0,1\}^{|E|}$ (i.e. $f: \{0,1\}^{|E|} \rightarrow \mathbb{R}$ is now a set function) is XOS and describes all XOS functions.
\end{claim}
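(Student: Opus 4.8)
The plan is to prove the assertions in the order they are stated: the closed form for $f$, the extended-XOS property, the fact that the $\{0,1\}$-projection is XOS, and finally that every XOS function is realized as such a projection.

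First I would verify the identity $f(\bw) = \max_{i \in [\ell],\, S \in \cF} \ev{a_i^S, \bw_S}$ by direct computation. With $a_{ij}^S := b_{ij}\cdot \bone_{j\in S}$ and $(\bw_S)_j = w_j\,\bone_{j \in S}$, for each fixed pair $(i,S)$ we have $\ev{a_i^S, \bw_S} = \sum_{j\in S} b_{ij} w_j = \ev{b_i, \bw_S}$, so maximizing over $i$ and $S$ returns $f(\bw)$ exactly as in \Cref{def:f-xos}. The benefit of the $a_i^S$ form is that, since $a_i^S$ already vanishes off $S$, one also has $\ev{a_i^S,\bw_S}=\ev{a_i^S,\bw}$; hence $f$ is a maximum of the full-support linear functionals $\bw \mapsto \ev{a_i^S,\bw}$. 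The extended-XOS property is then immediate by reindexing: the collection $\{a_i^S\}_{i\in[\ell],\,S\in\cF}$ is a finite family (of cardinality $\ell\cdot|\cF|$) of vectors in $\mathbb{R}_{\geq 0}^{|E|}$, with nonnegativity inherited from $b_{ij}\geq 0$, and $f(\bw)$ is precisely the maximum of $\ev{\cdot,\bw}$ over this family.

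Next I would project onto $\{0,1\}^{|E|}$ by identifying a subset $T\subseteq E$ with its indicator $\bone_T$ and evaluating $f(\bone_T) = \max_{i,S}\sum_{j\in T} a_{ij}^S$. This matches \Cref{def:xos} with clause vectors $\{a_i^S\}$, so the projection is an XOS set function.

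The only direction with genuine content is that every XOS function arises this way. Given an arbitrary XOS $g$ with clauses $b_1,\dots,b_\ell$, I would take $\cF = 2^E$ (which is downward-closed) and form the extended-XOS function from these same $b_i$'s. Then for every $T\subseteq E$,
\[
  f(\bone_T) = \max_{S\subseteq E}\max_{i}\sum_{j\in S\cap T} b_{ij} = \max_i \sum_{j\in T} b_{ij} = g(T),
\]
so the projection of $f$ equals $g$. I expect the only step needing care to be the middle equality: because the $b_{ij}$ are nonnegative, enlarging $S$ (e.g.\ to $S=E$) never decreases a clause sum, so maximizing over $\cF=2^E$ recovers the full sum $\sum_{j\in T} b_{ij}$, while coordinates outside $T$ contribute nothing since $(\bone_T)_j=0$ there. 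Everything else in the claim is bookkeeping built on top of these observations.
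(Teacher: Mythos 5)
Your proposal is correct and follows essentially the same route as the paper: the same substitution $a^S_{ij} := b_{ij}\cdot\bone_{j\in S}$ for the closed form, the same observation that the $\{0,1\}$-projection is a maximum of additive functions, and the same construction (reusing the given clause vectors) for the converse. You are slightly more explicit than the paper in two welcome places — noting that $\ev{a^S_i,\bw_S}=\ev{a^S_i,\bw}$ so the extended-XOS definition applies verbatim, and fixing $\cF=2^E$ with the nonnegativity argument for the middle equality, both of which the paper leaves implicit.
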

\begin{proof}
For the first part of the claim, consider a set $T$. Then,
\begin{align}
    f(T) &= \max_{S \in \cF}\max_{i \in [\ell]} \sum_{j \in S \cap T} b_{ij}\nonumber\\
    &= \max_{S \in \cF}\max_{i \in [\ell]} \sum_{j \in T} a^S_{ij}, \label{eq:xos-project}
\end{align}
where the second equality follows by setting $a^S_{ij} := b_{ij} \cdot \bone_{j \in S}$. Hence, $f(\bw) = \max_{\substack{i \in [\ell]\\S \in \cF}}\ev{a^S_i, \bw_S}$ and it is extended-XOS with coefficient vectors $a^S_i$, as per \Cref{def:f-xos}.

The second part of the claim follows immediately by Equation \eqref{eq:xos-project} since this is maximum over additive functions.

The third part of the claim follows since for every XOS function $f$, which is defined with additive functions $a_1,\ldots,a_\ell$, the projection of the extended-XOS function defined by the same additive functions, is $f$.
\end{proof}

We have the following properties for $f$.

\begin{proposition}[Properties of XOS $f_{\cF}$]\label{prop:bounding-xos}
For every downward-closed family of sets $\cF$, the function $f$ $(=f_\cF)$ satisfies the following properties:
\begin{enumerate}
    \item $f$ is ($\max_{i,j}a_{ij}$)-Lipschitz.
    \item $f$ is monotone, i.e., if $\bu \geq \bv$ point-wise, then $f(\bu)\geq f(\bv)$.
    \item For every $\tau > 0$, the function  $\frac{f}{\tau \cdot \max_{i,j}{a_{ij}}}$ restricted to the domain $[0,\tau]^{|E|}$, is self-bounding.
\end{enumerate}
\end{proposition}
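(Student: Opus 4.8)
The plan is to mirror the proof of \Cref{prop:bounding}, exploiting that by \Cref{cl:xos} the function $f(\bw)=\max_{i\in[\ell],\,S\in\cF}\ev{a^S_i,\bw_S}$ is a pointwise maximum of linear forms whose coefficients are nonnegative and bounded by $M:=\max_{i,j}a_{ij}$. Throughout, for a given input $\bw$ I would fix a maximizing pair $(S^*,i^*)$, so that $f(\bw)=\sum_{j\in S^*}a_{i^*j}\,w_j$, and reduce every statement to a manipulation of this single additive clause.

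For Lipschitzness, assume without loss of generality $f(\bu)\geq f(\bv)$ and let $(S^*,i^*)$ be optimal for $\bu$. Since the same pair is feasible (hence a lower bound) for $\bv$, I would write $f(\bv)\geq\sum_{j\in S^*}a_{i^*j}v_j$ and subtract to obtain $\abs{f(\bu)-f(\bv)}\leq\sum_{j\in S^*}a_{i^*j}(u_j-v_j)\leq M\norm{\bu-\bv}_1$, using $a_{i^*j}\leq M$. Monotonicity is immediate: fixing the optimal pair for the smaller vector $\bv$ and using nonnegativity of the coefficients gives $f(\bv)=\sum_{j\in S^*}a_{i^*j}v_j\leq\sum_{j\in S^*}a_{i^*j}u_j\leq f(\bu)$.

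The substance is the self-bounding property of $g:=f/(\tau M)$ on $[0,\tau]^{|E|}$. As in the additive case I define, for each coordinate $e$, the function $f_e(\bw^{(e)}):=f(\bw^{(e)},w_e=0)$ and set $g_e:=f_e/(\tau M)$. The bounded-difference condition $0\leq g(\bw)-g_e(\bw^{(e)})\leq1$ follows from the two parts already proved: the lower bound from monotonicity, and the upper bound from Lipschitzness together with $w_e\leq\tau$, since $f(\bw)-f_e(\bw^{(e)})\leq M\,w_e\leq M\tau$. Here the normalization by $M$ is exactly what is needed to absorb the larger per-coordinate effect that XOS coefficients (as opposed to the unit coefficients in the additive case) can have.

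The crux is the summation condition $\sum_e\bigl(g(\bw)-g_e(\bw^{(e)})\bigr)\leq g(\bw)$, equivalently $\sum_e\bigl(f(\bw)-f_e(\bw^{(e)})\bigr)\leq f(\bw)$. I would split coordinates according to the fixed maximizer $(S^*,i^*)$. If $e\notin S^*$, evaluating this same pair at $(\bw^{(e)},0)$ leaves its value unchanged, so $f_e(\bw^{(e)})\geq f(\bw)$ and, combined with monotonicity, the $e$-th term is zero. If $e\in S^*$, evaluating the same pair at $w_e=0$ decreases its value by precisely $a_{i^*e}w_e$, hence $f_e(\bw^{(e)})\geq f(\bw)-a_{i^*e}w_e$, i.e.\ $f(\bw)-f_e(\bw^{(e)})\leq a_{i^*e}w_e$. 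Summing over $e\in S^*$ then telescopes to $\sum_{e\in S^*}a_{i^*e}w_e=f(\bw)$, and dividing by $\tau M$ yields the claim. The one genuinely new ingredient relative to the additive proof — and the step I expect to be the main obstacle to state cleanly — is this last bound: the temptation is to use the crude Lipschitz estimate $M\,w_e$ per coordinate, which overcounts by a factor of $\abs{S^*}$; instead one must use the clause-specific coefficient $a_{i^*e}$, exploiting that a fixed maximizing clause is additive and that zeroing coordinate $e$ reduces it by exactly $a_{i^*e}w_e$.
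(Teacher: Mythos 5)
Your proposal is correct and follows essentially the same route as the paper's proof: the same three-part decomposition, the same per-coordinate functions $f_e(\bw^{(e)}):=f(\bw^{(e)},w_e=0)$, and—crucially—the same key step of bounding $f(\bw)-f_e(\bw^{(e)})$ for $e$ in the optimal set by the clause-specific quantity $a_{i^*e}w_e$ (rather than the crude Lipschitz bound $M w_e$), so that the sum telescopes exactly to $f(\bw)$. If anything, your Lipschitz argument—lower-bounding $f(\bv)$ by evaluating the single $\bu$-optimal pair $(S^*,i^*)$ at $\bv$—is a slightly cleaner rendering of the paper's computation, which juggles two different maximizing indices before arriving at the same estimate.
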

\begin{proof}
Let $\bu, \bv \in \mathbb{R}_{\geq 0}^{|E|}$ be nonnegative weight vectors and $\OPT(\bu), \OPT(\bv)$ be, respectively, their optimal feasible subset.
\begin{enumerate}
    \item \textbf{Lipschitz.} We assume, without loss of generality, that $f(\bu) \geq f(\bv)$.  Then, letting $i(\bu), i(\bv)$ be respectively the indices of the coefficients maximizing $f(\bv), f(\bu)$, we get
    \begin{eqnarray*}
         \mid f(\bu) - f(\bv)\mid &= & f(\bu) - f(\bv) \\
         & =& \max_{i \in [\ell]} \sum_{j \in \OPT(\bu)} a_{ij}u_j - \max_{i \in [\ell]} \sum_{j \in \OPT(\bv)} a_{ij}v_j
         \\
         & =& \sum_{j \in \OPT(\bu)} a_{i(\bu)j}u_j - \sum_{j \in \OPT(\bv)} a_{i(\bv)j}v_j
         \\
         & \leq& \sum_{j \in \OPT(\bu)} a_{i(\bu)j}u_j - a_{i(\bv)j}v_j
         \\ & \leq& \max_{i,j}a_{ij} \cdot \sum_{j \in \OPT(\bu)}{\mid u_j-v_j\mid } 
         \\ & \leq & \max_{i,j}a_{ij} \cdot \sum_{j \in E}{\abs{u_j - v_j}} = \max_{i,j}a_{ij} \cdot \norm{\bu - \bv}_1,
    \end{eqnarray*}
    \item \textbf{Monotonicity.} Let $\bu,\bv$ be two vectors such that $\bv \leq \bu$ component-wise. Then, we know that $\ev{a_i, \bv} \leq \ev{a_i, \bu}$ for all $i \in [\ell]$, and thus letting $i(\bu), i(\bv)$ be respectively the indices of the coefficients maximizing $f(\bv), f(\bu)$, we get
    \begin{align*}
       f(\bv) = \ev{a_{i(\bv)}, \bv} \leq \ev{a_{i(\bv)}, \bu} \leq \ev{a_{i(\bu)}, \bu} = f(\bu).
    \end{align*}
    \item \textbf{Self-boundness.} For every element $e\in E$, we define $f_e$ to be the function $f$ where the weight of element $e$ is $0$. I.e., $f_e(\bw^{(e)}) := f(w_1, \dots, w_{e-1},0, w_{e+1}, \dots, w_{|E|})$. Thus $\frac{f}{\tau \cdot \max_{i,j}{a_{ij}}}$ in the range of $[0,\tau]^{|E|}$ is self-bounding since: 
    \begin{itemize}
        \item The first condition of \cref{def:bounding} that $0 \leq \frac{1}{\tau\cdot \max_{i,j}{a_{ij}}}f(\bw) - \frac{1}{\tau\cdot \max_{i,j}{a_{ij}}}f_e(\bw^{(e)}) \leq 1$ is satisfied by monotonicity, the $\max_{i,j}a_{ij}$-Lipschitzness, and that fact that $w_e$ is restricted to the range of $[0,\tau]$.
        \item We multiply the second condition of \cref{def:bounding} by factor $\tau \cdot \max_{i,j}{a_{ij}}$ and get
    \begin{align*}
        \sum_{e \in E}\left(f(\bw) - f_e(\bw^{(e)})\right) &= \sum_{e \in \OPT(\bw)} \underbrace{f(\bw)}_{ = \ev{a_{i^*}, \bw}} - \underbrace{f_e(\bw^{(e)})}_{ \geq \ev{a_{i^*}, \bw^{(e)}}} \\
        &\leq \sum_{e \in \OPT(\bw)}{\ev{a_{i^*}, \bw - \bw^{(e)}}} \\
        &= \sum_{e \in \OPT(\bw)}{a_{i^*e} \cdot w_e} = f(\bw),
    \end{align*}
    where the first equality derives from the fact that if $e \notin \OPT(\bw)$, then $f(\bw) - f_e(\bw^{(e)}) = 0$, and the first inequality derives by setting $a_{i^*}$ to be the maximizing coefficient for $f$.
    \end{itemize}  
\end{enumerate}
This concludes the proof.
\end{proof}

\begin{remark}
Before proceeding, we note that \Cref{prop:bounding-xos} generalizes similar claims in previous literature. Namely, \citet{Vondrak10} proves self-boundness for set functions which are XOS or submodular. The key difference between our claim and those in \cite{Vondrak10} is that our function $f$ is not a combinatorial function, in that $f$ receives a vector of arbitrary weights instead of a set. Therefore, Lemma 2.2 in \cite{Vondrak10} follows by projecting \Cref{prop:bounding-xos} onto the hypercube.
Similarly, \Cref{prop:bounding-xos} also generalizes Lemma 2.4 in \citet{BlumCHPPV17} where they prove the self-bounding property for the maximum matching cardinality function. 
\Cref{prop:bounding-xos} is again a generalization in the following senses: first, the constraint is not necessarily a matching, but can be a general packing one. Second, $f$ can be an arbitrary extended-XOS function, and not just the cardinality function.
\end{remark}

Below, we illustrate generalizations of feasibility-based and instance-based reductions under extended XOS functions.

\medskip

\noindent \textbf{Feasibility-based reduction with extended XOS functions.} We redefine threshold $\tau$ as follows for reasons that will be explained later,
\begin{align*}
    \tau : \Pr\left[\exists j \in E: w_j > \frac{\tau}{\max_{i \in [\ell]}a_{ij}}\right] = \gamma.
\end{align*}

Given the above, let us run \Cref{general}, performing the case distinction with the redefined threshold $\tau$. In the ``catch the superstar'' case (\Cref{newcase1}), the algorithm selects the first element $j$ for which $w_j > \frac{\tau}{\max_{i \in [\ell]}a_{ij}}$. Hence, with probability $\gamma \cdot \log(1/\gamma)$ we catch a unique element with value $w_j \cdot \max_{i,j}{a_{ij}} > \tau$. Since, in this case $W \leq c \cdot \tau$, we get a $\frac{\gamma \cdot \log(1/\gamma)}{c + 1}$ expected ratio.

For the ``run the combinatorial algorithm'' case (\Cref{newcase2}), we have that, since $\frac{f}{\tau \cdot \max_{i,j}{a_{ij}}}$ is self-bounding in $[0, \tau]$ (\Cref{prop:bounding-xos}), we observe that in order to maintain the same concentration, we need $c \geq \frac{4 + 2\delta}{3(\delta - 1)^2} \log \frac{k}{\alpha} \cdot \max_{i,j}{a_{ij}}$. Thus, for the extended-XOS function $f$, we will achieve a similar performance to \Cref{alg-analysis} but losing an additional factor of $\max_{i,j}{a_{ij}}$: namely,
\begin{align*}
    \EoR(\cF) \geq \frac{\RoE(\cF)}{12 \cdot \max_{i,j}{a_{ij}}}.
\end{align*}

\noindent \textbf{Instance-based reduction with extended XOS functions.} An almost identical analysis to the above follows in the case of \Cref{alg:eor-roe}. Namely, let us redefine $A$ to be
\begin{align*}
    A := \E\left[\max_{i,j} a_{ij} \cdot w_j\right].
\end{align*}
With this at hand, \Cref{alg-analysis-rev-new} now corresponds to the following statement, again with $f_{\cF}$ being an extended-XOS function:
\begin{align*}
    \RoE(\cF, D) \geq \frac{\EoR(\cF, D)}{68 \cdot \max_{i,j}{a_{ij}}}.
\end{align*}

\section{Alternatives to the{} \EoR} \label{app:alternative}

In this appendix, we delve into the discussion regarding alternative measures to the \EoR, expanding upon what introduced in \Cref{sec:intro}. In particular, the following extension of \PbM, named for simplicity $\PbM_p$, is studied in \cite{HoeferK17,SotoTV21,BahraniBSW21}. The algorithm's goal is to select each element in the ex-post optimal set with probability at least $p$, for largest possible  value of $p$. Formally, we have that
\begin{align*}
    \PbM_p(\cF, D, \ALG) := \min_{e \in E} \Pr[e \in \ALG(\bw) \mid e \in \OPT(\bw)],
\end{align*}
and 
\begin{align*}
    \PbM_p(\cF) := \inf_D \sup_{\ALG} \PbM_p(\cF, D, \ALG).
\end{align*}
Note that for single-choice settings, the above measure reduces to $\PbM$. In \citet[Definition~3]{BahraniBSW21},  this notion is referred to as \emph{probability-competitive} algorithms. It is a major open question 
whether for general matroids in secretary settings, there exists a constant probability-competitive algorithm.

We now show the shortcoming of this measure for downward-closed prophet settings, via the following example. In particular, we present a feasibility constraint and an instance  in which it is possible to achieve a good approximation, but no algorithm can guarantee a performance of better than $O(1/n)$ according to  $\PbM_p$. 

\begin{example}\label{4}
Consider a setting with $n$ pairs of boxes, such that, for each pair $i$, one box has weight $w_{1,i} = 1$ deterministically, and the second has weight $w_{2,i} \sim \cU[1,2]$. All the deterministic boxes arrive first and the rest arrive later. The (downward-closed) feasibility constraint is that only elements from at most one pair can be selected. 
\begin{figure}[H]
    \centering
    \scalebox{0.6}{\tikzset{every picture/.style={line width=0.75pt}} 

\begin{tikzpicture}[x=0.75pt,y=0.75pt,yscale=-1,xscale=1]

\draw   (20,89.6) -- (32.6,77) -- (117,77) -- (117,106.4) -- (104.4,119) -- (20,119) -- cycle ; \draw   (117,77) -- (104.4,89.6) -- (20,89.6) ; \draw   (104.4,89.6) -- (104.4,119) ;
\draw   (104.4,89.6) -- (117,77) -- (201.4,77) -- (201.4,106.4) -- (188.8,119) -- (104.4,119) -- cycle ; \draw   (201.4,77) -- (188.8,89.6) -- (104.4,89.6) ; \draw   (188.8,89.6) -- (188.8,119) ;
\draw   (448,86.6) -- (460.6,74) -- (545,74) -- (545,103.4) -- (532.4,116) -- (448,116) -- cycle ; \draw   (545,74) -- (532.4,86.6) -- (448,86.6) ; \draw   (532.4,86.6) -- (532.4,116) ;
\draw   (532.4,86.6) -- (545,74) -- (629.4,74) -- (629.4,103.4) -- (616.8,116) -- (532.4,116) -- cycle ; \draw   (629.4,74) -- (616.8,86.6) -- (532.4,86.6) ; \draw   (616.8,86.6) -- (616.8,116) ;

\draw (289,89.4) node [anchor=north west][inner sep=0.75pt]    {$\dotsc \dotsc \dotsc $};
\draw (53,94.4) node [anchor=north west][inner sep=0.75pt]  [font=\Large]  {$D_{1}$};
\draw (139,95.4) node [anchor=north west][inner sep=0.75pt]  [font=\Large]  {$D_{2}$};
\draw (481,91.4) node [anchor=north west][inner sep=0.75pt]  [font=\Large]  {$D_{1}$};
\draw (567,92.4) node [anchor=north west][inner sep=0.75pt]  [font=\Large]  {$D_{2}$};

\end{tikzpicture}}
    \caption{$n$ pairs of boxes: for each pair, $w_{1,i} = 1$, $w_{2,i} \sim D_2 = \cU[1,2]$.}
    \label{fig:ex4}
\end{figure}
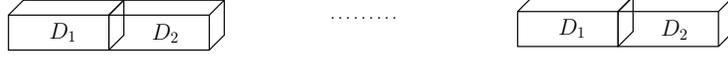
According to the measure of $\PbM_p$, the optimal set will always be a pair of two boxes, and any algorithm will have to guess what the pair exactly is in order to select each element of the optimal set, which happens with probability at most $1/n$. After the arrival of the first $n$ deterministic boxes, let us consider an element the algorithm has selected with probability at most $1/n$: this element will belong to the optimal set with probability $1/n$. Hence, given that $e \in \OPT(\bw)$ with probability $1/n$, the algorithm will have selected $e$ with probability at most $1/n$, i.e. $\PbM_p(\cF) \leq 1/n$. Since the trivial algorithm that selects a random pair always achieves a $\PbM_p(\cF) = 1/n$ and, as shown, no other algorithm can have better performance, this measure becomes uninformative of the algorithm's quality in combinatorial settings.

It is easy to see that both the $\RoE$ and $\EoR$ are both at least a constant  for this specific $\cF$: let us consider the single-choice algorithm that tries to catch box of maximum weight out of the $2n$ boxes, and then selects the pair corresponding to this box. This algorithm is essentially treating the instance in \Cref{fig:ex4} as if $2n$ boxes arrived singularly. As discussed in \Cref{app:single}, an algorithm aiming to catch the box with maximum weight achieves an $\EoR$ of $1/e$. The algorithm then selects the pair where this box belongs. This means that, with probability $1/e$, any other box can have a weight at most that of the maximum, i.e., the total weight in any other pair can be at most twice as large as the weight the selected pair. Thus, $\EoR(\cF) \geq 1/2e$. Furthermore, by what argued before, we have that, for all $\bw$, the optimal pair weight $f(\bw) \leq 2 \tilde f(\bw)$, where $\tilde f(\bw)$ is the maximum weight out of the $2n$ boxes. Finally, this same algorithm satisfies
\begin{align*}
    \E[a(\bw)] \geq \E[a(\bw) \mid a(\bw) = \tilde f(\bw)] \cdot \Pr[a(\bw) = \tilde f(\bw)] \geq 1/e \cdot \E[\tilde f(\bw)] \geq 1/2e \cdot \E[f(\bw)].
\end{align*}
Hence, $\RoE(\cF) \geq 1/2e$.
\end{example}

\section{Incomparability of the \RoE-\EoR{} framework against Classic Risk Aversion}\label{app:risk}

In this section, we compare risk-neutral algorithms maximizing $\RoE$ or $\EoR$ against risk-averse algorithms maximizing expected utility. The classic definition of risk aversion posits that the expected utility of a risk-averse algorithm is concave in the reward it gets. Namely, if $a(\bw)$ is the reward of the algorithm, then $\E[\text{util}(a(\bw))]$ is concave. 

We show via an example that a risk-neutral algorithm maximizing $\RoE$ or $\EoR$ may yield very poor risk-averse expected utility (in the classic sense of risk aversion). Conversely, a risk-averse algorithm may achieve extremely low $\RoE$ and $\EoR$. Thus, the two risk-aversion frameworks (bi-criteria vs. classic) are incomparable in general.

\begin{example}\label{ex:risk}
    Consider a setting with three boxes. The first box's weight $w_1$  is deterministically $1$, the second is $w_2$  which is $0$ with probability $1-\sqrt{\varepsilon}$ and $\frac{1}{\varepsilon}$ with probability $\sqrt{\varepsilon}$, and the third $w_3$ is $0$ with probability $1-\varepsilon$ and $\frac{2}{\varepsilon^2}$ with probability $\varepsilon$, for $\varepsilon \in (0, 1]$. The feasibility constraint is to select at most one box.
    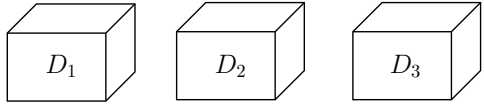
\begin{figure}[H]
        \centering
        \scalebox{0.7}{\tikzset{every picture/.style={line width=0.75pt}} 

\begin{tikzpicture}[x=0.75pt,y=0.75pt,yscale=-1,xscale=1]

\draw   (271,160) -- (292,139) -- (363,139) -- (363,188) -- (342,209) -- (271,209) -- cycle ; \draw   (363,139) -- (342,160) -- (271,160) ; \draw   (342,160) -- (342,209) ;
\draw   (149,161) -- (170,140) -- (241,140) -- (241,189) -- (220,210) -- (149,210) -- cycle ; \draw   (241,140) -- (220,161) -- (149,161) ; \draw   (220,161) -- (220,210) ;
\draw   (398,160) -- (419,139) -- (490,139) -- (490,188) -- (469,209) -- (398,209) -- cycle ; \draw   (490,139) -- (469,160) -- (398,160) ; \draw   (469,160) -- (469,209) ;

\draw (175,175.4) node [anchor=north west][inner sep=0.75pt]    [font=\Large] {$D_{1}$};
\draw (297,175.4) node [anchor=north west][inner sep=0.75pt]    [font=\Large] {$D_{2}$};
\draw (423,175.4) node [anchor=north west][inner sep=0.75pt]    [font=\Large] {$D_{3}$};

\end{tikzpicture}}
        \caption{Three boxes: $w_1 = 1$, $w_2 \sim D_2 = \begin{cases} 0, \text{ w.p. } 1 - \sqrt{\varepsilon} \\ \frac{1}{\varepsilon}, \text{ w.p. } \sqrt{\varepsilon} \end{cases}$, $w_3 \sim D_3 = \begin{cases} 0, \text{ w.p. } 1 - \varepsilon \\ \frac{2}{\varepsilon^2}, \text{ w.p. } \varepsilon \end{cases}$.}
        \label{fig:ex5}
    \end{figure}
    Moreover, suppose that the decision-maker is risk-averse with non-linear utility function $\text{util}(v) = \min\left(v, \frac{2}{\varepsilon}\right)$, where we denote by $v$ the value the algorithm gets, depending on whether it tries to maximize $\RoE$, $\EoR$ or $\E[\text{util}]$. 
    
    In this case, the value the algorithm gets by maximizing the $\RoE$ is $2$, since it would select the last box. Similarly, the value the algorithm obtains by maximizing the $\EoR$ is $1$, since it would select the first box. On the other hand, the algorithm that always selects the second box,  achieves $\E[\text{util}]$ of $\frac{1}{\sqrt{\varepsilon}}$, and therefore both good  $\EoR$ and $\RoE$ algorithms do not necessarily guarantee a good expected risk-averse utility.  

    On the contrary, the algorithm that selects the second box (which achieves a good  $\E[\text{util}]$) does not guarantee a good  $\EoR$ or $\RoE$  since for this algorithm
    \begin{align*}
        \RoE = \frac{\frac{1}{\sqrt{\varepsilon}}}{(1-\varepsilon)(1-\sqrt{\varepsilon}) + \frac{1-\varepsilon}{\sqrt{\varepsilon}} + \frac{2}{\varepsilon}} \leq \frac{\sqrt{\varepsilon}}{2}, \quad \mbox{and} \quad \EoR = (1-\varepsilon)\sqrt{\varepsilon} + \frac{\varepsilon^2\sqrt{\varepsilon}}{2} \leq \sqrt{\varepsilon}.
    \end{align*}
    These obtained values are far smaller compared to the simple guarantees of $\RoE \geq \frac{1}{2}$ and $\EoR \geq \frac{1}{e}$ for single choice settings.

\end{example}

\end{document}